\newcommand{\rtc}{\rsf \times  \csf }
\newcommand{\ctr}{\csf \times \rsf }
\newtheorem{theorem}{Theorem}
\newtheorem{remark}{Remark}
\newcommand{\comment}[1]{}
\def\endthebibliography{%
	\def\@noitemerr{\@latex@warning{Empty `thebibliography' environment}}%
	\endlist
}
\begin{document}

\title{
%
%
Straggler Mitigation through \\
Unequal Error Protection for \\
Distributed Approximate Matrix Multiplication
}

\author{Busra~Tegin,~\IEEEmembership{Student Member,~IEEE,}
	Eduin~E.~Hernandez,
	Stefano~Rini,
	and~Tolga~M.~Duman,~\IEEEmembership{Fellow,~IEEE}
	\thanks{Part of the material in this paper is presented in
the 2021 IEEE International Conference on Communications (ICC), Montreal,
Canada, June 2021 \cite{tegin2021straggler}. }
	\thanks{B. Tegin and T. M. Duman are with the Department
		of Electrical and Electronics Engineering, Bilkent University, 06800 Ankara,
		Turkey (e-mail: \{btegin, duman\}@ee.bilkent.edu.tr)
	}
	\thanks{E. E. Hernandez and S. Rini are with the Department
		of Electrical and Computer Engineering, National Yang Ming Chiao Tung University, 30010 Hsinchu,
		Taiwan (e-mail: \{eduin.ee08, stefano.rini\}@nycu.edu.tw)
}}

\maketitle

\begin{abstract}
Large-scale machine learning and data mining methods routinely distribute computations across multiple agents to parallelize processing. The time required for the computations at the agents is affected by the availability of local resources and/or poor channel conditions giving rise to the “straggler problem.” As a remedy to this problem, 
we employ Unequal Error Protection (UEP) codes to obtain an approximation of the matrix product in the distributed computation setting to provide higher protection for the blocks with higher effect on the final result. 
We characterize the performance of the proposed approach from a theoretical perspective by bounding the expected reconstruction error for matrices with uncorrelated entries. We also apply the proposed coding strategy to the computation of the back-propagation step in the training of a Deep Neural Network (DNN) for an image classification task in the evaluation of the gradients. Our numerical experiments show that it is indeed possible to obtain significant improvements in the overall time required to achieve the DNN training convergence by producing approximation of matrix products using UEP codes in the presence of stragglers. 

\end{abstract}

\begin{IEEEkeywords}
Distributed computation, approximate matrix multiplication, stragglers, unequal error protection. 
\end{IEEEkeywords}

\section{Introduction}


%

\IEEEPARstart{D}{istributed} learning 
%
%
is a fundamental approach  to the training  of machine learning models as it allows for the parallel computation of model updates. 
Parallelizing computation enhances robustness, reliability, and allows for a drastic reduction in computational and memory resources requirements at the learner.
%
Distributed computation is often supported by a dedicated infrastructure comprised of   computing clusters with heterogeneous capabilities.
The widespread reliance on distributed computation clusters presents several opportunities over traditional computing paradigms, but also offers a new set of challenges.
Among the most well-recognized issues is that of the stochasticity in the time required for the computation. 
This gives rise to the phenomenon of ``stragglers'', that is, agents with large response times which delay computation.  
Another important reason for having stragglers is due to the wireless communication characteristics where the workers observe different channel conditions, resulting in delays for the ones with the poor quality links. As a result, it may not be possible to transmit all the local computations at the same transmission rate. Therefore, the central server will receive some of the local computations later; hence such workers at wireless edge computing scenarios are stragglers.
As a remedy to stragglers, channel coding can be applied to reduce the delays in  distributed computation \cite{reisizadeh2019coded}.

In this paper, we propose a novel scheme for distributed computation with stragglers which makes use of the  variations in the magnitude of the matrix entries which naturally occur in many applications, such as gradient computation for back-propagation in Deep Neural Network (DNN) training.
%
We first identify the matrix sub-products which are expected to have the largest norms and use Unequal Error Protection (UEP) coding to provide resiliency against stragglers. 
The proposed solution offers an improved resilience by providing an improved  approximate reconstruction of the matrix product by a given deadline.
%
%
%
%

\subsection{Literature Review}
As matrix multiplication is a fundamental algebraic operation, distributed approximate matrix multiplication has been investigated in many contexts. 
In the big-data paradigm, computation and storage are distributed, hence computer processing architectures can be devised for efficiently performing this operation \cite{choi1994pumma,van1997summa}.
In a cloud-computing setting, distributed matrix computation is investigated in \cite{gupta2018oversketch,kim2019mpec}.
DNN training through back-propagation  involves multiplication of large matrices, 
for which distributed matrix computation is studied in  \cite{plancher2019application,son2018distributed}.
More recently, the problem of distributed matrix multiplication in the presence of stragglers has been considered. 
%
%
%
Coding for matrix multiplication can be applied to mitigate the effect of stragglers \cite{lee2017speeding}.
Since its inception in \cite{lee2017speeding}, this line of research received significant attention in the literature.
In \cite{wang2015using}, the authors use  the theory of extreme order statistics to analyze how task replication reduces latency. 
In \cite{dutta2016short}, the authors  introduce redundant computations in a coding theory inspired fashion for computing linear transforms of long vectors. 
Product codes for distributed matrix multiplication are studied in \cite{baharav2018straggler}.
A new class of codes, called polynomial codes, is proposed in \cite{yu2017polynomial}, and their optimality is argued for the straggler problem.

While the above literature focuses on minimizing the time for completing a computation task, one can also consider approximate computation.
Along these lines, in \cite{gupta2018oversketch}, the authors propose OverSketch, an algorithm that uses matrix sketching to approximate matrix multiplication.
%
%
Further research considers the intersection of distributed matrix computation and other relevant aspects of computation.
%
%
For instance, the authors of \cite{buyukates2020timely} consider the distributed matrix multiplication problem when the usefulness of the computation outcome is evaluated through an age-of-information paradigm \cite{kosta2017age}. 
%
%
%
%

%

\subsection{Contribution}

In this paper, we investigate the trade-off between accuracy and delay in distributed approximate matrix multiplication with stragglers. %
%
%
Since for typical machine learning problems, only approximate matrix multiplication results are sufficient, we consider a distributed matrix multiplication scheme in which the sub-blocks of the matrices being multiplied are encoded using UEP codes and distributed across different workers.
Due to, for instance, wireless channel effects, the workers respond at random completion times, with the results of the products of the coded sub-blocks.
The parameter server (PS) chooses the protection level of each matrix sub-block according to its norm so that the sub-products with the largest contribution suffer the least from the effects of stragglers. 
%
Our main goal is to produce an approximation of the product of two matrices as quickly as possible; with a more and more accurate approximation with more and more workers responding, i.e., producing a progressively improving matrix approximation in time, exploiting the UEP code constraints.

%
%
%
%
%

Our main contribution is the proposal of employing UEP codes to improve the quality of the approximation of matrix multiplications by exploiting the variations in the matrix entries' magnitudes. 
In particular, we leverage the construction of UEP codes described in \cite{vukobratovic2012unequal} through Random Linear Codes (RLC) to offer more protection  to the  sub-products with larger norms (as induced by the choice of loss) and reduce the effect of the randomness in the service time.
Specifically, we consider two schemes: Non-Overlapping Windows (NOW) and Expanding Window (EW) RLC codes for UEP, and analyze the performance of the proposed approximate matrix multiplication schemes. 
Different from the existing literature, we consider two different partitioning schemes for the matrices to be multiplied: (i) row-times-column block products, and (ii) column-times-row block products which are encoded and distributed among a set of workers which can perform sub-matrix multiplications.  
To illustrate the importance of our proposed strategy for distributed machine learning algorithms, we construct a DNN training with CIFAR-10 and MNIST datasets, which are extensively used datasets when evaluating the performance of machine learning applications, in a scenario where multiplications in the  back-propagation step are distributed among workers using the NOW-UEP and EW-UEP codes.
%

%
%
%
%

%
To showcase our results, the performance of our approach for this scenario is presented in Fig. \ref{fig:cifar10_big} where the  training performance  attainable through our algorithm for the CIFAR10 image classification database between $30$ and $120$ epochs are depicted. 
We let the response time of the servers be exponentially distributed with a mean  inversely proportional to the number of sub-block multiplications, thus accounting for the larger number of tasks when employing coding.
Three reference curves  in the plot are the red curve, corresponding to the case with no stragglers (the response time being deterministic), the blue curve, corresponding the performance with uncoded transmission, and the purple curve for which computations are simply replicated. 
The performance attainable through UEP codes for the approximate computation of the weight updates are depicted as green and yellow lines. The results clearly show that UEP codes provide a higher model accuracy in the presence of stragglers. Further analysis and interpretation are provided in Sec. \ref{subsec: DNN performance}.
%
%

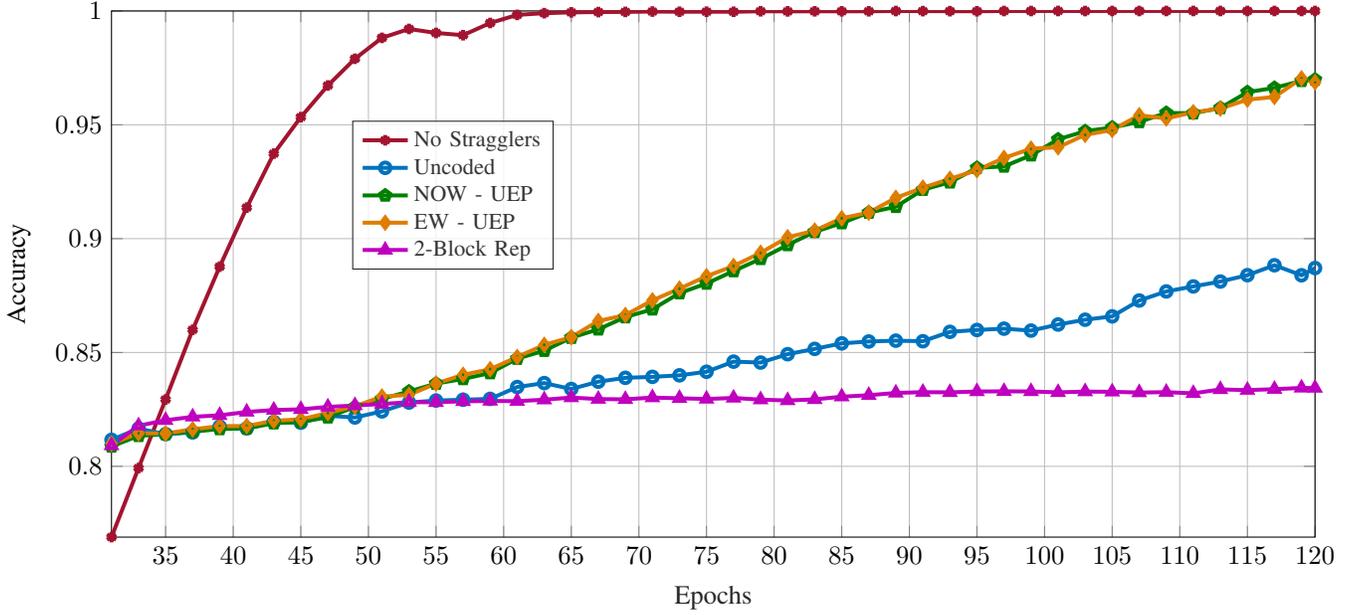
\begin{figure*}[!htbp] 
 	\centering
	\begin{tikzpicture}
\definecolor{mycolor1}{rgb}{0.63529,0.07843,0.18431}%
\definecolor{mycolor2}{rgb}{0.00000,0.44706,0.74118}%
\definecolor{mycolor3}{rgb}{0.00000,0.49804,0.00000}%
\definecolor{mycolor4}{rgb}{0.87059,0.49020,0.00000}%
\definecolor{mycolor5}{rgb}{0.00000,0.44700,0.74100}%
\definecolor{mycolor6}{rgb}{0.74902,0.00000,0.74902}%

\begin{axis}[%
width= 16cm,
height=7cm,
scale only axis,
xmin=31,
xmax=120,
xlabel style={font=\color{white!15!black}},
xlabel={Epochs},
ymin=0.768906,
ymax=1,
ylabel={Accuracy},
axis background/.style={fill=white},
xmajorgrids,
ymajorgrids,
legend style={legend cell align=left, align=left, draw=white!15!black, nodes={scale=0.85, transform shape}, at={(0.2,0.65)}, anchor=west, fill opacity=0.8}
]

\addplot [color=mycolor1, line width=1.5pt, mark=asterisk, mark options={solid, mycolor1}]
  table[row sep=crcr]{%
31	0.768906	\\
33	0.799241	\\
35	0.82935	\\
37	0.859945	\\
39	0.887684	\\
41	0.913637	\\
43	0.937325	\\
45	0.95337	\\
47	0.967275	\\
49	0.979013	\\
51	0.988186	\\
53	0.992107	\\
55	0.990352	\\
57	0.989362	\\
59	0.994718	\\
61	0.998279	\\
63	0.999025	\\
65	0.99938	\\
67	0.99952	\\
69	0.999595	\\
71	0.999745	\\
73	0.999655	\\
75	0.999675	\\
77	0.99965	\\
79	0.99987	\\
81	0.99982	\\
83	0.999855	\\
85	0.99985	\\
87	0.999865	\\
89	0.99998	\\
91	0.99993	\\
93	0.999915	\\
95	0.999865	\\
97	0.999985	\\
99	0.999985	\\
101	1	\\
103	0.999975	\\
105	0.999995	\\
107	1	\\
109	1	\\
111	1	\\
113	1	\\
115	1	\\
117	1	\\
119	1	\\
120	1	\\
};
\addlegendentry{No Stragglers}

\addplot [color=mycolor2, line width=1.5pt, mark=o, mark options={solid, mycolor2}]
  table[row sep=crcr]{%
31	0.811625	\\
33	0.816561	\\
35	0.814126	\\
37	0.814891	\\
39	0.817452	\\
41	0.816676	\\
43	0.819497	\\
45	0.819172	\\
47	0.822213	\\
49	0.821458	\\
51	0.824129	\\
53	0.82801	\\
55	0.82889	\\
57	0.82925	\\
59	0.82962	\\
61	0.834827	\\
63	0.836553	\\
65	0.833967	\\
67	0.837183	\\
69	0.838863	\\
71	0.839294	\\
73	0.839964	\\
75	0.841534	\\
77	0.845976	\\
79	0.845571	\\
81	0.849247	\\
83	0.851578	\\
85	0.853998	\\
87	0.854819	\\
89	0.855184	\\
91	0.854944	\\
93	0.85906	\\
95	0.85988	\\
97	0.860455	\\
99	0.85961	\\
101	0.862276	\\
103	0.864387	\\
105	0.865857	\\
107	0.872829	\\
109	0.876811	\\
111	0.879011	\\
113	0.881157	\\
115	0.883953	\\
117	0.888314	\\
119	0.883918	\\
120	0.887084	\\
};
\addlegendentry{Uncoded}

\addplot [color=mycolor3, line width=1.5pt, mark=pentagon, mark options={solid, mycolor3}]
  table[row sep=crcr]{%
31	0.808709	\\
33	0.81339	\\
35	0.814141	\\
37	0.815181	\\
39	0.816391	\\
41	0.816601	\\
43	0.818947	\\
45	0.819517	\\
47	0.821538	\\
49	0.826074	\\
51	0.82964	\\
53	0.832857	\\
55	0.836348	\\
57	0.838378	\\
59	0.840959	\\
61	0.847301	\\
63	0.850782	\\
65	0.856504	\\
67	0.86018	\\
69	0.865532	\\
71	0.868968	\\
73	0.87607	\\
75	0.880297	\\
77	0.885758	\\
79	0.891125	\\
81	0.897222	\\
83	0.902979	\\
85	0.90683	\\
87	0.911447	\\
89	0.913997	\\
91	0.921485	\\
93	0.924751	\\
95	0.931103	\\
97	0.931713	\\
99	0.93663	\\
101	0.943602	\\
103	0.947198	\\
105	0.948824	\\
107	0.951269	\\
109	0.955221	\\
111	0.955071	\\
113	0.957406	\\
115	0.964339	\\
117	0.966219	\\
119	0.969315	\\
120	0.96987	\\
};
\addlegendentry{NOW - UEP}

\addplot [color=mycolor4, line width=1.5pt, mark=diamond, mark options={solid, mycolor4}]
  table[row sep=crcr]{%
31	0.809964	\\
33	0.814521	\\
35	0.814446	\\
37	0.816256	\\
39	0.817662	\\
41	0.817707	\\
43	0.820002	\\
45	0.820653	\\
47	0.823418	\\
49	0.826344	\\
51	0.830621	\\
53	0.831391	\\
55	0.836338	\\
57	0.840199	\\
59	0.84264	\\
61	0.848001	\\
63	0.853343	\\
65	0.856729	\\
67	0.863826	\\
69	0.866432	\\
71	0.872889	\\
73	0.878016	\\
75	0.883508	\\
77	0.887979	\\
79	0.893776	\\
81	0.900603	\\
83	0.903534	\\
85	0.908961	\\
87	0.911452	\\
89	0.917904	\\
91	0.922375	\\
93	0.926261	\\
95	0.930063	\\
97	0.935434	\\
99	0.939516	\\
101	0.940136	\\
103	0.945588	\\
105	0.947738	\\
107	0.95418	\\
109	0.952795	\\
111	0.955501	\\
113	0.957126	\\
115	0.961048	\\
117	0.962258	\\
119	0.970386	\\
120	0.968715	\\
};
\addlegendentry{EW - UEP}

\addplot [color=mycolor6, line width=1.5pt, mark=triangle, mark options={solid, mycolor6}]
  table[row sep=crcr]{%
31	0.809144	\\
33	0.817832	\\
35	0.820232	\\
37	0.821743	\\
39	0.822403	\\
41	0.823879	\\
43	0.824719	\\
45	0.825034	\\
47	0.826104	\\
49	0.8267	\\
51	0.82738	\\
53	0.82816	\\
55	0.82814	\\
57	0.82855	\\
59	0.828715	\\
61	0.828625	\\
63	0.82926	\\
65	0.830211	\\
67	0.82953	\\
69	0.829435	\\
71	0.830191	\\
73	0.829881	\\
75	0.829585	\\
77	0.830031	\\
79	0.829285	\\
81	0.828915	\\
83	0.8294	\\
85	0.830576	\\
87	0.831236	\\
89	0.832236	\\
91	0.832576	\\
93	0.832486	\\
95	0.832867	\\
97	0.832932	\\
99	0.832887	\\
101	0.832486	\\
103	0.832822	\\
105	0.832736	\\
107	0.832406	\\
109	0.832576	\\
111	0.832051	\\
113	0.833777	\\
115	0.833477	\\
117	0.833882	\\
119	0.834437	\\
120	0.834377	\\
};
\addlegendentry{2-Block Rep}

\end{axis}

\end{tikzpicture}%
	\caption{CIFAR-10 classification accuracy between epoch 30 and epoch 120 with $\lambda = 0.5, T_{max} = 1$.
Evaluation details are presented in Sec. \ref{subsec: DNN performance}.
	}
	\label{fig:cifar10_big}
\end{figure*}

\subsubsection*{Organization}
The paper is organized as follows. In Sec. \ref{sec:System model}, we formulate the distributed approximate matrix multiplication problem for both (i) row-times-column block products and (ii) column-times-row block products.
In Sec. \ref{sec:Relevant Results}, we go over some of the existing results in the literature for coded matrix computation and approximate matrix multiplication. 
In Sec. \ref{sec:Approximate Matrix Multiplication with UEP Codes},
we present our  proposed scheme  in which UEP codes are used to encode the matrix multiplication factors,
while a theoretical evaluation of the expected error is provided in Sec. \ref{sec:Theoretical Analysis}.
In Secs. \ref{sec:Numerical Examples} and \ref{sec:Back-propagation Matrices}, we provide
numerical examples using both synthetic data and an actual data from DNN training.
Finally, the paper is concluded in Sec. \ref{sec: conclusion}.

\subsubsection*{Notation} In the paper we adopt the following notation. Matrices are denoted with bold capital Roman letters, e.g., $\Av$, column vectors with bold lower-case Roman letters, e.g., $\mathbf{v}$. 
The Frobenius  norm of the matrix $\Av$ is shown as $\| \Av \|_F$. 
The set of integers $\{1,\ldots,N\} \subset \Nbb$ is denoted as $[N]$.
Given two matrices $\Av_1$ and $\Av_2$ with the same number of rows, we depict their column-wise concatenation as $\Av = [\Av_1 \: , \: \Av_2]$. Similarly, given $\Av_1$ and $\Av_2$ with the same number of columns, their row-wise concatenation is represented as $\Av = [\Av_1 \: ; \: \Av_2]$ which can also be equivalently expressed as $\Av = [\Av_1^{\intercal} \: , \: \Av_2^{\intercal}]^{\intercal}$.
Capital Roman letters are used for scalars. 
$\mathcal{N}(\mu, \sigma^2)$ indicates the Gaussian distribution with mean $\mu$ and variance $\sigma^2$.
Finally, the expectation is denoted as $\Ebb [\cdot]$, and $\onev(\cdot)$ is used for the indicator function.

{\bf Note well:}
In the following,  we will often not explicitly indicate the support of the independent variables indexing the various matrix sub-blocks.
We shall  use lower case Roman letters for such independent variables, i.e., $n$, and let the corresponding upper case Roman letter indicate the interval  $n \in [N]$, in other words
\ean{ \tag{1}
\sum_{n \in [N]} f_n \triangleq \sum_{n} f_n.
}


\section{System model}
\label{sec:System model}

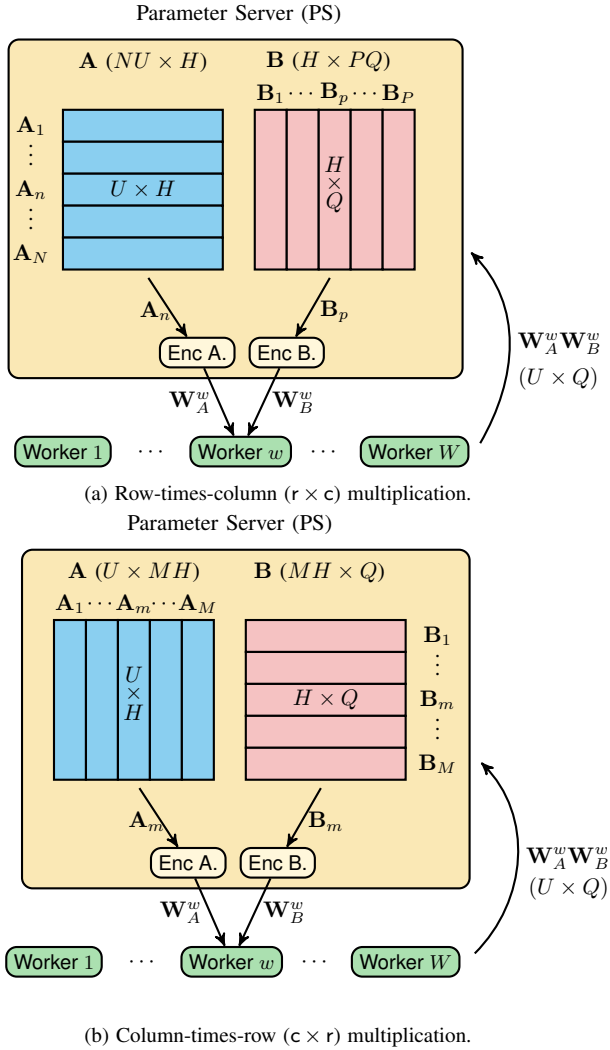
\begin{figure}

	\centering
	\begin{subfigure}[b]{0.4\textwidth}
	\begin{tikzpicture}[->,>=stealth',auto,node distance=3cm,
  thick,main node/.style={circle,draw,font=\sffamily\Large\bfseries},scale=0.85, every node/.style={scale=0.85}]
\definecolor{apricot}{rgb}{0.98, 0.81, 0.69}
\definecolor{antiquebrass}{rgb}{0.8, 0.58, 0.46}
\definecolor{arylideyellow}{rgb}{0.91, 0.84, 0.42}
\definecolor{bananamania}{rgb}{0.98, 0.91, 0.71}
\definecolor{babyblue}{rgb}{0.54, 0.81, 0.94}
\definecolor{babypink}{rgb}{0.96, 0.76, 0.76}
\definecolor{caribbeangreen}{rgb}{0.0, 0.8, 0.6}
\definecolor{celadon}{rgb}{0.67, 0.88, 0.69}
\definecolor{cornsilk}{rgb}{1.0, 0.97, 0.86}	
	
\tikzstyle{arrow} = [thick,->,>=stealth]

\node at (2.75, 2){Parameter Server (PS)};
\draw[rounded corners, fill=bananamania] (-0.85,1.6) rectangle (6.25,-3.7);
\node at (1.25, 1.25){$\mathbf{A}$ ($NU \times H$)};
\node at (4.15, 1.25){$\mathbf{B}$ ($H \times PQ$)};
\draw[fill=babyblue] (0,0) rectangle (2.5,0.5);
\node at (-0.5, 0.25){$\mathbf{A}_1$};
\node at (1.25,0.25){};
\draw[fill=babyblue] (0,-0.5) rectangle (2.5,0);
\node at (-0.5, -0.1){$\vdots$};
\node at (1.25,-0.25){};
\draw[fill=babyblue] (0,-1) rectangle (2.5,-0.5);
\node at (-0.5, -0.75){$\mathbf{A}_n$};
\node at (1.25,-0.75){$U \times H$};
\draw[fill=babyblue] (0,-1.5) rectangle (2.5,-1);
\node at (-0.5, -1.1){$\vdots$};
\node at (1.25,-1.25){};
\draw[fill=babyblue] (0,-2) rectangle (2.5,-1.5);
\node at (-0.5, -1.75){$\mathbf{A}_N$};
\node at (1.25,-1.75){};
\node (Ai) at (1.25,-2){};

\draw [fill=babypink](3,0.5) rectangle (3.5,-2);
\node at (3.25, 0.75){$\mathbf{B}_1$};

\draw [fill=babypink](3.5,0.5) rectangle (4,-2);
\node at (3.75, 0.75){$\cdots$};

\draw [fill=babypink](4,0.5) rectangle (4.5,-2);
\node at (4.25, 0.75){$\mathbf{B}_p$};
\node at (4.25,-0.35){$H$};
\node at (4.25,-0.65){$\times$};
\node at (4.25,-0.95){$Q$};

\draw [fill=babypink](4.5,0.5) rectangle (5,-2);
\node at (4.75, 0.75){$\cdots$};

\draw [fill=babypink](5,0.5) rectangle (5.5,-2);
\node at (5.25, 0.75){$\mathbf{B}_P$};
\node (B) at (6.25,-1.6){};
\node (Bj) at (4.25,-2){};
	
\node[draw,rectangle, rounded corners, fill=celadon, font=\small \sffamily] (W1) at (0,-4.85) {Worker $1$};
\node at (1.4, -4.85){$\cdots$};
\node[draw,rectangle, rounded corners, fill=celadon, font=\small \sffamily] (Ww) at (2.78,-4.85) {Worker $w$};
\node at (4.1, -4.85){$\cdots$};
\node[draw,rectangle, rounded corners, fill=celadon,font=\small \sffamily] (W1) at (5.5,-4.85) {Worker $W$};
\node (W) at (6.45,-4.85){};
\node (Wij) at (2.75,-3.25){};
\node[draw,rectangle, rounded corners, fill=cornsilk, font=\small \sffamily] (WA) at (2.1,-3.3) {Enc A.};
\node[draw,rectangle, rounded corners, fill=cornsilk, font=\small \sffamily] (WB) at (3.5,-3.3) {Enc B.};

\node (e_out) at (2.75,-3.5){};

\draw [->] (W) to [out = 60, in=-45] node[anchor=west] {$\mathbf{W}_A^w \mathbf{W}_B^w$} (B);
\node at (7.75,-3.7){$(U \times Q)$};

\draw [->] (Ai) to  (WA);
\node at (1.45,-2.65){$\mathbf{A}_n$};
\draw [->] (Bj) to  (WB);
\node at (4.25,-2.65){$\mathbf{B}_p$};
\draw [->] (WA) to node[anchor=east] {$\mathbf{W}_A^w$} (Ww);
\draw [->] (WB) to node[anchor=west] {$\mathbf{W}_B^w$} (Ww);

 \end{tikzpicture}
        \captionsetup{justification=centering}
        \caption{Row-times-column ($\rtc$) multiplication.}
		\label{subfig:systemmodel1}
	\end{subfigure}

    \centering
    \begin{subfigure}[b]{0.4\textwidth}
		\begin{tikzpicture}[->,>=stealth',auto,node distance=3cm,
  thick,main node/.style={circle,draw,font=\sffamily\Large\bfseries},scale=0.85, every node/.style={scale=0.85}]
\definecolor{apricot}{rgb}{0.98, 0.81, 0.69}
\definecolor{antiquebrass}{rgb}{0.8, 0.58, 0.46}
\definecolor{arylideyellow}{rgb}{0.91, 0.84, 0.42}
\definecolor{bananamania}{rgb}{0.98, 0.91, 0.71}
\definecolor{babypink}{rgb}{0.54, 0.81, 0.94}
\definecolor{babyblue}{rgb}{0.96, 0.76, 0.76}
\definecolor{caribbeangreen}{rgb}{0.0, 0.8, 0.6}
\definecolor{celadon}{rgb}{0.67, 0.88, 0.69}
\definecolor{cornsilk}{rgb}{1.0, 0.97, 0.86}	
	
\tikzstyle{arrow} = [thick,->,>=stealth]

\node at (2.75, 2){Parameter Server (PS)};
\draw[rounded corners, fill=bananamania] (-0.5,1.6) rectangle (6.5,-3.7);
\node at (1.25, 1.25){$\mathbf{A}$ ($U \times MH$)};
\node at (4.15, 1.25){$\mathbf{B}$ ($MH \times Q$)};
\draw[fill=babyblue] (3,0) rectangle (5.5,0.5);
\node at (6, 0.25){$\mathbf{B}_1$};
\node at (1.25,0.25){};
\draw[fill=babyblue] (3,-0.5) rectangle (5.5,0);
\node at (6, -0.1){$\vdots$};
\node at (1.25,-0.25){};
\draw[fill=babyblue] (3,-1) rectangle (5.5,-0.5);
\node at (6, -0.75){$\mathbf{B}_m$};
\node at (4.25,-0.75){$H \times Q$};
\draw[fill=babyblue] (3,-1.5) rectangle (5.5,-1);
\node at (6, -1.1){$\vdots$};
\node at (1.25,-1.25){};
\draw[fill=babyblue] (3,-2) rectangle (5.5,-1.5);
\node at (6, -1.75){$\mathbf{B}_M$};
\node at (1.25,-1.75){};
\node (Ai) at (1.25,-2){};

\draw [fill=babypink](0,0.5) rectangle (0.5,-2);
\node at (0.25, 0.75){$\mathbf{A}_1$};

\draw [fill=babypink](0.5,0.5) rectangle (1,-2);
\node at (0.75, 0.75){$\cdots$};

\draw [fill=babypink](1,0.5) rectangle (1.5,-2);
\node at (1.25, 0.75){$\mathbf{A}_m$};
\node at (1.25,-0.35){$U$};
\node at (1.25,-0.65){$\times$};
\node at (1.25,-0.95){$H$};

\draw [fill=babypink](1.5,0.5) rectangle (2,-2);
\node at (1.75, 0.75){$\cdots$};

\draw [fill=babypink](2,0.5) rectangle (2.5,-2);
\node at (2.25, 0.75){$\mathbf{A}_M$};
\node (B) at (6.55,-1.6){};
\node (Bj) at (4.25,-2){};
	
\node[draw,rectangle, rounded corners, fill=celadon, font=\small \sffamily] (W1) at (0,-4.85) {Worker $1$};
\node at (1.4, -4.85){$\cdots$};
\node[draw,rectangle, rounded corners, fill=celadon, font=\small \sffamily] (Ww) at (2.78,-4.85) {Worker $w$};
\node at (4.1, -4.85){$\cdots$};
\node[draw,rectangle, rounded corners, fill=celadon,font=\small \sffamily] (W1) at (5.5,-4.85) {Worker $W$};
\node (W) at (6.45,-4.85){};
\node (Wij) at (2.75,-3.25){};
\node[draw,rectangle, rounded corners, fill=cornsilk, font=\small \sffamily] (WA) at (2.1,-3.3) {Enc A.};
\node[draw,rectangle, rounded corners, fill=cornsilk, font=\small \sffamily] (WB) at (3.5,-3.3) {Enc B.};

\node (e_out) at (2.75,-3.5){};

\draw [->] (W) to [out = 45, in=-45] node[anchor=west] {$\mathbf{W}_A^w \mathbf{W}_B^w$} (B);
\node at (8.05,-3.7){$(U \times Q)$};

\draw [->] (Ai) to  (WA);
\node at (1.45,-2.65){$\mathbf{A}_m$};
\draw [->] (Bj) to  (WB);
\node at (4.25,-2.65){$\mathbf{B}_m$};
\draw [->] (WA) to node[anchor=east] {$\mathbf{W}_A^w$} (Ww);
\draw [->] (WB) to node[anchor=west] {$\mathbf{W}_B^w$} (Ww);

 \end{tikzpicture}
        \captionsetup{justification=centering}
        \caption{Column-times-row ($\ctr$)  multiplication.}
		\label{subfig:systemmodel2}
	\end{subfigure}
	
	\caption{System models with the $\rtc$ and $\ctr$ multiplication schemes.} \label{fig:systemmodel}
\end{figure}

We consider the scenarios in Fig. \ref{fig:systemmodel}
where a PS wishes to compute the matrix product  $\Cv=\Av \Bv$ by distributing various factors of the matrix  multiplications  among $W$ workers. 
Each worker receives two separate linear combinations of sub-matrices of $\Av$ and $\Bv$, computes their product, and returns it to the PS. 
%
%
The time required for the response of a computation is a random variable due to variations in the channel quality and/or server speed for different workers \cite{wang2019efficient}. Note that we distribute the same amount of computational load to all the workers. Thus, the response time is independent of the computational capacity of the workers for our system.
Due to transmission rate constraints of the wireless channel, we consider the delay due to stragglers’ channel conditions as the communication cost.
By a given deadline, the PS produces an approximation $\Chv$ of the matrix $\Cv$ by using sub-products from the workers received by the prescribed deadline.

\subsection{Distributed Matrix Computation Model}
\label{sec:Distributed Matrix Computation Model}

\begin{figure}[!htbp]
 	\centering
	
	\definecolor{apricot}{rgb}{0.98, 0.81, 0.69}
\definecolor{antiquebrass}{rgb}{0.8, 0.58, 0.46}
\definecolor{arylideyellow}{rgb}{0.91, 0.84, 0.42}
\definecolor{bananamania}{rgb}{0.98, 0.91, 0.71}
\definecolor{babypink}{rgb}{0.96, 0.76, 0.76}
\definecolor{babyblue}{rgb}{0.54, 0.81, 0.94}
\definecolor{caribbeangreen}{rgb}{0.0, 0.8, 0.6}
\definecolor{celadon}{rgb}{0.67, 0.88, 0.69}
\definecolor{cornsilk}{rgb}{1.0, 0.97, 0.86}
\definecolor{random}{rgb}{1.00,0.83,0.71}

\begin{tikzpicture}[scale=0.7, every node/.style={scale=1}]

\draw[fill=babyblue] (0,5) rectangle (3,3);
\draw[dashed] (0,4.5) -- (3,4.5);
\draw[dashed] (0,4) -- (3,4);
\draw[dashed] (0,3.5) -- (3, 3.5);

\node[scale=2] at (3.5,4.5){$\times$};

\draw[fill=babypink] (4,5) rectangle (6,2);
\draw[dashed] (4.5,2) -- (4.5,5);
\draw[dashed] (5.0,2) -- (5.0,5);
\draw[dashed] (5.5,2) -- (5.5,5);

\node[scale=2] at (7,4.5) {$=$};

\draw[fill=apricot] (8,5) rectangle (10,3);
\draw[dashed] (8,4.5) -- (10,4.5);
\draw[dashed] (8,4.0) -- (10,4.0);
\draw[dashed] (8,3.5) -- (10,3.5);

\draw[dashed] (8.5,3) -- (8.5,5);
\draw[dashed] (9,3) -- (9,5);
\draw[dashed] (9.5,3) -- (9.5,5);

\node at (1.5,2.5){$N \times M$};
\node at (5,1.5){$M \times P$};
\node at (9,2.5){$N \times P$};

\end{tikzpicture}
	\caption{Row-times-column multiplication ($\rtc$) in \eqref{eq:sub-blocks rXc}.}
	\label{fig:rows-times-cols}
	
	
	\centering
	\definecolor{apricot}{rgb}{0.98, 0.81, 0.69}
\definecolor{antiquebrass}{rgb}{0.8, 0.58, 0.46}
\definecolor{arylideyellow}{rgb}{0.91, 0.84, 0.42}
\definecolor{bananamania}{rgb}{0.98, 0.91, 0.71}
\definecolor{babypink}{rgb}{0.96, 0.76, 0.76}
\definecolor{babyblue}{rgb}{0.54, 0.81, 0.94}
\definecolor{caribbeangreen}{rgb}{0.0, 0.8, 0.6}
\definecolor{celadon}{rgb}{0.67, 0.88, 0.69}
\definecolor{cornsilk}{rgb}{1.0, 0.97, 0.86}

\begin{tikzpicture}[scale=0.47, every node/.style={scale=0.7}]
\draw[fill=babyblue] (0,2) rectangle (2,5);
\draw[dashed] (0.5,2) -- (0.5,5);
\draw[dashed] (1.0,2) -- (1.0,5);
\draw[dashed] (1.5,2) -- (1.5,5);

\node[scale=2] at (2.5,4.5){$\times$};

\draw[fill=babypink] (3,5) rectangle (6,3);
\draw[dashed] (3, 4.5) -- (6, 4.5);
\draw[dashed] (3, 4.0) -- (6, 4.0);
\draw[dashed] (3, 3.5) -- (6, 3.5);
\node[scale=2] at (7,4.5) {$=$};

\draw[fill=apricot, dashed] (9,2) rectangle (12,5);
\draw[fill=apricot, dashed] (8.5,1.5) rectangle (11.5,4.5);
\draw[fill=apricot, dashed] (8,1) rectangle (11,4);
\draw[fill=apricot, dashed] (7.5,0.5) rectangle (10.5,3.5);
\node[scale=2] at (6.5,2) {$\sum$};
\node at (6.5, 1.2) {$M$};

\node[scale=2] at (13,4.5) {$=$};
\draw[fill=apricot] (14, 2) rectangle (17.5, 5);

\node at (1.0,1.5){$N \times M$};
\node at (4.5,2.5){$M \times P$};
\node at (9,0){$N \times P \times M$};
\node at (16,1.5){$N \times P$};

\end{tikzpicture}
	\caption{Column-times-row multiplication ($\ctr$) in \eqref{eq:sub-blocks cXr}.}
	\label{fig:cols-times-rows}
\end{figure}

Let us next describe the problem setting in more detail. 
Consider the matrices $\Av$ and $\Bv$ with elements from a finite field $\mathbb{F}$.
The matrix $\Av$ is comprised of $N \times M$ sub-blocks of dimensions $U \times H$, thus resulting in the overall dimensions  $N U \times M H$. 
Similarly, $\Bv$ is comprised of $M \times P$ sub-blocks of  dimensions $H \times Q$ resulting in $M H \times P Q$. 
Accordingly, the matrix $\Cv$ has $N \times P$ sub-blocks of dimension $U \times Q$.  
Thus, $\Av \in \mathbb{F}^{N U \times M H}$, $\Bv \in \mathbb{F}^{M H \times P Q}$, and $\Cv \in \mathbb{F}^{NU \times P Q}$.

The aim of the PS is to produce  $\Chv$ as an approximate expression for the matrix  multiplication $\Cv=\Av \Bv$ with respect to the loss\footnote{In the following, we only consider the case of a Frobenius norm: the case of a more general loss is not discussed here for brevity.}
\ea{
	\Lcal (\Cv,\Chv) = \|\Cv-\Chv\|_F^2. \tag{2}
	\label{eq:loss}
}

To accomplish this, the PS  divides the matrix products into sub-products and distributes them across a set of workers. 
Specifically, following \cite{lee2017high}, we partition $\Av$ and $\Bv$ in two possible ways:
\begin{itemize}
	\item{\bf Row-times-Column ($\rtc$):}  that is, $M = 1$ and $H = MH$ such that  $H$
	has the the same number of columns as $A$ and number of rows as $B$.

	In this case, if  uncoded multiplication are distributed to the servers,  what is returned are $N P$ sub-blocks of sizes $U \times Q$ belonging to the $\Cv$ matrix, so that
	\begin{align}  \label{eq:sub-blocks rXc}
	\Av & = 
	[\Av_1 \:;\: \cdots  \Av_n \:;\: \cdots \Av_N], \nonumber \\
	\Bv & = [\Bv_1  \:,\: \cdots  \Bv_p   \:,\:  \cdots \Bv_P]. \tag{3}
	\end{align}
	We indicate this case with the notation $\rtc$.
	This partitioning is presented in Figs. \ref{subfig:systemmodel1} and \ref{fig:rows-times-cols}.
	
	\item{\bf Column-times-Row ($\ctr$):}  that is, $N = P = 1$, $U = NU$, $Q = PQ$  such that  $U$ has the same number of rows as $A$ 
	and $Q$ has the same number of columns as $B$.
	
	In this case, if  uncoded multiplication are distributed to the servers,  what is returned are $M$ matrices of sizes $U \times Q$, so that
	\begin{align} \label{eq:sub-blocks cXr}
	\Av & = [\Av_1 \:,\: \cdots  \Av_m \:,\: \cdots \Av_M],  \nonumber \\
	\Bv & = [\Bv_1  \:;\: \cdots  \Bv_m  \:;\: \cdots  \Bv_M]. \tag{4}
	\end{align}
	We indicate this case with the notation $\ctr$.
	This partitioning is presented in Figs. \ref{subfig:systemmodel2} and \ref{fig:cols-times-rows}.
\end{itemize}

\begin{table*}[t]
	\centering
	
	\caption{ 
    A summary of the quantities in Sec. \ref{sec:System model} (column II and III) and Sec. \ref{sec:Approximate Matrix Multiplication with UEP Codes} (column IV and V). 
    }
	\label{tab:my_label}
	{
		\begin{tabular}{|c|c|c|c|c|}
			\hline
			Multiplication Case & Matrix      & Size           & Constant & Value \\ \hline
			\multirow{3}{*}{General} & $\mathbf A$ & $NU \times MH$  & \# of workers     & $W$     \\
			& $\mathbf B$ & $MH \times PQ$  &  \# of importance levels ($\mathbf{A}/ \mathbf{B}$)     & $S$     \\
			& $\mathbf C$ & $NU \times PQ$  &  \# of importance levels ($\mathbf C$)     & $L$     \\ \hline
			\multirow{3}{*}{$\rtc$}& $\Av_n$ & $U \times H$ & \# of  row blocks ($\Av$)  & $N$  \\
			& $\Bv_p$ & $H \times Q$  & \# of  column  blocks ($\Bv$) & $P$ \\
			& $\Cv_{np}$ & $U \times Q$   &  Response time scaling & $\Omega$ \\ \hline
			\multirow{3}{*}{$\ctr$} & $\Av_m$ & $U \times H$   &  \# of column blocks ($\Av$)  & $M$ \\ 
			& $\Bv_m$ & $H \times Q$   &  \# of row blocks ($\Bv$) & $M$  \\
			& $\Cv_{m}$ & $U \times Q$  & Deadline & $T_{\max}$ \\ \hline
		\end{tabular}
	}
\end{table*}

%
%


In general, not all the sub-blocks have the same impact on the final matrix multiplication result, as some sub-blocks may have larger Frobenius norms than others.
%
%
This motivates the use of codes to efficiently trade-off the matrix approximation with the computation delay.
In other words, 
codes can be employed to better protect the more impactful sub-products when distributing the computation to the workers, so that a more precise approximation is produced in a shorter time. 
%
%
For this reason, we consider the  coding problem in which the PS sends the matrices $\Wv_{A}^w$ and $\Wv_{B}^w$  obtained as 
\ea{  \label{eq:encode}
	\Wv_{A}^w & = \lcb \p{f_{ {\rm enc} - A} \lb \Av_1, \cdots , \Av_N \rb &  
	\rtc \\
	f_{ {\rm enc} - A} \lb \Av_1, \cdots , \Av_M \rb & 
	\ctr,\\
	} \tag{5} \rnone\\ 
	\vspace{0.3cm}
	\Wv_{B}^w & =  \lcb \p{f_{{\rm enc} - B} \lb \Bv_1, \cdots , \Bv_P \rb & 
	\rtc \\
	f_{{\rm enc} - B} \lb \Bv_1, \cdots , \Bv_M \rb &  
	\ctr,
	} \tag{6} \label{eq:encode2}\rnone
}
 to each worker $w$ and sets a time deadline $T_{\max}$ by which it expects the matrix products $\{\Wv_{A}^w \Wv_{B}^w\}_{w \in [W]}$ to be returned where $f_{{\rm enc} - A}$ and $f_{{\rm enc} - B} $ are the encoding functions for the sub-matrices of $\Av$ and $\Bv$, respectively.
At time $T_{\max}$, the PS produces the  approximation of the matrix product $\Cv=\Av \Bv$ as
\ea{
	\Chv = f_{ {\rm dec}-C} \lb \Wcal(T_{\max})\rb, \tag{7}
	\label{eq:Ch}
}
where $\Wcal(T_{\max}) \subseteq \{\Wv_{A}^w \Wv_{B}^w\}_{w \in [W]}$ is the set of matrix products received up to time $T_{\max}$ where $f_{{\rm dec} - C}$ is the decoding function for the final product estimate $\Chv$.
Using  $\Chv$ in \eqref{eq:Ch}, the loss in \eqref{eq:loss} can be evaluated: let us denote it as $\Lcal(T_{\max})$.

Note that the set $\Wcal(T_{\max})$ is a random, which follows from the randomness of the response times.
%
%
More precisely, we assume that the response time of the workers are random variables denoted by $T_w$ which are identical and independently distributed (i.i.d.) with a cumulative distribution function (CDF) 
 \ea{ 
 F_{T_w}(t)=F(t), \quad w \in [W], \tag{8}
 \label{eq:waiting}
 }
 for some CDF $F(\cdot)$. 
%
%

\smallskip

The problem we consider in the following is to design the functions in \eqref{eq:encode}, \eqref{eq:encode2} and \eqref{eq:Ch} 
such that the loss in \eqref{eq:loss}, averaged over the randomness in $\Wcal(T_{\max})$, is minimized over some dataset of matrix multiplications $\Dcal(\{ \Av,\Bv\})$. 
Let us define this quantity as a function of the waiting time $T_{\max}$ as
\ea{ 
\Lsf(T_{\max})=\min \quad \Ebb \lsb \sum_{\Dcal}  \Lcal (\Cv,\Chv) \rsb , \tag{9}
}
where (i)  the minimization is over  $f_{ {\rm enc} - A}$,$f_{ {\rm enc} - B}$ and $f_{ {\rm dec} - C}$, (ii) the expectation is over $\Wcal(T_{\max})$, (iii) the summation is over all matrices in the database $\Dcal$, and all the matrices in the database are equally likely.

\begin{remark}{\bf Comparison across models.}
\label{rem:Comparison across models}
In the following, we will be interested in comparing the performance when the number of servers changes.
For this comparison, we will consider the scaling of the response times as in \eqref{eq:waiting} as $F(\Omega t)$, where $\Omega$ is the number of matrix sub-products divided by the number of workers.
%
%
%
\end{remark}

\begin{remark}{\bf Matrix partitioning paradigms.}
A representation of the $\rtc$ paradigm is provided in  Fig. \ref{subfig:systemmodel1}:  in Fig. \ref{fig:rows-times-cols} we represent the resulting block-matrix structure of $\Cv$. We observe that each sub-product in $\Wcal(T_{\max})$ contributes one sub-block in $\Cv$. 
The $\ctr$ paradigm is represented in Fig. \ref{subfig:systemmodel2}: in this paradigm  $\Cv$  is obtained as a sum of rank-one terms, or outer product, as shown in Fig. \ref{fig:cols-times-rows}. 
In this case, each sub-product in $\Wcal(T_{\max})$ contributes to one of such rank-one terms.

\end{remark}

\begin{remark}

Let us elaborate further on the notion used in \eqref{eq:sub-blocks rXc} and \eqref{eq:sub-blocks cXr}.
%
Stated more precisely, the sub-block matrix structure of $\Av,\Bv$ and $\Cv$ is as follows: 
%
%
The matrix $\Av$ is comprised of $N  M$  sub-matrices $\Av_{nm}$ with $n \in [N]$, $m \in [M]$, and $\Av_{nm}  \in \mathbb{F}^{U  \times H}$.
%
%
Similarly, the matrix $\Bv$ is comprised of $MP$ sub-matrices $\Bv_{mp}$ with $m \in [M]$, $p \in [P]$, and $\Bv_{mp}  \in \mathbb{F}^{H\times Q}$.
Accordingly, the matrix $\Cv$ is composed of $NP$ sub-matrices $\Cv_{np}$  for $n \in [N]$, $p \in [P]$, and with  $\Cv_{np} \in \mathbb{F}^{U  \times Q}$.
Note that here $NP$  block-matrix multiplications are needed to produce $\Cv$. 

\noindent
$\bullet$ \underline{$\rtc$ scenario:}
For the row-times-column case in \eqref {eq:sub-blocks rXc}, we have that $\Av_{n}  \in \mathbb{F}^{U  \times H}$ and $\Bv_{p}  \in \mathbb{F}^{H\times Q}$ for $ n \in [N]$, and $p\in [P]$. $\Cv_{np} \in \mathbb{F}^{U  \times Q}$ and $NP$ such block-matrix multiplications are needed to produce $\Cv$. 
%

\noindent
$\bullet$ \underline{$\ctr$ scenario:}
For the column-times-row case in \eqref{eq:sub-blocks cXr}, we have $\Av_{m}  \in \mathbb{F}^{U  \times H}$ and $\Bv_{m}  \in \mathbb{F}^{H\times Q}$ for $m\in [M]$. $\Cv_{m} \in \mathbb{F}^{U  \times Q}$ and $M$ such block matrix multiplications and summations are needed to produce $\Cv$.
%
Note that the notation $\rtc$ / $\ctr$ for the row-times-column case/column-times-row indicates that  $M=1$ / $N=P=1$, respectively.

\end{remark}

A summary of the notation introduced in this section is provided in Table \ref{tab:my_label}.

\subsection{Deep Learning Motivation}
\label{sec:Deep Learning Motivation}

Let us now briefly motivate the choice of system model in Sec. \ref{sec:Distributed Matrix Computation Model} in the context of distributed training of DNN. 
Note that we will further comment on this application of our results in Sec. \ref{subsec: DNN performance}.

Generally speaking, we observe the matrices involved in the back-propagation, both weight and gradient matrices, have a sparse nature, which is often also enforced through sparsification techniques. 
It can also be observed that the sparsity level varies across DNN layers, often deeper layers being sparser than shallower ones. 
The presence of sparsity in these matrices means that the UEP codes have the potential to drastically improve the back-propagation speed through approximate matrix multiplication. 

%
%
In DNN training, sparsity is often explicitly introduced in order to introduce robustness or reduce the communication load of the training process. One of the most straightforward sparsification approaches is to set all the DNN weights below a certain threshold to zero. These threshold values are increased as the training progresses, so as to enforce sparsity in the final weights.
Additionally, the sparsity level increases with the layer depth, as deeper layers are generally more sparse than shallow ones. 
%
In Fig. {\ref{fig:Gaussian_modeling}} we plot Gaussian fitting of the gradients, weights, and inputs at different layers for a DNN model trained over the MNIST dataset for digit classification. Since the inputs have gone through Rectified Linear Units (ReLU), these values are nonnegative. 
In these simulations, we appropriately choose a sparsity level and report it in Table \ref{tab:Gaussian_modeling_sparseness}. 
Note that at least half of the matrix entries are well approximated by zero entries. 
%
%
%
As it can be observed from Fig. \ref{fig:Gaussian_modeling_a}, the non-sparse entry can be well described as being drawn from a Gaussian distribution with a mean close to zero. 
In Figs. \ref{fig:Gaussian_modeling_b} and \ref{fig:Gaussian_modeling_c}, we plot a Gaussian fitting of the gradient and weight values, showing that the non-sparse matrix entries are roughly Gaussian distributed with a  variance increasing with the layer depth. 
Finally, as the inputs is the DNN are generated through a ReLU activation function, these matrix entries are roughly half Gaussian distributed. 
\begin{figure}
   \hspace{-0.2cm}
	\begin{subfigure}[b]{0.475\textwidth}
	    \centering
	    \begin{tikzpicture}
\definecolor{mycolor1}{rgb}{0.647058824,0.164705882,0.164705882}%
\definecolor{mycolor2}{rgb}{0.854901961, 0.647058824, 0.125490196}%

\begin{axis}[%
width=7cm,
height=3cm,
scale only axis,
xmin=-0.0015,
xmax=0.0015,
xlabel style={font=\color{white!15!black}},
ymin=0,
ymax=2.35E+03,
ylabel style={font=\color{white!15!black}},
ylabel={P(X=k)},
axis background/.style={fill=white},
xmajorgrids,
ymajorgrids,
legend style={legend cell align=left, align=left, draw=white!15!black, nodes={scale=0.85, transform shape}, at={(0.01,0.80)}, anchor=west, fill opacity=0.8}
]

\addplot [color=mycolor1, line width=1.5pt, mark=square, mark options={solid, mycolor1}]
  table[row sep=crcr]{%
-0.00758676	2.10249	\\
-0.00743785	0	\\
-0.00728894	0	\\
-0.00714003	0	\\
-0.00699111	0	\\
-0.0068422	0	\\
-0.00669329	0	\\
-0.00654438	0	\\
-0.00639546	0	\\
-0.00624655	0	\\
-0.00609764	1.05124	\\
-0.00594872	0	\\
-0.00579981	0	\\
-0.0056509	0	\\
-0.00550199	1.05124	\\
-0.00535307	0	\\
-0.00520416	1.05124	\\
-0.00505525	0	\\
-0.00490634	2.10249	\\
-4.76E-03	0	\\
-0.00460851	1.05124	\\
-0.0044596	0	\\
-0.00431068	2.10249	\\
-0.00416177	2.10249	\\
-0.00401286	3.15373	\\
-0.00386395	3.15373	\\
-0.00371503	2.10249	\\
-0.00356612	6.30746	\\
-0.00341721	4.20497	\\
-0.00326829	5.25622	\\
-0.00311938	7.3587	\\
-0.00297047	2.10249	\\
-0.00282156	9.46119	\\
-0.00267264	7.3587	\\
-0.00252373	13.6662	\\
-0.00237482	12.6149	\\
-0.00222591	14.7174	\\
-0.00207699	22.0761	\\
-0.00192808	23.1273	\\
-0.00177917	34.691	\\
-0.00163025	31.5373	\\
-0.00148134	30.486	\\
-0.00133243	38.896	\\
-0.00118352	57.8184	\\
-0.0010346	64.1258	\\
-0.00088569	95.6631	\\
-0.000736777	109.329	\\
-0.000587865	173.455	\\
-0.000438952	226.017	\\
-0.000290039	496.187	\\
-0.000141126	1939.54	\\
7.79E-06	1895.39	\\
1.57E-04	399.472	\\
0.000305612	213.402	\\
0.000454525	133.508	\\
0.000603437	109.329	\\
0.00075235	84.0994	\\
0.000901263	71.4845	\\
0.00105018	51.5109	\\
0.00119909	37.8448	\\
0.001348	26.2811	\\
0.00149691	33.6398	\\
0.00164583	16.8199	\\
0.00179474	28.3836	\\
0.00194365	23.1273	\\
0.00209257	14.7174	\\
0.00224148	15.7686	\\
0.00239039	9.46119	\\
0.0025393	13.6662	\\
0.00268822	13.6662	\\
0.00283713	9.46119	\\
0.00298604	5.25622	\\
0.00313495	9.46119	\\
0.00328387	8.40994	\\
0.00343278	9.46119	\\
0.00358169	6.30746	\\
0.00373061	4.20497	\\
0.00387952	0	\\
0.00402843	3.15373	\\
0.00417734	1.05124	\\
0.00432626	4.20497	\\
0.00447517	1.05124	\\
0.00462408	5.25622	\\
0.004773	1.05124	\\
0.00492191	1.05124	\\
0.00507082	0	\\
0.00521973	3.15373	\\
0.00536865	0	\\
0.00551756	0	\\
0.00566647	1.05124	\\
0.00581538	0	\\
0.0059643	0	\\
0.00611321	0	\\
0.00626212	1.05124	\\
0.00641104	0	\\
0.00655995	0	\\
0.00670886	0	\\
0.00685777	0	\\
0.00700669	0	\\
0.0071556	0	\\
0.00730451	2.10249	\\
};
\addlegendentry{Sampled Grads}

\addplot [color=mycolor2, line width=1.5pt, mark=o, mark options={solid, mycolor2}]
  table[row sep=crcr]{%
-0.00490634	7.57277e-311	\\
-0.00460851	2.02E-273	\\
-0.00431068	2.26E-238	\\
-0.00401286	1.06E-205	\\
-0.00371503	2.06E-175	\\
-0.00341721	1.69E-147	\\
-0.00311938	5.78E-122	\\
-0.00282156	8.27E-99	\\
-0.00252373	4.95E-78	\\
-0.00222591	1.24E-59	\\
-0.00192808	1.30E-43	\\
-0.00163025	5.70E-30	\\
-0.00133243	1.05E-18	\\
-0.0010346	8.03E-10	\\
-0.00088569	2.85E-06	\\
-0.000736777	2.58E-03	\\
-0.000587865	0.592952	\\
-0.000438952	34.6603	\\
-0.000290039	515.27	\\
-0.000141126	1948.16	\\
-7.09E-05 2268.214 \\
7.79E-06	1873.28	\\
1.57E-04	458.109	\\
0.000305612	28.492	\\
0.000454525	0.450678	\\
0.000603437	0.001813	\\
0.00075235	1.85E-06	\\
0.000901263	4.83E-10	\\
0.00119909	5.38E-19	\\
0.00149691	2.50E-30	\\
0.00179474	4.88E-44	\\
0.00209257	3.98E-60	\\
0.00239039	1.36E-78	\\
0.00268822	1.94E-99	\\
0.00298604	1.16E-122	\\
0.00328387	2.89E-148	\\
0.00358169	3.02E-176	\\
0.00387952	1.32E-206	\\
0.00417734	2.42E-239	\\
0.00447517	1.85E-274	\\
0.004773	5.92697e-312	\\
};
\addlegendentry{Gaussian Model}

\end{axis}

\end{tikzpicture}%
        \captionsetup{justification=centering}
	    \caption{Gaussian fitting of the sampled gradient from  layer 1.} \label{fig:Gaussian_modeling_a}
	    \vspace{0.2cm}
	\end{subfigure}

  \hspace{-0.2cm}
	\begin{subfigure}[b]{0.475\textwidth}
	    \centering
	    \begin{tikzpicture}
\definecolor{mycolor1}{rgb}{0.854901961, 0.647058824, 0.125490196}%
\definecolor{mycolor2}{rgb}{0.254901961,0.411764706,0.882352941}%
\definecolor{mycolor3}{rgb}{0.419607843, 0.556862745, 0.137254902}%

\begin{axis}[%
width=7cm,
height=3cm,
scale only axis,
xmin=-0.0015,
xmax=0.0015,
xlabel style={font=\color{white!15!black}},
ymin=0,
ymax=4.35E+03,
ylabel style={font=\color{white!15!black}},
ylabel={P(X=k)},
axis background/.style={fill=white},
xmajorgrids,
ymajorgrids,
]

\addplot [color=mycolor3, line width=1.5pt, mark=pentagon, mark options={solid, mycolor3}]
  table[row sep=crcr]{%
-0.00780591	2.89E-194	\\
-0.00725313	4.67E-167	\\
-0.00670034	7.11E-142	\\
-0.00614756	1.02E-118	\\
-0.00559477	1.39E-97	\\
-0.00504199	1.78E-78	\\
-0.00448921	2.16E-61	\\
-0.00393642	2.47E-46	\\
-0.00338364	2.67E-33	\\
-0.00283085	2.72E-22	\\
-0.00227807	2.62E-13	\\
-0.00172528	2.38E-06	\\
-0.0011725	2.04E-01	\\
-0.000619714	1.65E+02	\\
-6.69E-05	1.26E+03	\\
0.000485855	9.10E+01	\\
0.00103864	6.20E-02	\\
0.00159142	3.99E-07	\\
0.00214421	2.42E-14	\\
0.00269699	1.39E-23	\\
0.00324978	7.52E-35	\\
0.00380256	3.84E-48	\\
0.00435535	1.85E-63	\\
0.00490813	8.44E-81	\\
0.00546091	3.63E-100	\\
};
 \addlegendentry{Layer 3}

\addplot [color=mycolor2, line width=1.5pt, mark=o, mark options={solid, mycolor2}]
  table[row sep=crcr]{%
-0.00222243	6.19E-304	\\
-0.0020768	2.29E-264	\\
-0.00193116	1.57E-227	\\
-0.00178553	2.02E-193	\\
-0.00163989	4.81E-162	\\
-0.00149426	2.13E-133	\\
-0.00134863	1.76E-107	\\
-0.00120299	2.71E-84	\\
-0.00105736	7.74E-64	\\
-0.000911723	4.12E-46	\\
-0.000766089	4.08E-31	\\
-0.000620454	7.52E-19	\\
-0.00047482	2.58E-09	\\
-0.000402003	1.43E-05	\\
-0.000329186	0.0164896	\\
-0.000256368	3.94478	\\
-0.000183551	196.011	\\
-0.000110734	2022.93	\\
-3.79E-05	4336.39	\\
3.49E-05	1930.73	\\
0.000107718	178.549	\\
0.000180535	3.42958	\\
0.000253352	0.0136826	\\
0.000326169	1.13E-05	\\
0.000471803	6.98E-14	\\
0.000617438	7.99E-25	\\
0.000763072	1.70E-38	\\
0.000908707	6.75E-55	\\
0.00105434	4.99E-74	\\
0.00119998	6.85E-96	\\
0.00134561	1.75E-120	\\
0.00149124	8.34E-148	\\
0.00163688	7.39E-178	\\
0.00178251	1.22E-210	\\
0.00192815	3.74E-246	\\
0.00207378	2.13E-284	\\
};
 \addlegendentry{Layer 2}

\addplot [color=mycolor1, line width=1.5pt, mark=asterisk, mark options={solid, mycolor1}]
  table[row sep=crcr]{%
-0.00490634	7.57277e-311	\\
-0.00460851	2.02E-273	\\
-0.00431068	2.26E-238	\\
-0.00401286	1.06E-205	\\
-0.00371503	2.06E-175	\\
-0.00341721	1.69E-147	\\
-0.00311938	5.78E-122	\\
-0.00282156	8.27E-99	\\
-0.00252373	4.95E-78	\\
-0.00222591	1.24E-59	\\
-0.00192808	1.30E-43	\\
-0.00163025	5.70E-30	\\
-0.00133243	1.05E-18	\\
-0.0010346	8.03E-10	\\
-0.00088569	2.85E-06	\\
-0.000736777	2.58E-03	\\
-0.000587865	0.592952	\\
-0.000438952	34.6603	\\
-0.000290039	515.27	\\
-0.000141126	1948.16	\\
-7.09E-05 2268.214 \\
7.79E-06	1873.28	\\
1.57E-04	458.109	\\
0.000305612	28.492	\\
0.000454525	0.450678	\\
0.000603437	0.001813	\\
0.00075235	1.85E-06	\\
0.000901263	4.83E-10	\\
0.00119909	5.38E-19	\\
0.00149691	2.50E-30	\\
0.00179474	4.88E-44	\\
0.00209257	3.98E-60	\\
0.00239039	1.36E-78	\\
0.00268822	1.94E-99	\\
0.00298604	1.16E-122	\\
0.00328387	2.89E-148	\\
0.00358169	3.02E-176	\\
0.00387952	1.32E-206	\\
0.00417734	2.42E-239	\\
0.00447517	1.85E-274	\\
0.004773	5.92697e-312	\\
};
 \addlegendentry{Layer 1}

\end{axis}

\end{tikzpicture}%
	    \caption{Gradient fitting: \\
	    Layer 1 $ \approx \mathcal{N}(-7.09\Esf-05,7.24\Esf-01)$ \\
	    Layer 2 $ \approx \mathcal{N}(-3.90\Esf-05,6.31\Esf-01)$ \\
	    Layer 3 $ \approx \mathcal{N}(-1.02\Esf-04,2.56\Esf-04)$}
	    \label{fig:Gaussian_modeling_b}
	     \vspace{0.2cm}
	\end{subfigure}

   \hspace{-0.2cm}
	\begin{subfigure}[b]{0.475\textwidth}
	    \centering
	    \begin{tikzpicture}
\definecolor{mycolor1}{rgb}{0.854901961, 0.647058824, 0.125490196}%
\definecolor{mycolor2}{rgb}{0.254901961,0.411764706,0.882352941}%
\definecolor{mycolor3}{rgb}{0.419607843, 0.556862745, 0.137254902}%

\begin{axis}[%
width=7cm,
height=3cm,
scale only axis,
xmin=-0.5,
xmax=0.5,
xlabel style={font=\color{white!15!black}},
ymin=0,
ymax=8.12,
ylabel style={font=\color{white!15!black}},
ylabel={P(X=k)},
axis background/.style={fill=white},
xmajorgrids,
ymajorgrids,
]

\addplot [color=mycolor3, line width=1.5pt, mark=pentagon, mark options={solid, mycolor3}]
  table[row sep=crcr]{%
-0.499154	0.00385084	\\
-0.460017	0.0111453	\\
-0.420879	0.0294245	\\
-0.381742	0.0708601	\\
-0.342605	0.155658	\\
-0.303467	0.311904	\\
-0.26433	0.570096	\\
-0.225193	0.950502	\\
-0.186055	1.44556	\\
-0.146918	2.00538	\\
-0.107781	2.53767	\\
-0.0686433	2.92921	\\
-0.029506	3.08422	\\
0.00963139	2.96222	\\
0.0487687	2.59518	\\
0.0879061	2.07393	\\
0.127043	1.51182	\\
0.166181	1.00527	\\
0.205318	0.609741	\\
0.244455	0.337353	\\
0.283593	0.170256	\\
0.32273	0.0783786	\\
0.361867	0.0329132	\\
0.401005	0.0126073	\\
0.440142	0.00440505	\\
0.47928	1.40E-03	\\
};
\addlegendentry{Layer 3}

\addplot [color=mycolor2, line width=1.5pt, mark=o, mark options={solid, mycolor2}]
  table[row sep=crcr]{%
-0.334594	3.08E-03	\\
-0.310034	0.00878892	\\
-0.285474	0.0231107	\\
-0.260914	0.0560452	\\
-0.236355	0.125346	\\
-0.211795	0.258543	\\
-0.187235	0.491816	\\
-0.162675	0.862819	\\
-0.138115	1.396	\\
-0.113555	2.08304	\\
-0.0889949	2.86654	\\
-0.064435	3.63803	\\
-0.039875	4.25816	\\
-0.0153151	4.59649	\\
0.00924486	4.57592	\\
0.0338048	4.20125	\\
0.0583647	3.55734	\\
0.0829247	2.77793	\\
0.107485	2.00062	\\
0.132045	1.32879	\\
0.156604	0.813945	\\
0.181164	0.459814	\\
0.205724	0.239562	\\
0.230284	0.115106	\\
0.254844	0.0510071	\\
0.279404	0.0208453	\\
0.303964	7.86E-03	\\
0.328524	2.73E-03	\\
};
\addlegendentry{Layer 2}

\addplot [color=mycolor1, line width=1.5pt, mark=asterisk, mark options={solid, mycolor1}]
  table[row sep=crcr]{%
-0.206969	5.85E-06	\\
-0.202114	1.82E-03	\\
-0.192405	4.01E-03	\\
-0.182695	8.52E-03	\\
-0.172986	1.74E-02	\\
-0.163276	0.0341696	\\
-0.153567	0.0645068	\\
-0.143857	0.117096	\\
-0.134148	0.204387	\\
-0.124438	0.34303	\\
-0.114729	0.553585	\\
-0.105019	0.859027	\\
-0.0953098	1.28174	\\
-0.0856003	1.83893	\\
-0.0758907	2.5369	\\
-0.0661812	3.3652	\\
-0.0564717	4.2923	\\
-0.0467622	5.2643	\\
-0.0370527	6.20815	\\
-0.0273432	7.03972	\\
-0.0176336	7.67573	\\
-0.00792413	8.0474	\\
0.00178538	8.11265	\\
0.0114949	7.86395	\\
0.0212044	7.32977	\\
0.0309139	6.56918	\\
0.0406234	5.66113	\\
0.050333	4.69101	\\
0.0600425	3.73767	\\
0.069752	2.86357	\\
0.0794615	2.10952	\\
0.089171	1.49428	\\
0.0988805	1.01778	\\
0.10859	0.666567	\\
0.1183	0.419765	\\
0.128009	0.254179	\\
0.137719	0.147994	\\
0.147428	0.0828555	\\
0.157138	0.0446035	\\
0.166847	0.0230881	\\
0.176557	1.15E-02	\\
0.186266	5.50E-03	\\
0.195976	2.53E-03	\\
0.205685	1.12E-03	\\
};
\addlegendentry{Layer 1}

\end{axis}

\end{tikzpicture}%
	    \caption{Weight fitting: \\
	    Layer 1 $ \approx \mathcal{N}(-1.07\Esf-03,9.99\Esf-01)$\\
	    Layer 2 $ \approx \mathcal{N}(-4.40\Esf-03,1.00\Esf+00)$ \\
	    Layer 3 $ \approx \mathcal{N}(-2.71\Esf-02,9.98\Esf-01)$}
	    \label{fig:Gaussian_modeling_c}
	     \vspace{0.2cm}
	\end{subfigure}
	
    \hspace{-0.2cm}
	\begin{subfigure}[b]{0.475\textwidth}
	    \centering
	    \begin{tikzpicture}
\definecolor{mycolor1}{rgb}{0.854901961, 0.647058824, 0.125490196}%
\definecolor{mycolor2}{rgb}{0.254901961,0.411764706,0.882352941}%
\definecolor{mycolor3}{rgb}{0.419607843, 0.556862745, 0.137254902}%

\begin{axis}[%
width=7cm,
height=3cm,
scale only axis,
xmin=0.000127394,
xmax=5.20646,
xlabel style={font=\color{white!15!black}},
xlabel={k},
ymin=0,
ymax=7.66E-01,
ylabel style={font=\color{white!15!black}},
ylabel={P(X=k)},
axis background/.style={fill=white},
xmajorgrids,
ymajorgrids,
]

\addplot [color=mycolor3, line width=1.5pt, mark=pentagon, mark options={solid, mycolor3}]
  table[row sep=crcr]{%
0.000127394	0.687168	\\
0.208381	0.697513	\\
0.416634	0.674799	\\
0.624887	0.622198	\\
0.83314	0.546783	\\
1.04139	0.457967	\\
1.24965	0.365583	\\
1.4579	0.278144	\\
1.66615	0.201691	\\
1.87441	0.139391	\\
2.08266	0.0918154	\\
2.29091	0.0576407	\\
2.49917	0.0344886	\\
2.70742	0.0196677	\\
2.91567	0.0106897	\\
3.12393	0.00553742	\\
3.33218	0.0027339	\\
3.54043	0.00128644	\\
3.74868	5.77E-04	\\
3.95694	2.47E-04	\\
4.16519	1.00E-04	\\
4.37344	3.90E-05	\\
4.5817	1.44E-05	\\
4.78995	5.09E-06	\\
4.9982	1.71E-06	\\
5.20646	5.48E-07	\\
};
\addlegendentry{Layer 3}

\addplot [color=mycolor2, line width=1.5pt, mark=o, mark options={solid, mycolor2}]
  table[row sep=crcr]{%
0.000564475	0.765601	\\
0.243964	0.717391	\\
0.487364	0.643463	\\
0.730764	0.552465	\\
0.974165	0.454046	\\
1.21756	0.357198	\\
1.46096	0.268987	\\
1.70436	0.193896	\\
1.94776	0.133789	\\
2.19116	0.0883657	\\
2.43456	0.0558678	\\
2.67796	0.0338107	\\
2.92136	0.0195867	\\
3.16476	0.0108613	\\
3.40816	0.00576519	\\
3.65156	0.00292929	\\
3.89496	0.0014247	\\
4.13836	6.63E-04	\\
4.38176	2.96E-04	\\
4.62516	1.26E-04	\\
4.86856	5.15E-05	\\
};
\addlegendentry{Layer 2}

\end{axis}

\end{tikzpicture}%
	    \caption{Input fitting: \\
	    Layer 2 $ \approx ReLU(\mathcal{N}(-2.40\Esf-01,2.28\Esf+00))$ \\
	    Layer 3 $ \approx ReLU(\mathcal{N}(1.69\Esf-01,1.66\Esf+00))$}
	    \label{fig:Gaussian_modeling_d}
	\end{subfigure}

	\caption{Gaussian modeling for the dense portion of each layer for the MNIST classification task at mini-batch iteration 389 / 937 with the DNN model defined in Fig. \ref{fig:mnist_dnn_sketch} and parameters in Table \ref{tab:DNN_parameters}.} \label{fig:Gaussian_modeling}
\end{figure}

\begin{table}[b]
	\footnotesize
	\centering
	\caption{Sparseness of the matrices modeled in Fig. \ref{fig:Gaussian_modeling}.}
	\label{tab:Gaussian_modeling_sparseness}
	\begin{tabular}{|c|c|c|c|}
		\hline
		Layer  & Gradients & Weight & Input \\ \hline
		1   & 50.09\% & 0.15\% & -          \\ \hline
		2 &  59.09\% & 0.11\% & 33.11\%     \\ \hline
		3 &  57.97\% & 0.10\% & 38.63\%      \\ \hline  
	\end{tabular}
\end{table}



%
%



\section{Relevant Results}
\label{sec:Relevant Results}

\subsection{Approximate Matrix Computation}
\label{sec:Approximate Matrix Computation}

%
In many applications, one is interested in the trade-off between the time required for computation and the precision of the computation. 
For instance, consider the scenario in which the PS is training a DNN so that the matrix multiplication corresponds to the gradient back-propagation operation. 
In this scenario, an approximate evaluation of the matrix product results in an overall improvement of the training time as the speed-up benefits far out-weight the loss of precision in the computation results. 

Approximate matrix multiplication has a long history in mathematics, computer science, and engineering. 
The problem was initially considered in \cite{frieze2004fast}, inspired by the problem of finding low-rank matrix approximations in \cite{drineas2006fast}. 
%
%
%
When considering a simple algorithm in which the matrix multiplication is approximated by randomly sampling
column from $\Av$ and  rows from $\Bv$ (i.e., an outer product)  and accumulating these rank-one  matrices to produce an approximation of $\Cv$, approximation loss of
\ea{ \tag{10}
\Lcal(\Cv,\Chv)= \Ocal \lb \f {\| \Av \|_F \|\Bv \|_F }{ \sqrt{c}}\rb,
\label{eq:ex loss}
}
is obtained where $c$ is the number of outer products accumulated to produce $\Chv$ \cite{drineas2006fast}. A similar interpretation is also presented in \cite{charalambides2021approximate} for block/submatix sampling.


The above approach is referred as random projection or “sketching.” Many studies are based on the Johnson-Lindenstrauss (JL) Lemma \cite{dasgupta2003elementary, sarlos2006improved, clarkson2009numerical}.
%
%
 %
 In \cite{kane2014sparser}, the authors aims more speed up by introducing a sparse JL. 
 In \cite{clarkson2017low}, the authors come up with a  very sparse subspace embedding matrix
 with modifications where their results can be used to speed up approximate matrix multiplication by the reduction in the dimension after the JL projection. 


\subsection{UEP Codes}
\label{sec:UEP codes}

\begin{figure}[t]
	\centering
	\begin{tikzpicture}

\definecolor{apricot}{rgb}{0.98, 0.81, 0.69}
\definecolor{antiquebrass}{rgb}{0.8, 0.58, 0.46}
\definecolor{arylideyellow}{rgb}{0.91, 0.84, 0.42}
\definecolor{bananamania}{rgb}{0.98, 0.91, 0.71}
\definecolor{babypink}{rgb}{0.54, 0.81, 0.94}
\definecolor{babyblue}{rgb}{0.96, 0.76, 0.76}
\definecolor{caribbeangreen}{rgb}{0.0, 0.8, 0.6}
\definecolor{celadon}{rgb}{0.67, 0.88, 0.69}
\definecolor{cornsilk}{rgb}{1.0, 0.97, 0.86}	

	\definecolor{battleshipgrey}{rgb}{0.52, 0.52, 0.51}
	\definecolor{antiquewhite}{rgb}{0.98, 0.92, 0.84}
	\definecolor{darkgray}{rgb}{0.66, 0.66, 0.66}
	\definecolor{gainsboro}{rgb}{0.86, 0.86, 0.86}	
	\definecolor{gray(x11gray)}{rgb}{0.815, 0.815, 0.815}		
	\definecolor{isabelline}{rgb}{0.96, 0.94, 0.93}
	\definecolor{ivory}{rgb}{1.0, 1.0, 0.94}	
\definecolor{linen}{rgb}{0.98, 0.94, 0.9}
\definecolor{pastelgray}{rgb}{0.81, 0.81, 0.77}
	\definecolor{platinum}{rgb}{0.9, 0.89, 0.89}
	\definecolor{anti-flashwhite}{rgb}{0.95, 0.95, 0.96}
	\definecolor{beige}{rgb}{0.96, 0.96, 0.86}
	\definecolor{champagne}{rgb}{0.97, 0.91, 0.81}
	\definecolor{smokyblack}{rgb}{0.25, 0.25, 0.25}		
	
	\draw [fill=smokyblack](-25,26.25) rectangle (-23,23.5);
	\draw [fill=champagne](-25,26) rectangle (-23,24);
	\draw [fill=battleshipgrey](-23,26.25) rectangle (-21,23.5);
	\draw [fill=champagne](-23,26) rectangle (-19.25,24);
	\draw [fill=gray(x11gray)](-19.25,26.25) rectangle (-17.25,23.5);
	\draw [fill=champagne](-19.25,26) rectangle (-17.25,24);
	
		\node  (0) at (-25, 26) {};
		\node  (1) at (-24.5, 26) {};
		\node  (2) at (-24, 26) {};
		\node  (3) at (-25, 24) {};
		\node  (4) at (-24.5, 24) {};
		\node  (5) at (-24, 24) {};
		\node  (6) at (-23, 24) {};
		\node  (7) at (-23, 26) {};
		\node  (8) at (-23, 26) {};
		\node  (9) at (-22.5, 26) {};
		\node  (10) at (-22, 26) {};
		\node  (11) at (-23, 24) {};
		\node  (12) at (-22.5, 24) {};
		\node  (13) at (-22, 24) {};
		\node  (14) at (-21.5, 24) {};
		\node  (15) at (-21.5, 26) {};
		\node  (16) at (-19.25, 26) {};
		\node  (17) at (-18.75, 26) {};
		\node  (18) at (-18.25, 26) {};
		\node  (19) at (-19.25, 24) {};
		\node  (20) at (-18.75, 24) {};
		\node  (21) at (-18.25, 24) {};
		\node  (22) at (-17.75, 24) {};
		\node  (23) at (-17.75, 26) {};
		\node  (24) at (-25, 26.25) {};
		\node  (25) at (-23, 26.25) {};
		\node  (27) at (-25, 23.75) {};
		\node  (28) at (-23, 23.75) {};
		\node  (30) at (-19.25, 26.25) {};
		\node  (33) at (-19.25, 23.75) {};
		\node  (34) at (-24.75, 25) {};
		\node  (35) at (-24.25, 25) {};
		\node  (36) at (-17.5, 25) {};
		\node  (37) at (-20.15, 25) {$\cdots$};
		\node  (38) at (-24, 23.75) {\color{white} $1$-st window};
		\node  (39) at (-22, 23.75) {\color{white}$2$-nd window};
		\node  (40) at (-18.25, 23.75) {\color{white}$L$-th window};
		\node  (41) at (-23.5, 26) {};
		\node  (42) at (-23.5, 24) {};
		\node  (46) at (-21, 26) {};
		\node  (47) at (-21, 24) {};
		\node  (48) at (-17.25, 26) {};
		\node  (49) at (-17.25, 24) {};
		\node  (50) at (-17.25, 26.25) {};
		\node  (51) at (-17.25, 23.75) {};
		\node  (52) at (-21, 26.25) {};
		\node  (53) at (-21, 23.75) {};
		\draw (0.center) to (1.center);
		\draw (1.center) to (2.center);
		\draw (2.center) to (7.center);
		\draw (2.center) to (5.center);
		\draw (5.center) to (6.center);
		\draw (4.center) to (5.center);
		\draw (4.center) to (3.center);
		\draw (3.center) to (0.center);
		\draw (1.center) to (4.center);
		\draw (8.center) to (9.center);
		\draw (9.center) to (10.center);
		\draw (10.center) to (15.center);
		\draw (10.center) to (13.center);
		\draw (13.center) to (14.center);
		\draw (12.center) to (13.center);
		\draw (12.center) to (11.center);
		\draw (11.center) to (8.center);
		\draw (9.center) to (12.center);
		\draw (16.center) to (17.center);
		\draw (17.center) to (18.center);
		\draw (18.center) to (23.center);
		\draw (18.center) to (21.center);
		\draw (21.center) to (22.center);
		\draw (20.center) to (21.center);
		\draw (20.center) to (19.center);
		\draw (19.center) to (16.center);
		\draw (17.center) to (20.center);
		\draw (15.center) to (14.center);
		\draw (23.center) to (22.center);
		\draw (14.center) to (19.center);
		\draw (16.center) to (15.center);
		\draw (24.center) to (25.center);
		\draw (27.center) to (24.center);
		\draw (33.center) to (30.center);
		\draw (25.center) to (28.center);
		\draw (41.center) to (42.center);
		\draw (46.center) to (47.center);
		\draw (30.center) to (50.center);
		\draw (23.center) to (48.center);
		\draw (48.center) to (49.center);
		\draw (49.center) to (22.center);
		\draw (50.center) to (51.center);
		\draw (25.center) to (52.center);
		\draw (52.center) to (53.center);

\end{tikzpicture}
    \caption{Window definition for UEP-NOW codes in \cite{vukobratovic2012unequal}.} \label{sm_NOW} 
	\centering
	\begin{tikzpicture}
\definecolor{apricot}{rgb}{0.98, 0.81, 0.69}
\definecolor{antiquebrass}{rgb}{0.8, 0.58, 0.46}
\definecolor{arylideyellow}{rgb}{0.91, 0.84, 0.42}
\definecolor{bananamania}{rgb}{0.98, 0.91, 0.71}
\definecolor{babypink}{rgb}{0.54, 0.81, 0.94}
\definecolor{babyblue}{rgb}{0.96, 0.76, 0.76}
\definecolor{caribbeangreen}{rgb}{0.0, 0.8, 0.6}
\definecolor{celadon}{rgb}{0.67, 0.88, 0.69}
\definecolor{cornsilk}{rgb}{1.0, 0.97, 0.86}	

\definecolor{battleshipgrey}{rgb}{0.52, 0.52, 0.51}
\definecolor{antiquewhite}{rgb}{0.98, 0.92, 0.84}
\definecolor{darkgray}{rgb}{0.66, 0.66, 0.66}
\definecolor{gainsboro}{rgb}{0.86, 0.86, 0.86}	
\definecolor{gray(x11gray)}{rgb}{0.815, 0.815, 0.815}		
\definecolor{isabelline}{rgb}{0.96, 0.94, 0.93}
\definecolor{ivory}{rgb}{1.0, 1.0, 0.94}	
\definecolor{linen}{rgb}{0.98, 0.94, 0.9}
\definecolor{pastelgray}{rgb}{0.81, 0.81, 0.77}
\definecolor{platinum}{rgb}{0.9, 0.89, 0.89}
\definecolor{anti-flashwhite}{rgb}{0.95, 0.95, 0.96}
\definecolor{beige}{rgb}{0.96, 0.96, 0.86}
\definecolor{champagne}{rgb}{0.97, 0.91, 0.81}
\definecolor{smokyblack}{rgb}{0.25, 0.25, 0.25}

\draw [fill=gray(x11gray)](-25,26.75) rectangle (-17.25,22.5);
\draw [fill=battleshipgrey](-25,26.5) rectangle (-21,23);
\draw [fill=smokyblack](-25,26.25) rectangle (-23,23.5);

\draw [fill=champagne](-25,26) rectangle (-23,24);
\draw [fill=champagne](-23,26) rectangle (-21,24);
\draw [fill=champagne](-21,26) rectangle (-19.25,24);
\draw [fill=champagne](-19.25,26) rectangle (-17.25,24);

		\node [] (0) at (-25, 26) {};
		\node [] (1) at (-24.5, 26) {};
		\node [] (2) at (-24, 26) {};
		\node [] (3) at (-25, 24) {};
		\node [] (4) at (-24.5, 24) {};
		\node [] (5) at (-24, 24) {};
		\node [] (6) at (-23, 24) {};
		\node [] (7) at (-23, 26) {};
		\node [] (8) at (-23, 26) {};
		\node [] (9) at (-22.5, 26) {};
		\node [] (10) at (-22, 26) {};
		\node [] (11) at (-23, 24) {};
		\node [] (12) at (-22.5, 24) {};
		\node [] (13) at (-22, 24) {};
		\node [] (14) at (-21.5, 24) {};
		\node [] (15) at (-21.5, 26) {};
		\node [] (16) at (-19.25, 26) {};
		\node [] (17) at (-18.75, 26) {};
		\node [] (18) at (-18.25, 26) {};
		\node [] (19) at (-19.25, 24) {};
		\node [] (20) at (-18.75, 24) {};
		\node [] (21) at (-18.25, 24) {};
		\node [] (22) at (-17.75, 24) {};
		\node [] (23) at (-17.75, 26) {};
		\node [] (34) at (-24.75, 25) {};
		\node [] (35) at (-24.25, 25) {};
		\node [] (36) at (-17.5, 25) {};
		\node [] (37) at (-20.15, 25) {$\cdots$};
		\node [] (38) at (-24, 23.75) {{\color{white}$1$-st window}};
		\node [] (39) at (-23, 23.25) {{\color{white}$2$-nd window}};
		\node [] (40) at (-21, 22.75) {{\color{white}$L$-th window}};
		\node [] (41) at (-23.5, 26) {};
		\node [] (42) at (-23.5, 24) {};
		\node [] (46) at (-21, 26) {};
		\node [] (47) at (-21, 24) {};
		\node [] (48) at (-17.25, 26) {};
		\node [] (49) at (-17.25, 24) {};
		\node [] (50) at (-25, 26.25) {};
		\node [] (51) at (-23, 26.25) {};
		\node [] (52) at (-23, 23.5) {};
		\node [] (53) at (-25, 23.5) {};
		\node [] (54) at (-25, 23) {};
		\node [] (55) at (-23, 23.5) {};
		\node [] (56) at (-21, 23) {};
		\node [] (57) at (-25, 26.5) {};
		\node [] (59) at (-21, 26.5) {};
		\node [] (60) at (-25, 26.75) {};
		\node [] (63) at (-17.25, 26.75) {};
		\node [] (64) at (-17.25, 22.5) {};
		\node [] (65) at (-25, 22.5) {};

		\draw (0.center) to (1.center);
		\draw (1.center) to (2.center);
		\draw (2.center) to (7.center);
		\draw (2.center) to (5.center);
		\draw (5.center) to (6.center);
		\draw (4.center) to (5.center);
		\draw (4.center) to (3.center);
		\draw (3.center) to (0.center);
		\draw (1.center) to (4.center);
		\draw (8.center) to (9.center);
		\draw (9.center) to (10.center);
		\draw (10.center) to (15.center);
		\draw (10.center) to (13.center);
		\draw (13.center) to (14.center);
		\draw (12.center) to (13.center);
		\draw (12.center) to (11.center);
		\draw (11.center) to (8.center);
		\draw (9.center) to (12.center);
		\draw (16.center) to (17.center);
		\draw (17.center) to (18.center);
		\draw (18.center) to (23.center);
		\draw (18.center) to (21.center);
		\draw (21.center) to (22.center);
		\draw (20.center) to (21.center);
		\draw (20.center) to (19.center);
		\draw (19.center) to (16.center);
		\draw (17.center) to (20.center);
		\draw (15.center) to (14.center);
		\draw (23.center) to (22.center);
		\draw (14.center) to (19.center);
		\draw (16.center) to (15.center);
		\draw (41.center) to (42.center);
		\draw (46.center) to (47.center);
		\draw (23.center) to (48.center);
		\draw (48.center) to (49.center);
		\draw (49.center) to (22.center);
		\draw (50.center) to (51.center);
		\draw (51.center) to (52.center);
		\draw (52.center) to (53.center);
		\draw (53.center) to (50.center);
		\draw (57.center) to (59.center);
		\draw (59.center) to (56.center);
		\draw (56.center) to (54.center);
		\draw (54.center) to (57.center);
		\draw (60.center) to (63.center);
		\draw (60.center) to (65.center);
		\draw (65.center) to (64.center);
		\draw (64.center) to (63.center);

\end{tikzpicture}
    \caption{Window definition with UEP-EW  codes \cite{vukobratovic2012unequal}.}  \label{sm_EW}
\end{figure}   

The substantial reduction in complexity attained through approximate matrix multiplication discussed in Sec. \ref{sec:Approximate Matrix Computation} comes from identifying the matrix features which influence the multiplication outcome the most. 
In the following, we will rely on this observation and also choose the protection levels of our matrix products according to this influence.
To do so, we rely on UEP codes. 

UEP codes were firstly introduced in \cite{masnick1967linear} as a way to provide stronger protection for certain data digits while the other digits are protected less. 
The authors perform analysis for linear codes to protect a set of digits at a higher level than other sets of digits and provide methods of synthesizing UEP codes using parity check matrices. 
Since then, there have been many studies which investigate UEP codes to provide unequal protection for different positions in a data sequence.
%
%

UEP codes are also studied for multimedia transmission schemes with network coding, see e.g.,  \cite{thomos2009randomized, nguyen2010video}. These studies divide the source packet into several layers and importance levels to apply an RLC scheme which provides better protection for certain sub-packets for prior recovery. 
In \cite{vukobratovic2012unequal}, the authors consider a similar setup; further, they also provide performance analysis for packet-level UEP coding
over packet erasure channels.

While there are different ways of UEP coding, here we focus on the UEP codes introduced in \cite{vukobratovic2012unequal}, which also provides an analysis based on RLC.
Two families of codes are analyzed in \cite{vukobratovic2012unequal}: 
NOW-UEP and EW-UEP, respectively.  
In both codes, we consider the case of a layered message source which consists of $K$ equal-length source packets of $b$ bits each. 
Without loss of generality, assume that bits are ordered in decreasing levels of importance. 
A window containing a subset of these packets is then considered, and packets inside each window are selected according to a given probability.
Once selected, a random linear code is employed to produce parity bits.

In the NOW-UEP coding strategy, the selection windows are non-overlapping.
The encoding of the information bits is performed by selecting a window using a window selection polynomial $\Gamma(\xi)= \sum_{i \in [L]} \Gamma_i \xi^i$,
where $\Gamma_i$ is the window selection probability for the $i$-th importance level.
Then, the encoded messages are generated only from the importance levels of the selected type, as illustrated in Fig. \ref{sm_NOW}. With this coding strategy, separate and independent coding is applied to each importance level enabling different rate allocation for different levels with the help of the window selection probabilities. Thus, one can interpret this coding scheme as applying different error protection codes for each level.

The EW-UEP coding strategy also uses a probabilistic window selection polynomial $\Gamma(\xi)$ for the $i$-th importance level; however, the window definition is different from that of the NOW-UEP strategy. The EW-UEP constructs the $i$-th window by including all the packets whose importance level is $i$ or higher than $i$ as illustrated in Fig. \ref{sm_EW}. Therefore, the packets in the first window are included in all the encoded messages; hence they are the best-protected ones. However, the packets with lower importance are used less in the encoding process resulting in less protection for them. Therefore, progressive protection is provided for the source packets.
For instance, let us assume that the third importance level is selected according to the window selection distribution, then the coded packet includes all the source messages from the first, second, and third importance levels.
Thus, the EW strategy includes the most important matrices to the encoding process regardless of the importance level of the selected window to provide a better protection than the others.

It is also worth noting that encoding and decoding processes of RLC have a small complexity level which makes them feasible for real-time implementation. For instance, in \cite{shojania2009random, 5766632}, it is demonstrated that real-time implementation of RLC over wireless channels is even possible by using smartphones. Hence, the overhead introduced by our approach can be neglected compared to the complexity of matrix multiplication.

\section{Approximate Matrix Multiplication Through UEP Codes}
\label{sec:Approximate Matrix Multiplication with UEP Codes}
In this section, we propose a distributed coded approximate matrix multiplication scheme which aims to provide better protection for the matrix sub-products of $\Av_{nm}  \in \mathbb{R}^{U  \times H}$ and $\Bv_{mp}  \in \mathbb{R}^{H\times Q}$  with  larger norms, and thus produce a better  approximation of the  matrix product within the prescribed deadline.
In particular, the coding scheme relies on a  parametrization of the protection levels
matching the distribution of the Frobenius norms of the rows and columns of the matrices in question.

\subsection{Importance Level of a Sub-block}
%

\subsubsection{Importance levels for $\rtc$ multiplication}

Let us begin by classifying the matrix sub-blocks in \eqref{eq:sub-blocks rXc} according to their norms. For instance,  we may select three different levels for each sub-block, e.g., \textit{high}, \textit{medium}, and \textit{low} to classify the norms of $\Av_n$ and $\Bv_p$.
Let us refer to these levels as \emph{importance levels}, and assume that there are $S$ such levels and let the importance be decreasing in $s \in [S]$.
Given a matrix $\Av$/$\Bv$, we have $n_A(s)$/$n_B(s)$ blocks with the importance level $s \in [S]$.
Clearly, 
$N =\sum_{s } n_A(s)$, and $P =\sum_{s } n_B(s)$.

By construction, any sub-product $\Cv_{np}$ is obtained as the multiplication of sub-blocks in two classes: accordingly $\Cv_{np}$ has $L$ possible importance levels with $L = S(S+1)/2$.
For instance, for the example of three importance levels, for both $\Av_n$ and $\Bv_p$,  $\Cv_{np}$ can have importance  \textit{high $\times$ high}, \textit{high $\times$ medium}, \textit{high $\times$ low}, \textit{medium $\times$ medium}, etc.

From a high level-perspective, one would want the PS to be able to quickly recover those products corresponding to the importance level \textit{high $\times$ high},  while the importance level \textit{low $\times$ low} is not particularly urgent. 
We can obtain this desired behavior by employing UEP codes.

\subsubsection{Importance levels for $\ctr$ multiplication}
Similar to the previous case, we will classify the matrix sub-blocks in \eqref{eq:sub-blocks cXr} based on their norms into $S$ importance levels. 
However, this time the classification will be based on the column sub-blocks of $\Av$ and row sub-blocks of $\Bv$ which are denoted by $\Av_m$ and $\Bv_m$, respectively.  
Similarly, we have $n_A(s)$/$n_B(s)$ for $s \in [S]$ as the number of blocks with level of importance
learly, 
$M =\sum_{s} n_A(s)  =\sum_{s} n_B(s)$.

As a result of the multiplication of $\Av_m$ and $\Bv_m$, 
$\Cv_{m}$ will be comprised of $L$ different importance levels which will depend on the pairing of column and row importance levels of $\Av$ and $\Bv$, respectively. 
Note that, for $\ctr$ multiplication, we only have $M$ sub-block products due to multiplication of $\Av_m$ and $\Bv_m$, $m \in [M]$, which may result in less than $S(S+1)/2$ importance levels depending on the order of the sub-blocks. 
Different from the previously described block partitioning, each sub-block multiplication results in a matrix $\Cv_{m}$ whose size is same as the original multiplication $\Cv = \Av \Bv$. %
Note that, for the perfect recovery of $\Cv$, one needs to have $M$ separate $\Cv_{m}$ multiplications with $m \in [M]$.

Note that in \cite{charalambides2021approximate}, the authors consider weighted block sampling for column-times-row partitioning, and they optimize the sampling distribution based on the Frobenius norm of the multiplied sub-blocks. Similar to their study, one can decide the importance levels according to the Fronebius norms if available at the PS instead of the block statistics.

\subsection{UEP Coded Matrix Multiplication}
%
For clarity, in the following we describe the UEP coding strategies based on $\rtc$ partitioning  as given in \eqref{eq:sub-blocks rXc}. 
For the case $\ctr$, the UEP coding is performed in an analogous manner and thus shall not be detailed further. 
%
%
For the $\rtc$ partitioning, let us detail how the NOW-UEP and EW-UEP codes  in Sec.  \ref{sec:UEP codes} are applied to the matrix multiplication problem in Sec. \ref{sec:System model}.
%




The PS selects importance levels for  $\Av$ and $\Bv$, and then  encodes the corresponding rows and columns of $\Av$ and $\Bv$ for the $\rtc$ multiplication using 
\begin{equation} \tag{11}
\begin{aligned}
\Wv_{A}^w &= \sum_{i } \alpha^w(i) \Av_{\pi_A^w(i)},\\
\Wv_{B}^w &= \sum_{j} \beta^w(j) \Bv_{\pi_B^w(j)},
\end{aligned}
\end{equation}
for the $w$-th worker, $w \in [W]$, where $ \alpha^w(i)$ and $ \beta^w(j)$ are randomly
selected elements from the given finite field, and ${\pi_A^w(i)}$/${\pi_B^w(j)}$ is the row/column or column/row indices of $\Av$/$\Bv$ at the corresponding levels, respectively.

After the matrix sub-products are encoded, $\Wv_{A}^w$ and $\Wv_{B}^w$  are transmitted to the $w$-th worker. 
Each worker $w$ performs the sub-product multiplication operation resulting in $\Wv_{A}^w \Wv_{B}^w$, and  transmits the sub-product to the PS. 
Since wireless transmission requires delay and energy, we consider the delay due to stragglers' channel conditions as the communication cost.
%

The PS will have $\Wcal(T_{\max})$, which is a set of $\Wv_{A}^w \Wv_{B}^w$ products completed within a predetermined deadline $T_{\max}$.
%
At this point, to form the approximation $\Chv$, the PS  simply places the sub-product $\Chv_{nmp}=\Cv_{nmp}$ in the positions that can be obtained from $\Wcal(T_{\max})$, and sets $\Chv_{nmp}$ to the all-zero matrix otherwise.
%
With this operation, the final approximate matrix product $\Chv$ is obtained.

\section{Analysis of the Approximation Error}
\label{sec:Theoretical Analysis}
In this section, we bound the loss in \eqref{eq:loss} using UEP codes.
Our bounds rely on the results in  \cite{vukobratovic2012unequal} to characterize the performance of the proposed schemes in Sec. \ref{sec:Approximate Matrix Multiplication with UEP Codes}.
These bounds are applied to the case of matrices with i.i.d. entries as per the following assumption.

\begin{assumption}{\bf  i.i.d. entries matrix computation:}
\label{asm_iid}
For the matrix  $\Av$ and $\Bv$ we assume that the sub-matrices in the  $\rtc$ / $\ctr$ partitioning have i.i.d. entries with zero mean and variance $\sigma_{l,A}^2$ and  $\sigma_{l,B}^2$ corresponding to the $l$-th importance level of the final product $\Cv$.
\end{assumption}
%

\subsection{ Row-times-Column Case}
%
We assume for simplicity that the entries of the matrices are zero mean and with variance  $\sigma_{A_n}^2$ and $\sigma_{B_p}^2$ for the $n$-th and $p$-th sub-block of \eqref{eq:sub-blocks rXc}, respectively, for $\Av$ and $\Bv$, and they are uncorrelated, so that 
%
\ean{ \tag{12}
\mathbb{E} \Big[ \| \Cv_{np}  \|_F^2 \Big]  = UHQ \sigma_{A_n}^2 \sigma_{B_p}^2.
}
Let us denote the number of encoded matrix products received at time $t$ by $N(t)$, then the probability of receiving $w$ packets from $W$ workers at time $t$ is $P_{N(t)}(w)$, which is obtained as
\begin{equation} \tag{13}
P_{N(t)}(w) = \binom{W}{w} (1-F(t))^{W-w} \left(F(t)\right)^{w}.
\end{equation}

%

From  \cite[Eq. 5]{vukobratovic2012unequal}, we obtain a bound (which is achievable with large field sizes) on the decoding probabilities of NOW-UEP strategy for each importance level as a function of received matrices as 
\begin{equation} \tag{14} \label{dec_prob}
P_{d,l}(N)\leq \sum_{\substack{(n_{1},n_{2},\ldots,n_{L}) \\
\sum_{l} n_l=N }}
P_{\Gamma(\xi),N}(\mathbf{n}) \: \onev(n_{l}\geq k_{l}),
\end{equation}
where $\mathbf{n} = [n_{1},n_{2},\ldots,n_{L}]$, $k_{l}$ is the number of packets in class $l$, and 
\begin{equation} \tag{15}
P_{\Gamma(\xi),N}({\bf n})=\frac{N!}{n_{1}!n_{2}!\ldots n_{L}!} \Gamma_{1}^{n_{1}}\Gamma_{2}^{n_{2}}\ldots\Gamma_{L}^{n_{L}}.
\end{equation}
It is also worth noting that the decoding probabilities for EW-UEP coding can be calculated similarly as they are given in  \cite[Eq. 6-9]{ vukobratovic2012unequal}.

As an example, with three classes, $W=30$ workers, and window selection probabilities $(0.40,0.35,0.25)$, the decoding probabilities of each class with NOW and EW-UEP codes with $3$ sub-products in each level are as depicted in Fig. \ref{UEP_prob_fig}. The figure clearly illustrates how the most important class is protected better. 

\begin{figure}[t]
	\centering
		\begin{tikzpicture}
\definecolor{mycolor1}{rgb}{0.63529,0.07843,0.18431}%
\definecolor{mycolor2}{rgb}{0.00000,0.44706,0.74118}%
\definecolor{mycolor3}{rgb}{0.00000,0.49804,0.00000}%
\definecolor{mycolor4}{rgb}{0.87059,0.49020,0.00000}%
\definecolor{mycolor5}{rgb}{0.00000,0.44700,0.74100}%
\definecolor{mycolor6}{rgb}{0.74902,0.00000,0.74902}%

\definecolor{bittersweet}{rgb}{1.0, 0.44, 0.37}
\definecolor{citrine}{rgb}{0.56, 0.74, 0.56}
	\definecolor{manatee}{rgb}{0.4, 0.19, 0.28}
	\definecolor{mediumblue}{rgb}{0.0, 0.0, 0.8}

\begin{axis}[%
width=7cm,
height=3cm,
at={(0.88in,0.458in)},
scale only axis,
xmin=0,
xmax=30,
xlabel style={font=\color{white!15!black}},
xlabel={Received packets (N)},
ymin=0,
ymax=1,
ylabel style={font=\color{white!15!black}},
ylabel={Decoding probabilities},
axis background/.style={fill=white},
xmajorgrids,
ymajorgrids,
legend columns=2, 
legend style={at={(0.985,0.445)},font=\small,nodes={scale=0.85, transform shape},
        /tikz/column 2/.style={
            column sep=5pt,
        },
    },
]
\addplot [color=bittersweet, line width=1.5pt, mark=o, mark options={solid, bittersweet}]
  table[row sep=crcr]{%
1	0\\
2	0\\
3	0.064\\
4	0.1792\\
5	0.31744\\
6	0.45568\\
7	0.580096\\
8	0.68460544\\
9	0.768212992\\
10	0.8327102464\\
12	0.91655667712\\
14	0.96020841889792\\
16	0.981662785601536\\
18	0.991773643509858\\
20	0.996388527940871\\
22	0.998442476504533\\
24	0.999338206073081\\
26	0.999722330853574\\
28	0.999884754984442\\
30	0.999952616489873\\
};
\addlegendentry{${P}_{{d,1}}{(N)}$-NOW}

\addplot [color=bittersweet, dashed, line width=1.5pt, mark=o, mark options={solid, bittersweet}]
  table[row sep=crcr]{%
1	0\\
2	0\\
3	0.064\\
4	0.1792\\
5	0.31744\\
6	0.506138515625\\
7	0.6918362890625\\
8	0.830303090820312\\
9	1\\
10	1\\
12	1\\
14	1\\
16	1\\
18	1\\
20	1\\
22	1\\
24	1\\
26	0.999999999999999\\
28	1\\
30	1\\
};
\addlegendentry{EW}

\addplot [color=mediumblue, line width=1.5pt, mark=asterisk, mark options={solid, mediumblue}]
  table[row sep=crcr]{%
1	0\\
2	0\\
3	0.042875\\
4	0.12648125\\
5	0.235169375\\
6	0.35291484375\\
7	0.46771667578125\\
8	0.572186342929687\\
9	0.662726721125\\
10	0.738392608616797\\
12	0.848712421664733\\
14	0.916073426273875\\
16	0.954910263405345\\
18	0.976383071396483\\
20	0.98788229328966\\
22	0.993887496916696\\
24	0.996960498937053\\
26	0.998506888416244\\
28	0.999274192675958\\
30	0.999650392667499\\
};
\addlegendentry{${P}_{{d,2}}{(N)}$-NOW}

\addplot [color=mediumblue, dashed, line width=1.5pt, mark=asterisk, mark options={solid,mediumblue}]
  table[row sep=crcr]{%
1	0\\
2	0\\
3	0\\
4	0\\
5	0\\
6	0.105338515625\\
7	0.2798102890625\\
8	0.456098930820312\\
9	0.735924768\\
10	0.8224260864\\
12	0.92896489472\\
14	0.97507506962432\\
16	0.992094772953088\\
18	0.997683850536812\\
20	0.999363115161527\\
22	0.999833705556784\\
24	0.999958404053453\\
26	0.999989964659351\\
28	0.999997652340838\\
30	0.999999465184212\\
};
\addlegendentry{EW}

\addplot [color=manatee, line width=1.5pt,mark=x, mark options={solid, manatee}]
  table[row sep=crcr]{%
1	0\\
2	0\\
3	0.015625\\
4	0.05078125\\
5	0.103515625\\
6	0.16943359375\\
7	0.24359130859375\\
8	0.321456909179688\\
9	0.399322509765625\\
10	0.474407196044922\\
12	0.609324991703034\\
14	0.718872375786305\\
16	0.802888950100169\\
18	0.864694957272149\\
20	0.908739567535121\\
22	0.939350571667887\\
24	0.960198812657051\\
26	0.974162674603001\\
28	0.983385251462701\\
30	0.989404129316579\\
};
\addlegendentry{${P}_{{d,3}}{(N)}$-NOW}

\addplot [color=manatee,line width=1.5pt, dashed, mark=x, mark options={solid, manatee}]
  table[row sep=crcr]{%
1	0\\
2	0\\
3	0\\
4	0\\
5	0\\
6	0\\
7	0\\
8	0\\
8.99 0\\
9	0.273742509765625\\
10	0.379607196044922\\
12	0.565085631703033\\
14	0.701782430186304\\
16	0.797105004884169\\
18	0.862919722298453\\
20	0.908234162291729\\
22	0.939215010554648\\
24	0.96016417099703\\
26	0.974154170327237\\
28	0.983383233081949\\
30	0.989403663898145\\
};
\addlegendentry{EW}

\end{axis}
\end{tikzpicture}%
		\caption{Decoding probabilities of NOW-UEP and EW-UEP strategies with three classes, and $W=30$ workers.}  \label{UEP_prob_fig}
		\end{figure}
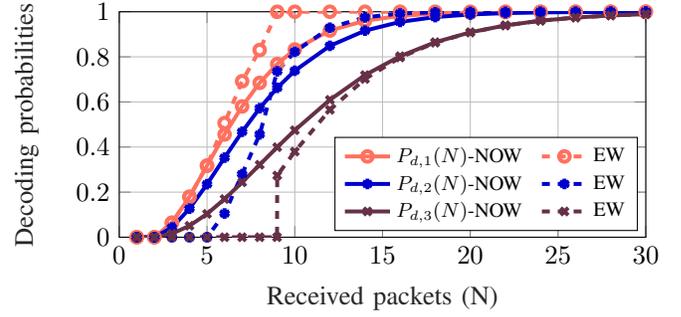

We can bound the performance of the coded matrix multiplication scheme in Sec.  \ref{sec:Approximate Matrix Multiplication with UEP Codes} as follows.  
%
\begin{theorem}{\bf UEP loss for the row-times-column partitioning:}
	\label{thm:NOW-UEP}
	Consider the loss minimization problem in Sec. \ref{sec:System model} for the case in which the set of matrix products $\Dcal(\{\Av,\Bv\})$ that satisfy Assumption \ref{asm_iid}.
	The expected value of the loss  in \eqref{eq:loss}
	%
	attainable with the NOW-UEP strategy described in  Sec. \ref{sec:Approximate Matrix Multiplication with UEP Codes} for $\rtc$ partitioning is 
	\begin{equation} \label{mse_t} \tag{16}
	\Ebb\lsb \Lcal (T_{\max})\rsb =  
	\sum_{w}  
	P_{N(T_{\max})}(w) \mathbb{E} [\|\Cv- \hat{\Cv}\|_F^2 | N(T_{\max}) = w ],
	\end{equation}
	with 
	\begin{align} 
	\mathbb{E} [\|\Cv-& \hat{\Cv}\|_F^2 | N(t)=w ]  \nonumber \\
	&= UHQ
	\sum_{l } 
	k_l \left(1-P_{d,l}(w)\right)
     \sigma_{l,A}^2 \sigma_{l,B}^2, \tag{17}
	\end{align}
	where $k_l$ is the number of blocks in the $l$-th importance level of $\Cv$, and the expectation in \eqref{mse_t} is taken over the random entries of $\Av$, $\Bv$.
\end{theorem}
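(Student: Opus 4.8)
The plan is to first establish the outer decomposition \eqref{mse_t} by conditioning, and then to derive the conditional formula (17) through a block-wise expansion of the Frobenius error. First I would apply the law of total expectation, conditioning on the number $N(T_{\max})=w$ of sub-products delivered by the deadline. Because the response times $T_w$ are i.i.d.\ with CDF $F$, the count $N(T_{\max})$ is binomial and its pmf is exactly the $P_{N(T_{\max})}(w)$ given in (13); summing the conditional losses against this pmf yields \eqref{mse_t} at once, with no randomness beyond that already separated out.

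For the conditional term I would expand the loss $\Lcal=\|\Cv-\hat{\Cv}\|_F^2$ from \eqref{eq:loss} over the block structure of $\Cv$. Since the decoder places $\hat{\Cv}_{np}=\Cv_{np}$ whenever block $(n,p)$ is recovered and sets it to the all-zero matrix otherwise, the error splits into per-block contributions,
$$\|\Cv - \hat{\Cv}\|_F^2 = \sum_{n,p} \onev\big((n,p)\text{ not recovered}\big)\,\|\Cv_{np}\|_F^2,$$
because recovered blocks contribute nothing and unrecovered blocks contribute their full norm. The key structural fact from the NOW-UEP construction in Sec.~\ref{sec:UEP codes} is that decoding is all-or-nothing within an importance level: with a large field size, level $l$ is recoverable precisely when at least $k_l$ coded symbols fall in that level, so all $k_l$ blocks of level $l$ share a single decoding event whose probability is $P_{d,l}(w)$ from \eqref{dec_prob}. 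Regrouping by importance level then gives $\sum_l \onev(\text{level }l\text{ not decoded})\sum_{(n,p)\in l}\|\Cv_{np}\|_F^2$, where the inner sum runs over the blocks belonging to level $l$.

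The final step is to take the expectation and invoke independence. The decoding indicator depends only on the random window selection and the coding coefficients $\alpha^w,\beta^w$, which are independent of the matrix entries; hence the expectation factors into $\big(1-P_{d,l}(w)\big)$ times the expected block energy. Under Assumption~\ref{asm_iid} the zero-mean, uncorrelated entries make the cross terms vanish, so that $\Ebb[\|\Cv_{np}\|_F^2]=UHQ\,\sigma_{l,A}^2\sigma_{l,B}^2$ for every block at level $l$, exactly as in the single-block computation (12). Multiplying by the $k_l$ blocks per level and summing over $l$ produces (17).

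I expect the main obstacle to be justifying the all-or-nothing decoding and the resulting clean ``$k_l$ blocks times $(1-P_{d,l}(w))$'' accounting. Strictly, this requires the large-field-size regime in which $k_l$ random linear combinations over level $l$ are invertible with probability tending to one, which is precisely what makes \eqref{dec_prob} tight; outside this regime one obtains only a bound, and partial recovery inside a level would break the per-level factorization. I would therefore carry out the argument under the same large-field convention already adopted for \eqref{dec_prob}, and emphasize that the independence of the coding randomness from the data randomness is what allows the expectation to separate into the decoding-probability factor and the block-energy factor.
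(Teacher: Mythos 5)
Your proposal is correct and follows what is essentially the paper's intended argument; note that the paper actually states Theorem \ref{thm:NOW-UEP} without an explicit proof, and only proves the column-times-row analogue (Theorem \ref{thm:NOW-UEP cXr}), whose proof reduces to the same conditioning-plus-blockwise-expectation step after applying the triangle and Cauchy--Schwarz inequalities. Your write-up correctly supplies the details the paper leaves implicit: the binomial pmf for $N(T_{\max})$, the \emph{exact} blockwise decomposition $\|\Cv-\Chv\|_F^2=\sum_{n,p}\|\Cv_{np}-\Chv_{np}\|_F^2$ (valid because in $\rtc$ partitioning the sub-products occupy disjoint positions in $\Cv$), the independence of the coding randomness from the matrix entries, the all-or-nothing per-level decoding of NOW-UEP under the large-field convention attached to \eqref{dec_prob}, and the per-block energy $UHQ\,\sigma_{l,A}^2\sigma_{l,B}^2$. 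You have also correctly isolated the structural reason why Theorem \ref{thm:NOW-UEP} is an equality while Theorem \ref{thm:NOW-UEP cXr} is only an upper bound: in $\ctr$ partitioning the sub-products overlap as summands of $\Cv$, which forces the triangle and Cauchy--Schwarz steps and the extra factor $M$, whereas in $\rtc$ no such inequality is needed.
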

Note that, since \eqref{dec_prob} is in fact an upper bound on the correct recovery probability, applying it to \eqref{mse_t} results in a lower bound on the expected loss, however, this bound is tight as the field size tends to infinity, i.e., the lower bound on the loss is asymptotically achievable. 
The analog of Theorem \ref{thm:NOW-UEP} for the EW-UEP can be obtained using corresponding decoding probabilities $P_{d,l}(N)$'s from the results in \cite[Eq. 6-9]{ vukobratovic2012unequal}. They are not presented here for brevity.
\begin{remark}
Note that, in  Theorem \ref{thm:NOW-UEP}, there exists a ``matching'' between the probabilistic structure of the matrices to be multiplied and their $\rtc$ block partitioning.
In reality, one would not observe such a neat organization of the matrix values, and instead would have to fit the row/column weight distribution in the data to design the UEP code resulting in the minimal loss.
\end{remark}

\subsection{ Column-times-Row Case}
We now consider the case of column-times-row multiplication with partitioning \eqref{eq:sub-blocks cXr}. 
With a similar sub-block partitioning, we assume for simplicity that the entries of the matrix are zero mean and with variance  $\sigma_{A_m}^2$ and $\sigma_{B_m}^2$ for the $m$-th sub-blocks of \eqref{eq:sub-blocks cXr}, respectively, for $\Av$ and $\Bv$. 
Since $\Av_m \in \mathbb{R}^{U \times H}$ and $\Bv_m \in \mathbb{R}^{H \times Q}$ for $m \in [M]$, the resulting multiplication is $\Cv_{m} \in  \mathbb{R}^{U \times Q}$ with the same dimension of the original multiplication result $\Cv$. 
Note that, the Frobenius norm of  $\Cv_{m}$ can be easily calculated as  $\mathbb{E} \Big[ \| \Cv_{m}  \|_F^2 \Big]  = UHQ \sigma_{A_m}^2 \sigma_{B_m}^2$.
Similar to the previous section, the decoding probability \eqref{dec_prob} as  given in \cite[Eq. 5]{vukobratovic2012unequal} is used for $\ctr$ matrix multiplication with NOW-UEP codes. 
Same approach can be extended to the EW-UEP code by using \cite[Eq. 6-9]{vukobratovic2012unequal}. 
Assuming that there are $W$ workers where $N(t)$ encoded matrix products are received up to time $t$, we have the following analog of Theorem \ref{thm:NOW-UEP} which can be used with both NOW- and EW-UEP schemes by using the relevant decoding probabilities.
\begin{theorem}{\bf UEP loss for the column-times-row multiplication:}
	\label{thm:NOW-UEP cXr}
	Consider the loss minimization problem in Sec. \ref{sec:System model} for the case in which the set of matrix product $\Dcal(\{\Av,\Bv\})$ is the set of matrices under Assumption \ref{asm_iid}. 
	The expected value of the loss  in \eqref{eq:loss}
	with the NOW-UEP strategy described in  Sec. \ref{sec:Approximate Matrix Multiplication with UEP Codes} for $\ctr$ partitioning is 
	\begin{equation} \label{mse_t_cXr} \tag{18}
	\Ebb\lsb \Lcal (T_{\max})\rsb =  
	\sum_{w}  
	P_{N(T_{\max})}(w) \mathbb{E} [\|\Cv- \hat{\Cv}\|_F^2 | N(T_{\max}) = w ],
	\end{equation}
	with 
	\begin{align} 
	\mathbb{E} [\|\Cv-& \hat{\Cv}\|_F^2 | N(t)=w ]  \nonumber \\
	&\leq MUHQ 
	\sum_{l} 
	k_l \left(1-P_{d,l}(w)\right)
	 \sigma_{l,A}^2 \sigma_{l,B}^2, \tag{19}
	\end{align}
		where $k_l$ is the number of blocks in the $l$-th importance level of $\Cv$, and the expectation in \eqref{mse_t_cXr} is taken over the random entries of $\Av$, $\Bv$.
\end{theorem}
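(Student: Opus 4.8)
The plan is to mirror the structure of Theorem~\ref{thm:NOW-UEP}, adapting it to the $\ctr$ setting where the essential difference is that each sub-product $\Cv_m \in \mathbb{R}^{U\times Q}$ has the full dimension of $\Cv$ and the true product is the \emph{sum} $\Cv = \sum_{m\in[M]} \Cv_m$, rather than a placement of disjoint blocks. First I would condition on the event $\{N(t)=w\}$, i.e.\ that exactly $w$ of the coded products have been received by the deadline, and express $\Ebb[\Lcal(T_{\max})]$ as the weighted average over $w$ against $P_{N(T_{\max})}(w)$ from~(13); this is exactly the outer decomposition already written in~\eqref{mse_t_cXr}, and it follows immediately from the law of total expectation since $N(T_{\max})$ is independent of the matrix entries. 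The real content is the conditional bound~(19).

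The key step is to analyze the conditional error $\Ebb[\|\Cv-\hat\Cv\|_F^2 \mid N(t)=w]$. I would write $\Cv-\hat\Cv = \sum_{m} (\Cv_m - \hat\Cv_m)$, where $\hat\Cv_m$ equals $\Cv_m$ when the corresponding rank-one term is recoverable from $\Wcal(T_{\max})$ and is the all-zero matrix otherwise. Here is where the $\ctr$ case departs from the $\rtc$ case: because the unrecovered terms are summed rather than placed in disjoint positions, expanding $\|\sum_m (\Cv_m-\hat\Cv_m)\|_F^2$ produces cross terms $\Ebb[\langle \Cv_m - \hat\Cv_m, \Cv_{m'} - \hat\Cv_{m'}\rangle]$ for $m\neq m'$. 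The plan is to argue these cross terms do not contribute: under Assumption~\ref{asm_iid} the sub-blocks $\Av_m,\Bv_m$ across distinct $m$ are independent with zero mean, so distinct outer products $\Cv_m$ are uncorrelated in expectation. This removes the cross terms and leaves $\Ebb[\|\Cv-\hat\Cv\|_F^2\mid N=w] = \sum_m \Ebb[\|\Cv_m\|_F^2]\,\Pr(\text{term } m \text{ not recovered}\mid N=w)$.

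I would then group the $M$ terms by their importance level $l$, using $k_l$ for the number of level-$l$ blocks and the decoding probability $P_{d,l}(w)$ from~\eqref{dec_prob} for the probability a level-$l$ term is recovered, so its complement $1-P_{d,l}(w)$ is the miss probability. Substituting $\Ebb[\|\Cv_m\|_F^2]=UHQ\,\sigma_{m,A}^2\sigma_{m,B}^2$ and writing these in terms of the per-level variances $\sigma_{l,A}^2,\sigma_{l,B}^2$ yields $UHQ\sum_l k_l(1-P_{d,l}(w))\sigma_{l,A}^2\sigma_{l,B}^2$. The extra factor of $M$ in~(19), producing $MUHQ$, I would account for by the fact that in the $\ctr$ partitioning each $\Cv_m$ carries the entire $U\times Q$ support, so the per-level contribution scales with the number of summands feeding that support; the inequality (rather than equality) arises precisely because~\eqref{dec_prob} is an upper bound on the decoding probability, making $1-P_{d,l}(w)$ an upper bound on the miss probability, and because the grouping into levels may over-count when fewer than $S(S+1)/2$ distinct levels are realized, as noted in the $\ctr$ importance-level discussion.

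The main obstacle I anticipate is justifying the vanishing of the cross terms cleanly and pinning down the exact origin of the factor $M$. The zero-mean independence across $m$ makes $\Ebb[\langle \Cv_m,\Cv_{m'}\rangle]=0$ plausible, but one must be careful that $\hat\Cv_m$ depends on the random recovery pattern (through $\Wcal(T_{\max})$), which is itself a function of the channel/response times and thus independent of the matrix entries; conditioning on $N(t)=w$ and on the realized recovery set first, then taking expectation over the entries, is the safe order of operations and is what makes the decorrelation rigorous. Reconciling the stated $MUHQ$ prefactor against the $\rtc$ analogue's $UHQ$ — and confirming it is consistent with $\Ebb[\|\Cv_m\|_F^2]=UHQ\sigma^2\sigma^2$ summed over the $M$ rank-one contributions — is the one bookkeeping point I would verify most carefully.
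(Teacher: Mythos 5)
Your route differs from the paper's, and the comparison is instructive. The paper's proof is purely deterministic at the matrix level: it writes $\Cv-\Chv=\sum_{m}(\Cv_m-\hat{\Cv}_m)$, applies the triangle inequality, squares, and then applies Cauchy--Schwarz in the form $\left(\sum_{m} a_m\right)^2\le M\sum_{m} a_m^2$; this is exactly, and only, where the factor $M$ in the $MUHQ$ prefactor comes from. Only afterwards does it take expectations over the entries conditioned on $N(t)=w$. Your proof instead expands $\|\sum_{m}(\Cv_m-\hat{\Cv}_m)\|_F^2$ and kills the cross terms using the zero-mean entries, independent across $m$ (and your order of conditioning --- recovery pattern first, matrix entries second --- is the right way to make that rigorous). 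That argument is sound, and it actually buys you more than the theorem claims: it yields the conditional loss with prefactor $UHQ$ and no factor $M$ at all, i.e.\ a result tighter than (19) by a factor of $M$, from which (19) follows a fortiori since $M\ge 1$ and every summand is nonnegative.

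Where your write-up goes astray is in the reconciliation of the factor $M$, which you flagged as your main worry. It does not come from ``each $\Cv_m$ carrying the full $U\times Q$ support'' --- the number of summands per level is already counted by $k_l$ in your own derivation --- nor from any over-counting when fewer than $S(S+1)/2$ levels are realized; it is purely the slack introduced by the paper's Cauchy--Schwarz step, which your decorrelation argument avoids. Had you observed that $M\ge 1$ suffices, your proof would close in one line. Separately, your explanation of the direction of the inequality is backwards: since \eqref{dec_prob} \emph{upper}-bounds the decoding probability, $1-P_{d,l}(w)$ \emph{lower}-bounds the miss probability, so substituting it tends to produce a lower bound on the loss (as the paper itself notes after Theorem \ref{thm:NOW-UEP}, tight as the field size grows); the ``$\leq$'' in (19) is attributable to the Cauchy--Schwarz slack, not to the decoding-probability bound.
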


\begin{proof}
Consider the partitioning given in \eqref{eq:sub-blocks cXr}.
\begin{align} 
\|\Cv- \hat{\Cv}\|_F &\stackrel{\text{(a)}}{\leq} 
\sum_{m}
\|\Cv_{m}- \hat{\Cv}_{m}\|_F, \tag{20}\\
\|\Cv- \hat{\Cv}\|_F^2 &\leq \left(
\sum_{m} 
\|\Cv_{m}- \hat{\Cv}_{m}\|_F \right)^2, \tag{21}\\ &\stackrel{\text{(b)}}{\leq} M 
 \sum_{m} 
\|\Cv_{m}- \hat{\Cv}_{m}\|_F^2, \tag{22}
\end{align}
where (a) is the result of triangular inequality, and (b) follows Cauchy-Schwartz inequality. The proof is concluded by taking the expectation on the random entries by conditioning on the number of received packets at time $t$. 
\end{proof}

\section{Matrix Approximation with Synthetic Data}
\label{sec:Numerical Examples}

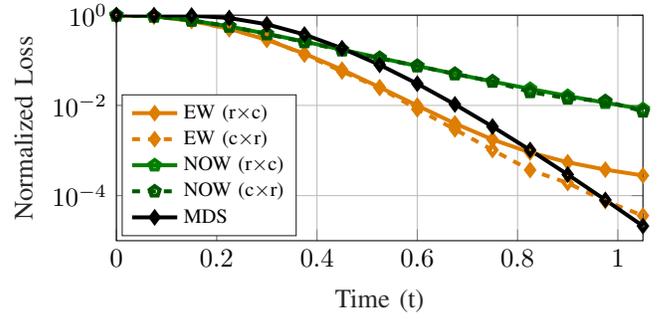
\begin{figure}[t]
	\centering
	\begin{tikzpicture}
\definecolor{mycolor1}{rgb}{0.63529,0.07843,0.18431}%
\definecolor{mycolor2}{rgb}{0.00000,0.44706,0.74118}%
\definecolor{mycolor3}{rgb}{0.00000,0.49804,0.00000}%
\definecolor{mycolor3x}{rgb}{0.00000,0.34804,0.00000}%
\definecolor{mycolor4}{rgb}{0.87059,0.49020,0.00000}%
\definecolor{mycolor5}{rgb}{0.00000,0.44700,0.74100}%
\definecolor{mycolor6}{rgb}{0.74902,0.00000,0.74902}%
\definecolor{mycolor_bt}{rgb}{0.00000,0.24706,0.54118}%

\begin{semilogyaxis}[%
width=7cm,
height=3cm,
at={(0.88in,0.458in)},
scale only axis,
xmin=0,
xmax=1.05,
xlabel style={font=\color{white!15!black}},
xlabel={Time (t)},
ymode=log,
ymin=1e-05,
ymax=1,
yminorticks=true,
ylabel style={font=\color{white!15!black}},
ylabel={Normalized Loss},
axis background/.style={fill=white},
title style={font=\bfseries},
xmajorgrids,
ymajorgrids,
yminorgrids,
legend style={at={(0.01,0.02)}, font=\small,nodes={scale=0.85, transform shape},anchor=south west, legend cell align=left, align=left, draw=white!15!black}
]
\addplot [color=mycolor4, line width=1.5pt, mark=diamond, mark options={solid, mycolor4}]
  table[row sep=crcr]{%
0	1\\
0.075	0.945207686112801\\
0.15	0.751054614619548\\
0.225	0.500525497220872\\
0.30	0.284085945665539\\
0.375	0.140317662213895\\
0.45	0.0619527928817929\\
0.525	0.0252561413118881\\
0.60	0.00991739424640335\\
0.675	0.00397892569345011\\
0.75	0.00175602273338441\\
0.825	0.000905060622182875\\
0.9	0.000548681346237125\\
0.975	0.000376442860446157\\
1.05	0.000279687729728531\\
};
\addlegendentry{EW (r$\times$c)}

\addplot [color=mycolor4, dashed, line width=1.5pt, mark=diamond, mark options={solid, mycolor4}]
  table[row sep=crcr]{%
0	1\\
0.075	0.939425575929824\\
0.15	0.759541593248981\\
0.225	0.490399537070972\\
0.30	0.297824422076944\\
0.375	0.134056358726003\\
0.45	0.0565991636743375\\
0.525	0.0244467825176527\\
0.60	0.00823205817200589\\
0.675	0.00294350044051513\\
0.75	0.00102742116022025\\
0.825	0.000370171952288604\\
0.9	0.000190508579183213\\
0.975	7.5868667165015e-05\\
1.05	3.60581275621305e-05\\
};
\addlegendentry{EW (c$\times$r)}

\addplot [color=mycolor3, line width=1.5pt, mark=pentagon, mark options={solid, mycolor3}]
  table[row sep=crcr]{%
0	1\\
0.075	0.945542835188147\\
0.15	0.768893740051912\\
0.225	0.562998365040324\\
0.30	0.388640148130495\\
0.375	0.260185069151791\\
0.45	0.171874608311479\\
0.525	0.113255721376389\\
0.60	0.0749708861658958\\
0.675	0.0500882277135346\\
0.75	0.0338795068726435\\
0.825	0.0232479435756993\\
0.9	0.0162048036748147\\
0.975	0.011482846394433\\
1.05	0.00827499417912511\\
};
\addlegendentry{NOW (r$\times$c)}

\addplot [color=mycolor3x, dashed, line width=1.5pt, mark=pentagon, mark options={solid, mycolor3x}]
  table[row sep=crcr]{%
0	1\\
0.075	0.948726423133293\\
0.15	0.769218492769717\\
0.225	0.569792751260768\\
0.30	0.382334816896922\\
0.375	0.254949324146805\\
0.45	0.167615423585439\\
0.525	0.110733845441812\\
0.60	0.0749484025554908\\
0.675	0.0497839856723916\\
0.75	0.0341127526407141\\
0.825	0.0198981001872932\\
0.9 	0.0143092614495887\\
0.975	0.0124803867549797\\
1.05	0.00737413092978957\\
};
\addlegendentry{NOW (c$\times$r)}

\addplot [color=black, line width=1.5pt, mark=diamond, mark options={solid, black}]
  table[row sep=crcr]{%
0	1\\
0.075	0.999810788415944\\
0.15	0.982178724511079\\
0.225	0.866792070087092\\
0.30	0.62999536462713\\
0.375	0.372995622444694\\
0.45	0.184922984196325\\
0.525	0.0793464082436098\\
0.60	0.0303086382082969\\
0.675	0.0105377104412094\\
0.75	0.00339238705121272\\
0.825	0.00102470972665223\\
0.9	0.000293450427090071\\
0.975	8.03291076830972e-05\\
1.05	2.11581037760224e-05\\
};
\addlegendentry{MDS}

\end{semilogyaxis}

\end{tikzpicture}%
	    \caption{Normalized loss of the estimator using UEP codes with three classes with exponential latency model. \label{three_cls_mse}}
\end{figure}

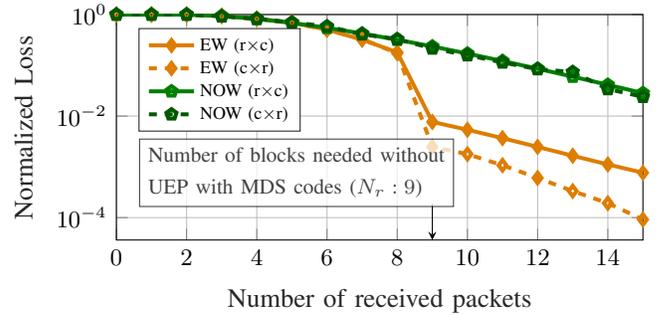
\begin{figure}[t]
	\centering
		\begin{tikzpicture}
\definecolor{mycolor1}{rgb}{0.63529,0.07843,0.18431}%
\definecolor{mycolor2}{rgb}{0.00000,0.44706,0.74118}%
\definecolor{mycolor3}{rgb}{0.00000,0.49804,0.00000}%
\definecolor{mycolor3x}{rgb}{0.00000,0.34804,0.00000}%
\definecolor{mycolor4}{rgb}{0.87059,0.49020,0.00000}%
\definecolor{mycolor5}{rgb}{0.00000,0.44700,0.74100}%
\definecolor{mycolor6}{rgb}{0.74902,0.00000,0.74902}%
\definecolor{mycolor_bt}{rgb}{0.00000,0.24706,0.54118}%

\begin{semilogyaxis}[%
font=\small,
width=7cm,
height=3cm,
scale only axis,
xmin=0,
xmax=15,
xlabel style={font=\color{white!15!black}},
xlabel={Number of received packets},
ymin=0,
ymax=1,
ylabel style={font=\color{white!15!black}},
ylabel={Normalized Loss},
axis background/.style={fill=white},
xmajorgrids,
ymajorgrids,
legend style={at={(0.35,0.94)},font=\small,nodes={scale=0.75, transform shape}, legend cell align=left, align=left, draw=white!15!black,
        /tikz/column 2/.style={
            column sep=5pt,
        },
    },
]
\addplot [color=mycolor4, line width=1.5pt, mark=diamond, mark options={solid, mycolor4}]
  table[row sep=crcr]{%
0	1\\
1	1\\
2	1\\
3	0.937667397126857\\
4	0.825468711955198\\
5	0.690830289749209\\
6	0.504483098596908\\
7	0.319375167967799\\
8	0.180223458397059\\
9	0.00766771990056992\\
10	0.0053810910610386\\
11	0.00367554804333531\\
12	0.00247088169127778\\
13	0.00165395129077423\\
14	0.0011151731254597\\
15	0.00076517307267959\\
};
\addlegendentry{EW (r$\times$c)}

\addplot [color=mycolor4, dashed, line width=1.5pt, mark=diamond, mark options={solid, mycolor4}]
  table[row sep=crcr]{%
0	1\\
1	1\\
2	1\\
3	0.944012616022131\\
4	0.828035023034128\\
5	0.680020579656033\\
6	0.504311427759591\\
7	0.327595337668321\\
8	0.173500683036074\\
9	0.00248047543168654\\
10	0.00177955783168587\\
11	0.00107353014997552\\
12	0.000604686723597005\\
13	0.000333324742079006\\
14	0.00019261915276529\\
15	9.16811255447639e-05\\
};
\addlegendentry{EW (c$\times$r)}

\addplot [color=mycolor3, line width=1.5pt, mark=pentagon, mark options={solid, mycolor3}]
  table[row sep=crcr]{%
0	1\\
1	1\\
2	1\\
3	0.936596816411006\\
4	0.822302509435111\\
5	0.684927794771122\\
6	0.547310075595021\\
7	0.42321382840558\\
8	0.318750768771067\\
9	0.234984198275905\\
10	0.170196957186756\\
11	0.121467383334207\\
12	0.085618888339811\\
13	0.0597168226101168\\
14	0.0412775444729554\\
15	0.0283131305876915\\
};
\addlegendentry{NOW (r$\times$c)}

\addplot [color=mycolor3x, dashed, line width=1.5pt, mark=pentagon, mark options={solid, mycolor3x}]
  table[row sep=crcr]{%
0	1\\
1	1\\
2	1\\
3	0.92028908118834\\
4	0.826982980545032\\
5	0.700731716374683\\
6	0.583907467743197\\
7	0.407950661845499\\
8	0.327729929136885\\
9	0.21470082075396\\
10	0.158145181676641\\
11	0.113855151098652\\
12	0.0839981116590052\\
13	0.0737421115055769\\
14	0.0340551159960104\\
15	0.023927992753932\\
};
\addlegendentry{NOW (c$\times$r)}

\end{semilogyaxis}

\begin{axis}[%
width=7cm,
height=3cm,
at={(0in,0in)},
scale only axis,
xmin=0,
xmax=1,
ymin=0,
ymax=1,
axis line style={draw=none},
ticks=none,
axis x line*=bottom,
axis y line*=left
]
\draw[-{stealth}, color=black] (axis cs:0.6,0.16) -- (axis cs:0.6,0.0);
\node[below right, align=left, draw=black,fill= white, opacity=0.75]
at (rel axis cs:0.043,0.45) {\footnotesize Number of blocks needed without\\\footnotesize UEP with MDS codes ($N_r:9$)};


\end{axis}
\end{tikzpicture}%

	    \caption{Normalized loss of the estimator using UEP codes with three classes as a function of number of received packets.}   \label{three_cls_mse_N}
	    \end{figure}

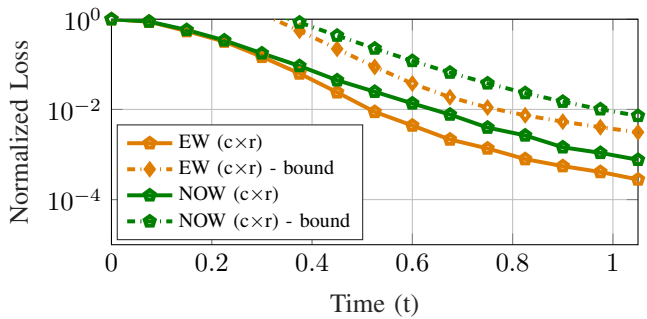
\begin{figure}[t]
	\centering
		\begin{tikzpicture}
\definecolor{mycolor1}{rgb}{0.63529,0.07843,0.18431}%
\definecolor{mycolor2}{rgb}{0.00000,0.44706,0.74118}%
\definecolor{mycolor3}{rgb}{0.00000,0.49804,0.00000}%
\definecolor{mycolor4}{rgb}{0.87059,0.49020,0.00000}%
\definecolor{mycolor5}{rgb}{0.00000,0.44700,0.74100}%
\definecolor{mycolor6}{rgb}{0.74902,0.00000,0.74902}%

\begin{semilogyaxis}[%
width=7cm,
height=3cm,
at={(0.88in,0.407in)},
scale only axis,
xmin=0,
xmax=1.05,
xlabel style={font=\color{white!15!black}},
xlabel={Time (t)},
ymode=log,
ymin=1e-05,
ymax=1,
yminorticks=true,
ylabel style={font=\color{white!15!black}},
ylabel={Normalized Loss},
axis background/.style={fill=white},
title style={font=\bfseries},
xmajorgrids,
ymajorgrids,
yminorgrids,
legend style={at={(0.01,0.02)}, font=\small,nodes={scale=0.85, transform shape},anchor=south west, legend cell align=left, align=left, draw=white!15!black}
]
\addplot [color=mycolor4, line width=1.5pt, mark=pentagon, mark options={solid, mycolor4}]
  table[row sep=crcr]{%
0	1\\
0.075	0.883013026122217\\
0.15	0.544530528350157\\
0.225	0.319186586736746\\
0.3	    0.146808210539521\\
0.375	0.0626712000301525\\
0.45	0.0239527380273664\\
0.525	0.0088202317308249\\
0.6	    0.00439414366846221\\
0.675	0.00215473427827106\\
0.75	0.00136287302892085\\
0.825	0.000790845554651305\\
0.9	    0.000552704409001079\\
0.975	0.000409559273277503\\
1.05	0.000279877306132689\\
};
\addlegendentry{EW (c$\times$r)}

\addplot [color=mycolor4, dashdotted, line width=1.5pt, mark=diamond, mark options={solid, mycolor4}]
  table[row sep=crcr]{%
0	9.00330476818414\\
0.075	7.96758391846098\\
0.15	5.26460984082261\\
0.225	2.82901327387968\\
0.3	1.31637561655316\\
0.375	0.554177942997673\\
0.45	0.220324432746925\\
0.525	0.0875034643674592\\
0.6	0.0374885397154752\\
0.675	0.0186679711141965\\
0.75	0.0110264885244957\\
0.825	0.00740643066599842\\
0.9	0.00535703235565857\\
0.975	0.00403027781488918\\
1.05	0.00310448018392768\\
};
\addlegendentry{EW (c$\times$r) - bound}

\addplot [color=mycolor3, line width=1.5pt, mark=pentagon, mark options={solid, mycolor3}]
  table[row sep=crcr]{%
0	1\\
0.075	0.888058009257436\\
0.15	0.578659808963654\\
0.225	0.340064618153266\\
0.3	0.175445246332409\\
0.375	0.0919559244305209\\
0.45	0.0443509426329577\\
0.525	0.0246848614967616\\
0.6	0.0135880936260154\\
0.675	0.00770532236610828\\
0.75	0.0039548963408007\\
0.825	0.0026518094931641\\
0.9	0.00144700690573497\\
0.975	0.00108908171138753\\
1.05	0.000764831929482263\\
};
\addlegendentry{NOW (c$\times$r)}

\addplot [color=mycolor3, dashdotted, line width=1.5pt, mark=pentagon, mark options={solid, mycolor3}]
  table[row sep=crcr]{%
0	9.00330476818414\\
0.075	7.97388117610188\\
0.15	5.36762469535579\\
0.225	3.08782857105968\\
0.3	1.64439604493896\\
0.375	0.846929941777174\\
0.45	0.433248957770562\\
0.525	0.224209950373777\\
0.6	0.119053665674311\\
0.675	0.0656181186917602\\
0.75	0.0378886391535354\\
0.825	0.0230651890886348\\
0.9	0.0148444260734098\\
0.975	0.0100895333470626\\
1.05	0.00721182160929603\\
};
\addlegendentry{NOW (c$\times$r) - bound}

\end{semilogyaxis}

\end{tikzpicture}%
	    {\caption{Normalized loss of the estimator using UEP codes with c$\times$r multiplication and its upper bound. \label{bound_cxr}} }
\end{figure}

%
%
Let us begin by evaluating the multiplication in the $\rtc$ and  $\ctr$ cases for a class of matrices satisfying Assumption \ref{asm_iid}.
The task is performed with the help of $W=30$ workers whose task completion times are modeled by exponential latency model with parameter $\lambda=1$. 
%
%
We select $N = P = 3$, $U = Q = 300$, and $H = 900$ for the $\rtc$ case, and $ U = Q = 900$, $H = 100$, and $M = 9$ for the $\ctr$ case to be fair in the comparison of two different multiplication schemes in terms of the computational load of the workers.

For $\rtc$ multiplication, as discussed earlier, we classify each row and column blocks of $\Av$ and $\Bv$ with importance levels \textit{high}, \textit{medium}, and \textit{low}. 
The elements of each block are i.i.d. and distributed with $\mathcal{N}(0,10)$, $\mathcal{N}(0,1)$, and $\mathcal{N}(0,0.1)$, for \textit{high}, \textit{medium}, and \textit{low} levels, respectively. %
We assume that both $\Av$ and $\Bv$ have only one instance of row and column from each level, i.e., $N = 3$, $P = 3$ with descending importance levels.
$\Av_1$ and $\Bv_1$ are from the \textit{high} importance level, $\Av_2$ and $\Bv_2$ are from the \textit{medium} importance level, $\Av_3$ and $\Bv_3$ are from the \textit{low} importance level. 
We take the multiplication of (\textit{high} and \textit{high}) and (\textit{high} and \textit{medium}) blocks as \textit{class one}, (\textit{medium} and \textit{medium}) and (\textit{high} and \textit{low}) blocks as \textit{class two}, and the remaining as \textit{class three}. 
With this definition, we have $(k_1, k_2,k_3) = (3,3,3)$ sub-blocks in each class.

For $\ctr$ multiplication, the importance levels are \textit{high}, \textit{medium}, and \textit{low} as in $\rtc$ case with same distributions. 
We assume that $\Av_i$ and $\Bv_i$ are from \textit{high} importance level if $i \in \{1,2,3\}$, from \textit{medium} importance level if $i \in \{4,5,6\}$, and from \textit{low} importance level if $i \in \{7,8,9\}$. 
The multiplication of (\textit{high} and \textit{high}) blocks are considered as \textit{class one}, (\textit{medium} and \textit{medium}) blocks as \textit{class two} and (\textit{low} and \textit{low}) blocks as \textit{class three}, resulting in three sub-blocks in each class.  


\begin{figure*}[t]
	\centering
	\input{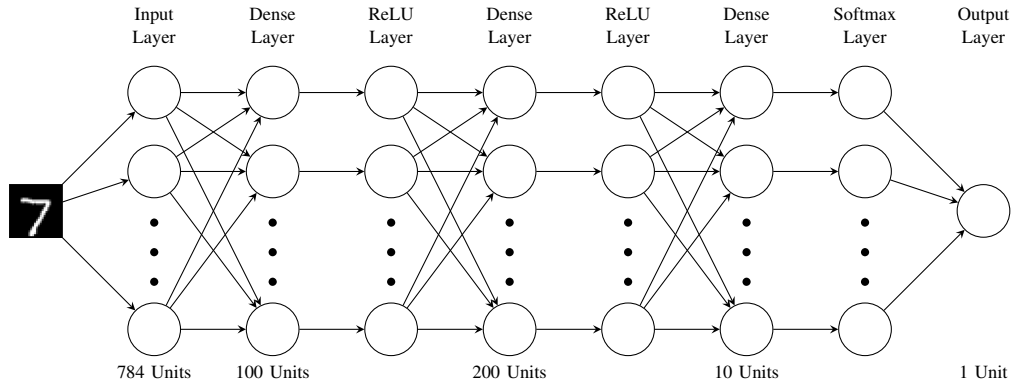}
	\caption{DNN used for classifying the MNIST dataset in Sec. \ref{sec:Back-propagation Matrices}.}  \label{fig:mnist_dnn_sketch}
\end{figure*}

\begin{table}[t]
\caption{UEP coding parameters for the matrix approximation used in Secs. \ref{sec:Numerical Examples} and \ref{sec:Back-propagation Matrices}.} \label{tab:code param}
\centering
\footnotesize
\begin{tabular}{|c|c|c|c|}
\hline
                               & Class 1 & Class 2 & Class 3 \\ \hline
\# of blocks                   & 3      & 3       & 3       \\ \hline
Window selection probs. & 0.40   & 0.35    & 0.25    \\ \hline
\end{tabular}
\end{table}

In our simulations, we select the window selection probabilities for both NOW and EW-UEP strategies as $(\Gamma_1, \Gamma_2, \Gamma_3) = (0.40, 0.35, 0.25)$ for both $\rtc$ and $\ctr$ multiplication as shown in Table \ref{tab:code param}. 
The decoding probabilities for each class are obtained through the formulation given in \cite{vukobratovic2012unequal}, as also depicted for the NOW-UEP and EW-UEP strategies in Fig. \ref{UEP_prob_fig}. 
As expected, the first class has higher decoding probability with both NOW and EW-UEP strategies since we choose a window selection function in which $\Gamma_1$ is greater than others which ensures better protection for the most important class.

In Fig. \ref{three_cls_mse}, these decoding probabilities are used to obtain the normalized expected loss values of $\rtc$ and $\ctr$ as a function of time $t$ along with the performance obtained with the Maximum Distance Separable (MDS) codes which are also used in \cite{lee2017speeding} for coded computation. 
With utilizing MDS codes, the perfect reconstruction is achieved when the number of received packets are equal to total number of sub-blocks. 
Until time $t = 0.44$, the UEP protection with NOW scheme performs better than that of MDS with both $\rtc$ and $\ctr$, since the UEP coding strategy enables early recovery of important classes with a small number of received packets. 
With the EW-UEP protection, we have higher protection for more important classes which gives a better approximation than the MDS coding until time $t = 0.825$ and $t = 0.975$, respectively, with $\rtc$ and $\ctr$ multiplication schemes. 
This also shows that $\ctr$ multiplication scheme is more efficient than $\rtc$ multiplication scheme.
In other words, if we are interested in an earlier recovery of certain important parts, using the UEP coding approach for matrix approximation is highly advantageous, especially with EW-UEP and $\ctr$ multiplication scheme. 
After time $t= 0.975$, the MDS code starts to perform better than  all the others since it can fully recover $\Cv$ after receiving nine packets. 
If we wait long enough, the UEP strategy will also fully recover the desired matrix product.

For further interpretation, we give the normalized loss values of matrix multiplication with MDS coding and approximate matrix multiplication using NOW and EW-UEP coding with both $\rtc$ and $\ctr$ in Fig. \ref{three_cls_mse_N} as a function of the number of received packets. 
The matrix multiplication with MDS codes needs to receive $\sum_{l} k_l$ packets to fully recover the result, where $k_l$ is the number of packets in the $l$-th level. 
Receiving less than $\sum_{l } k_l$ will not provide any partial information, and results in no recovery, hence the normalized loss with MDS coding is unity until it receives nine packets (the minimum required for recovery). 
However, matrix product approximation with NOW and EW-UEP coding strategies start to recover more important classes after receiving only very few packets, and continue to provide additional partial information after each received block. 
It is also important to note that the loss formulation given in Theorem \ref{thm:NOW-UEP} for $\rtc$ is an exact result that is achievable with large field sizes. Hence the simulation results given in both Figs. \ref{three_cls_mse} and \ref{three_cls_mse_N} for $\rtc$ case match our theoretical expectations.

Furthermore, in Fig. \ref{bound_cxr}, we compare the upper bound of the loss of $\ctr$ multiplication given in Theorem \ref{thm:NOW-UEP cXr} with simulation results using NOW and EW-UEP codes. 
As shown in the figure, the upper bound is not tight, however it can be used to illustrate the behavior of the loss curve. 
Even though the bounds are not tight, they successfully reflect the loss behavior of the proposed scheme. Thus, one can use these bounds to examine the difference between NOW-UEP and EW-UEP schemes or their performance with different corresponding window selection probabilities.
It is also worth noting that we choose the window selection distributions for the UEP codes arbitrarily. 
As a further improvement, this distribution can be optimized to minimize the expected loss.

\section{DNN Back-Propagation with Approximate Matrix Multiplication}
\label{sec:Back-propagation Matrices}

In this section, we consider an application of the approximate matrix multiplication approach in Sec.  \ref{sec:Approximate Matrix Multiplication with UEP Codes} to the training of DNNs on two classification datasets: the MNIST handwritten digits \cite{lecun2010mnist} and the CIFAR-10 images \cite{Krizhevsky09learningmultiple}. 
The next sub-section introduces the DNN settings, Sec. \ref{subsec:Gradient Sparsification} numerically validates the sparsity in gradient matrices, and Sec. \ref{subsec: DNN performance} presents the relevant training results.

\subsection{DNN Settings}
\label{subsec:DNN setting}

In this section, we consider the training for the MNIST and CIFAR-10 datasets. 
The MNIST dataset requires a quite simple DNN structure with three dense layers followed by a softmax layer, and training requires only a few epochs to attain a high classification accuracy.
 Unlike the MNIST, the CIFAR-10 dataset has more features and thus, requires both a more complex DNN architecture and more epochs for learning. 
 In this DNN model, we use two 2D Convolution Layers followed by a 2D Max Pooling. After the Max Pooling, the processed images pass through the fully connected layers, where we employ three dense layers

A conceptual representation of the network structure for the MNIST digit classification task is provided in Fig. \ref{fig:mnist_dnn_sketch}. The overall network structure for the CIFAR-10 classification task is defined in Table \ref{tab:cifar_model}. Further parameters utilized in the numerical evaluations are provided in Table \ref{tab:DNN_parameters}. 

\begin{table}[b]
	\footnotesize
	\centering
	\caption{Parameters and hyperparameters used for the training of the DNN models.}
	\label{tab:DNN_parameters}
	\begin{tabular}{|c|c|c|}
		\hline
		& MNIST & CIFAR-10 \\ \hline
        Training Samples & 60k & 50k \\ \hline
        Optimizer & \multicolumn{2}{|c|}{SGD} \\ \hline
        Learning Rate & \multicolumn{2}{|c|}{0.01} \\ \hline
        Loss & \multicolumn{2}{|c|}{Categorical Cross Entropy} \\ \hline
        Epochs & 3 & 120 \\ \hline
        Mini-Batch Sizes & \multicolumn{2}{|c|}{64} \\ \hline
	\end{tabular}
\end{table}

\begin{table}[b]
	\footnotesize
	\centering
	\caption{A summary of the model layers of the DNN used for CIFAR-10 classification.
	}
	\label{tab:cifar_model}
	\begin{tabular}{|c|c|c|c|}
		\hline
		Layer Name  & Kernel/Weights & Padding & Activation Layer \\ \hline
        Conv2D 1    & $3\times3\times32$ & Same & ReLU \\ \hline
        Conv2D 2    & $3\times3\times32$ & Valid & ReLU \\ \hline 
        MaxPooling2D  & $2\times2$ & - & - \\ \hline
        Flatten & - & - & - \\ \hline 
        Dense 1 & $7200\times512$ & - & ReLU \\ \hline
        Dense 2 & $512\times256$ & - & ReLU \\ \hline
        Dense 3 & $256\times12$ & - & Softmax \\ \hline
	\end{tabular}
\end{table}

	



Let us now denote the network as $D$ and the total number of layer as $I$. Let us further denote the weights, bias, inputs, outputs, and gradients of the corresponding layer as $\Vv_{i}$, $\bv_i$, $\Xv_{i}$, $\Ov_{i}$, and $\Gv_{i}$. From this we can define the output of each layer as
\ea{
    \Ov_{i} = D_{i}(\Xv_{i}),
    \tag{23}    
}
and the output of the network as
\ea{
    \Ov = D(\Xv),
    \tag{24}    
}
where $\Xv = \Xv_{1}$ is the initial input and $\Ov = \Ov_{I}$ is the final output. Note that $\Ov_{i} = \Xv_{i+1}$. Then the forward propagation of a dense layer can be defined as
\ea{ \label{eq:foward_prop_eq}
    D_{i}(\Xv_{i})  &= \Xv_{i} \Vv_{i} + \bv_{i}.
    \tag{25}    
}

However, for the back-propagation we must consider two matrix multiplications
\ea{ \label{eq:backprop_eq1}
    \Gv_{i} &=  \Gv_{i+1} \Vv_{i}^{T},
    \tag{26}    
}
and
\ea{ \label{eq:backprop_eq2}
    \Vv_{i}^{*} &=  \Xv_{i}^{T} \Gv_{i+1},
    \tag{27}    
}
where $\Vv_{i}^{*}$ is used for updating $\Vv_{i}$ and \eqref{eq:backprop_eq1} for calculating the $G_i$ used in $D_{i-1}$ for \eqref{eq:backprop_eq2}. $\Gv_{L}$ is calculated from the derivative of the loss between $\Ov_{L}$ and $\Yv$, the ground truth.

%
%


%


The dimensions of these matrices are defined in Table \ref{tab:backprop_dimensions} for both the MNIST and CIFAR-10 DNN models. The reason we focus on these matrices will be explained in Sec \ref{subsec:Gradient Sparsification}.

\begin{table}
	\footnotesize
	\centering
	\caption{A summary of the back-propagation matrices for the DNN models. For $^+$, see Remark \ref{remark:backprop_dense}.}
	\label{tab:backprop_dimensions}
	\begin{tabular}{|c|c|c|}
		\hline
		\multicolumn{3}{|c|}{MNIST} \\ \hline
		Layer       & $\Gv_{i}$ & $\Vv_{i}^{*}$            \\ \hline
		Dense 1     &  $(64\times100)\cdot(100\times784)$   & $(784\times64)\cdot(64\times100)$     \\ \hline
		Dense 2     & $(64\times200)\cdot(200\times100) $  &$(100\times64)\cdot(64\times200)$     \\ \hline
		Dense 3    & $(64\times10)\cdot(10\times200)$   & $(200\times64)\cdot(64\times10) $     \\ \hline
		
		\multicolumn{3}{|c|}{CIFAR-10} \\ \hline
		Layer       & $\Gv_{i}$ & $\Vv_{i}^{*} $            \\ \hline
		Conv 1/2    & + & + \\ \hline         
		Dense 1     &  $(64\times512)\cdot(512\times7200)$   & $(7200\times64)\cdot(64\times512)$     \\ \hline
		Dense 2     & $(64\times256)\cdot(256\times512) $  &$(512\times64)\cdot(64\times256)$     \\ \hline
		Dense 3    & $(64\times10)\cdot(10\times256)$   & $(256\times64)\cdot(64\times10) $     \\ \hline
	\end{tabular}
\end{table}

\begin{remark}
In our simulations we apply the proposed approximate matrix multiplication method only to the dense layer's back-propagation. %
An extension of the proposed method to the convolutional layers is left for future research. \label{remark:backprop_dense}
\end{remark}


\subsection{Sparsity of DNN Gradients}
\label{subsec:Gradient Sparsification}
Generally speaking, as the training of a DNN proceeds toward convergence, weights updates become increasingly smaller and thus gradients converge to zero.
In order to introduce resilience toward numerical errors and encourage sparsity in the trained model, gradient sparsification is often applied.
Broadly speaking, gradient sparsification is defined as the setting of some elements of the gradient updates to zero according to some policy.
%
%
For simplicity, let us assume that sparsification takes the form of a simple thresholding, that is 
%
\ea{ \label{eq:sparse_threshold}
    R (x)  &= 
    \begin{cases}
    x, & | x | > \tau \\
    0, &| x | \leq \tau,
    \end{cases}
    \tag{28}    
}
where $\tau \in [Z] $ is the threshold and $x$ is an element of matrix $\Av$/$\Bv$. 
%
The value of $\tau$ is chosen close to the machine precision at the beginning of training and is increased at each epoch. Also, $\tau$ is chosen differently for each layer: making the shallow layers less sparse than the deeper layers. 
As mentioned in  \cite{plancher2019application}, DNNs are fault-tolerant and resilient to these approximations in the gradient evaluations.

In applying the method in Sec. \ref{sec:Approximate Matrix Multiplication with UEP Codes}, we shall rely on the gradient matrix sparsity to define the protection level of each packet. 
%
%
Consider the DNN for the MNIST classification task described in Sec. \ref{subsec:DNN setting}, and use $\tau = 10^{-5}$  for sparcification of the gradient inputs, and $\tau = 10^{-4}$ for the weights and inputs of each layer.
%
This choice of $\tau$ at mini-batch iteration 389 / 937 of the first epoch and for each layer results in the sparsity level as given in Table \ref{tab:Gaussian_modeling_sparseness}. The empirical distribution of the remaining weights very much resembles that of a Gaussian distribution with zero mean and a variance which increases with the depth of the layer.
The Gaussian fitting of these non-sparse values are presented in Fig. \ref{fig:Gaussian_modeling}. 
It can be observed that indeed sparsity increases with the layer depth, as well as the variance of the fitted Gaussian.
This shows that indeed exploiting the variability in the matrix sub-block norms to apply unequal error protection has the potential of drastically improving the training performance.






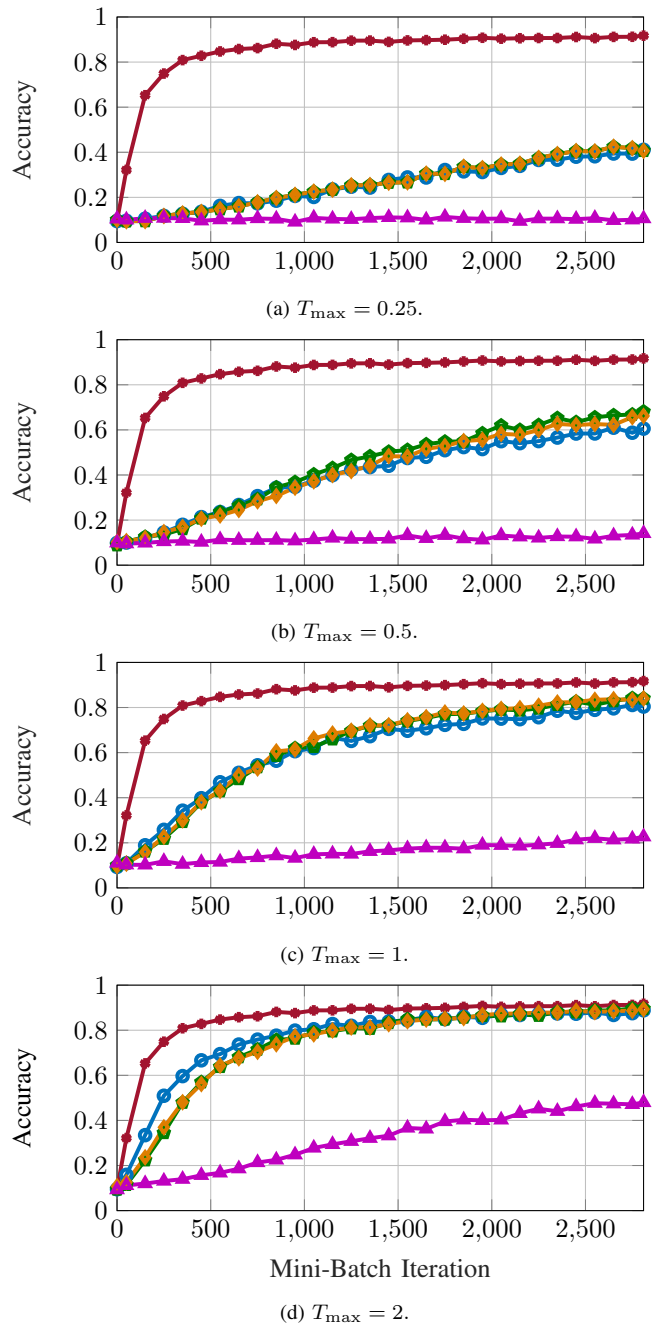
\begin{figure}[t]
	\centering
	\begin{subfigure}[b]{0.5\textwidth}
		\begin{tikzpicture}
\definecolor{mycolor1}{rgb}{0.63529,0.07843,0.18431}%
\definecolor{mycolor2}{rgb}{0.00000,0.44706,0.74118}%
\definecolor{mycolor3}{rgb}{0.00000,0.49804,0.00000}%
\definecolor{mycolor4}{rgb}{0.87059,0.49020,0.00000}%
\definecolor{mycolor5}{rgb}{0.00000,0.44700,0.74100}%
\definecolor{mycolor6}{rgb}{0.74902,0.00000,0.74902}%

\begin{axis}[%
width=7cm,
height=3cm,
scale only axis,
xmin=0,
xmax=2810,
xlabel style={font=\color{white!15!black}},
ymin=0,
ymax=1,
ylabel style={font=\color{white!15!black}},
ylabel={Accuracy},
axis background/.style={fill=white},
xmajorgrids,
ymajorgrids,
legend style={legend cell align=left, align=left, draw=white!15!black, nodes={scale=0.85, transform shape}, at={(0.01,0.65)}, anchor=west, fill opacity=0.8}
]

\addplot [color=mycolor1, line width=1.5pt, mark=asterisk, mark options={solid, mycolor1}]
  table[row sep=crcr]{%
0	0.0945312	\\
50	0.321562	\\
150	0.652969	\\
250	0.749062	\\
350	0.80875	\\
450	0.828125	\\
550	0.847187	\\
650	0.857812	\\
750	0.862031	\\
850	0.88125	\\
950	0.875938	\\
1050	0.887969	\\
1150	0.888281	\\
1250	0.895	\\
1350	0.895312	\\
1450	0.889531	\\
1550	0.896094	\\
1650	0.897344	\\
1750	0.899219	\\
1850	0.903438	\\
1950	0.907813	\\
2050	0.90375	\\
2150	0.905625	\\
2250	0.906719	\\
2350	0.906719	\\
2450	0.91125	\\
2550	0.90625	\\
2650	0.911719	\\
2750	0.912188	\\
2810	0.917188	\\
};

\addplot [color=mycolor2, line width=1.5pt, mark=o, mark options={solid, mycolor2}]
  table[row sep=crcr]{%
0	0.095	\\
50	0.0959375	\\
150	0.105313	\\
250	0.117813	\\
350	0.127188	\\
450	0.135937	\\
550	0.162188	\\
650	0.174687	\\
750	0.173437	\\
850	0.185312	\\
950	0.206563	\\
1050	0.201875	\\
1150	0.235313	\\
1250	0.247188	\\
1350	0.243438	\\
1450	0.277187	\\
1550	0.288438	\\
1650	0.288125	\\
1750	0.32125	\\
1850	0.315312	\\
1950	0.3125	\\
2050	0.330625	\\
2150	0.339687	\\
2250	0.365938	\\
2350	0.367188	\\
2450	0.380312	\\
2550	0.382188	\\
2650	0.394375	\\
2750	0.394062	\\
2810	0.410938	\\
};

\addplot [color=mycolor3, line width=1.5pt, mark=pentagon, mark options={solid, mycolor3}]
  table[row sep=crcr]{%
0	0.100312	\\
50	0.094375	\\
150	0.095	\\
250	0.1175	\\
350	0.128438	\\
450	0.135	\\
550	0.147813	\\
650	0.160312	\\
750	0.177813	\\
850	0.196875	\\
950	0.210938	\\
1050	0.225312	\\
1150	0.234375	\\
1250	0.253125	\\
1350	0.254375	\\
1450	0.265313	\\
1550	0.266875	\\
1650	0.305	\\
1750	0.304375	\\
1850	0.335625	\\
1950	0.329375	\\
2050	0.345313	\\
2150	0.349062	\\
2250	0.374688	\\
2350	0.391875	\\
2450	0.405313	\\
2550	0.40375	\\
2650	0.424375	\\
2750	0.418125	\\
2810	0.406562	\\
};

\addplot [color=mycolor4, line width=1.5pt, mark=diamond, mark options={solid, mycolor4}]
  table[row sep=crcr]{%
0	0.100312	\\
50	0.094375	\\
150	0.095	\\
250	0.1175	\\
350	0.128438	\\
450	0.135	\\
550	0.147813	\\
650	0.160312	\\
750	0.177813	\\
850	0.196875	\\
950	0.210938	\\
1050	0.225312	\\
1150	0.234375	\\
1250	0.253125	\\
1350	0.254375	\\
1450	0.265313	\\
1550	0.266875	\\
1650	0.305	\\
1750	0.304375	\\
1850	0.335625	\\
1950	0.329375	\\
2050	0.345313	\\
2150	0.349062	\\
2250	0.374688	\\
2350	0.391875	\\
2450	0.405313	\\
2550	0.40375	\\
2650	0.424375	\\
2750	0.418125	\\
2810	0.406562	\\
};

\addplot [color=mycolor6, line width=1.5pt, mark=triangle, mark options={solid, mycolor6}]
  table[row sep=crcr]{%
0	0.104688	\\
50	0.100312	\\
150	0.1075	\\
250	0.108125	\\
350	0.106875	\\
450	0.09625	\\
550	0.10125	\\
650	0.100625	\\
750	0.10625	\\
850	0.104375	\\
950	0.0909375	\\
1050	0.109063	\\
1150	0.104375	\\
1250	0.102813	\\
1350	0.108125	\\
1450	0.11125	\\
1550	0.109687	\\
1650	0.100312	\\
1750	0.113125	\\
1850	0.107188	\\
1950	0.105	\\
2050	0.105938	\\
2150	0.094375	\\
2250	0.105625	\\
2350	0.105625	\\
2450	0.103438	\\
2550	0.1075	\\
2650	0.098125	\\
2750	0.100625	\\
2810	0.105938	\\
};

\end{axis}

\begin{axis}[%
width=7cm,
height=3cm,
at={(0in,0in)},
scale only axis,
xmin=0,
xmax=1,
ymin=0,
ymax=1,
axis line style={draw=none},
ticks=none,
axis x line*=bottom,
axis y line*=left
]

\end{axis}
\end{tikzpicture}%
        \captionsetup{justification=centering}
		\caption{$T_{\max} = 0.25.$} \label{fig:MNIST_rc_a}
	\end{subfigure}
	
	\begin{subfigure}[b]{0.5\textwidth}
		\begin{tikzpicture}
\definecolor{mycolor1}{rgb}{0.63529,0.07843,0.18431}%
\definecolor{mycolor2}{rgb}{0.00000,0.44706,0.74118}%
\definecolor{mycolor3}{rgb}{0.00000,0.49804,0.00000}%
\definecolor{mycolor4}{rgb}{0.87059,0.49020,0.00000}%
\definecolor{mycolor5}{rgb}{0.00000,0.44700,0.74100}%
\definecolor{mycolor6}{rgb}{0.74902,0.00000,0.74902}%

\begin{axis}[%
width=7cm,
height=3cm,
scale only axis,
xmin=0,
xmax=2810,
xlabel style={font=\color{white!15!black}},
ymin=0,
ymax=1,
ylabel style={font=\color{white!15!black}},
ylabel={Accuracy},
axis background/.style={fill=white},
xmajorgrids,
ymajorgrids,
legend style={legend cell align=left, align=left, draw=white!15!black, nodes={scale=0.85, transform shape}, at={(0.01,0.65)}, anchor=west, fill opacity=0.8}
]

\addplot [color=mycolor1, line width=1.5pt, mark=asterisk, mark options={solid, mycolor1}]
  table[row sep=crcr]{%
0	0.0945312	\\
50	0.321562	\\
150	0.652969	\\
250	0.749062	\\
350	0.80875	\\
450	0.828125	\\
550	0.847187	\\
650	0.857812	\\
750	0.862031	\\
850	0.88125	\\
950	0.875938	\\
1050	0.887969	\\
1150	0.888281	\\
1250	0.895	\\
1350	0.895312	\\
1450	0.889531	\\
1550	0.896094	\\
1650	0.897344	\\
1750	0.899219	\\
1850	0.903438	\\
1950	0.907813	\\
2050	0.90375	\\
2150	0.905625	\\
2250	0.906719	\\
2350	0.906719	\\
2450	0.91125	\\
2550	0.90625	\\
2650	0.911719	\\
2750	0.912188	\\
2810	0.917188	\\
};

\addplot [color=mycolor2, line width=1.5pt, mark=o, mark options={solid, mycolor2}]
  table[row sep=crcr]{%
0	0.09875	\\
50	0.100312	\\
150	0.120625	\\
250	0.145	\\
350	0.179063	\\
450	0.213125	\\
550	0.235625	\\
650	0.266875	\\
750	0.30625	\\
850	0.343125	\\
950	0.3475	\\
1050	0.374063	\\
1150	0.399375	\\
1250	0.42375	\\
1350	0.435	\\
1450	0.440937	\\
1550	0.475313	\\
1650	0.480625	\\
1750	0.51	\\
1850	0.52375	\\
1950	0.515	\\
2050	0.55	\\
2150	0.54125	\\
2250	0.550312	\\
2350	0.565937	\\
2450	0.58375	\\
2550	0.584063	\\
2650	0.609688	\\
2750	0.5875	\\
2810	0.605625	\\
};

\addplot [color=mycolor3, line width=1.5pt, mark=pentagon, mark options={solid, mycolor3}]
  table[row sep=crcr]{%
0	0.08875	\\
50	0.104375	\\
150	0.125312	\\
250	0.135625	\\
350	0.160625	\\
450	0.2075	\\
550	0.235	\\
650	0.262813	\\
750	0.292812	\\
850	0.344687	\\
950	0.368437	\\
1050	0.4025	\\
1150	0.430625	\\
1250	0.46625	\\
1350	0.482812	\\
1450	0.502812	\\
1550	0.510938	\\
1650	0.535937	\\
1750	0.547813	\\
1850	0.549063	\\
1950	0.584688	\\
2050	0.619375	\\
2150	0.599688	\\
2250	0.62125	\\
2350	0.652188	\\
2450	0.634375	\\
2550	0.655312	\\
2650	0.664687	\\
2750	0.668125	\\
2810	0.68125	\\
};

\addplot [color=mycolor4, line width=1.5pt, mark=diamond, mark options={solid, mycolor4}]
  table[row sep=crcr]{%
0	0.103125	\\
50	0.106563	\\
150	0.119688	\\
250	0.146563	\\
350	0.169375	\\
450	0.20625	\\
550	0.220625	\\
650	0.245	\\
750	0.283438	\\
850	0.307188	\\
950	0.34375	\\
1050	0.372188	\\
1150	0.4	\\
1250	0.4175	\\
1350	0.440937	\\
1450	0.48375	\\
1550	0.481563	\\
1650	0.515938	\\
1750	0.528125	\\
1850	0.550312	\\
1950	0.55625	\\
2050	0.582187	\\
2150	0.57875	\\
2250	0.59875	\\
2350	0.626563	\\
2450	0.621563	\\
2550	0.625938	\\
2650	0.622188	\\
2750	0.656563	\\
2810	0.661875	\\
};

\addplot [color=mycolor6, line width=1.5pt, mark=triangle, mark options={solid, mycolor6}]
  table[row sep=crcr]{%
0	0.099375	\\
50	0.0953125	\\
150	0.09875	\\
250	0.105	\\
350	0.106875	\\
450	0.1025	\\
550	0.113437	\\
650	0.110625	\\
750	0.110625	\\
850	0.11125	\\
950	0.108438	\\
1050	0.11375	\\
1150	0.120625	\\
1250	0.115625	\\
1350	0.115625	\\
1450	0.116562	\\
1550	0.131875	\\
1650	0.119063	\\
1750	0.1325	\\
1850	0.117813	\\
1950	0.1125	\\
2050	0.131562	\\
2150	0.125938	\\
2250	0.120938	\\
2350	0.127188	\\
2450	0.12625	\\
2550	0.116562	\\
2650	0.130312	\\
2750	0.134063	\\
2810	0.140937	\\
};

\end{axis}

\begin{axis}[%
width=7cm,
height=3cm,
at={(0in,0in)},
scale only axis,
xmin=0,
xmax=1,
ymin=0,
ymax=1,
axis line style={draw=none},
ticks=none,
axis x line*=bottom,
axis y line*=left
]

\end{axis}
\end{tikzpicture}%
        \captionsetup{justification=centering}
		\caption{$T_{\max} = 0.5.$} \label{fig:MNIST_rc_b}
	\end{subfigure}
	
	\begin{subfigure}[b]{0.5\textwidth}
		\begin{tikzpicture}
\definecolor{mycolor1}{rgb}{0.63529,0.07843,0.18431}%
\definecolor{mycolor2}{rgb}{0.00000,0.44706,0.74118}%
\definecolor{mycolor3}{rgb}{0.00000,0.49804,0.00000}%
\definecolor{mycolor4}{rgb}{0.87059,0.49020,0.00000}%
\definecolor{mycolor5}{rgb}{0.00000,0.44700,0.74100}%
\definecolor{mycolor6}{rgb}{0.74902,0.00000,0.74902}%

\begin{axis}[%
width=7cm,
height=3cm,
scale only axis,
xmin=0,
xmax=2810,
xlabel style={font=\color{white!15!black}},
ymin=0,
ymax=1,
ylabel style={font=\color{white!15!black}},
ylabel={Accuracy},
axis background/.style={fill=white},
xmajorgrids,
ymajorgrids,
]

\addplot [color=mycolor1, line width=1.5pt, mark=asterisk, mark options={solid, mycolor1}]
  table[row sep=crcr]{%
0	0.0945312	\\
50	0.321562	\\
150	0.652969	\\
250	0.749062	\\
350	0.80875	\\
450	0.828125	\\
550	0.847187	\\
650	0.857812	\\
750	0.862031	\\
850	0.88125	\\
950	0.875938	\\
1050	0.887969	\\
1150	0.888281	\\
1250	0.895	\\
1350	0.895312	\\
1450	0.889531	\\
1550	0.896094	\\
1650	0.897344	\\
1750	0.899219	\\
1850	0.903438	\\
1950	0.907813	\\
2050	0.90375	\\
2150	0.905625	\\
2250	0.906719	\\
2350	0.906719	\\
2450	0.91125	\\
2550	0.90625	\\
2650	0.911719	\\
2750	0.912188	\\
2810	0.917188	\\
};

\addplot [color=mycolor2, line width=1.5pt, mark=o, mark options={solid, mycolor2}]
  table[row sep=crcr]{%
0	0.093125	\\
50	0.109375	\\
150	0.189062	\\
250	0.2575	\\
350	0.3425	\\
450	0.397813	\\
550	0.468438	\\
650	0.510312	\\
750	0.54375	\\
850	0.564063	\\
950	0.60625	\\
1050	0.62	\\
1150	0.665625	\\
1250	0.651875	\\
1350	0.672813	\\
1450	0.705313	\\
1550	0.696875	\\
1650	0.707812	\\
1750	0.723125	\\
1850	0.7275	\\
1950	0.751563	\\
2050	0.750625	\\
2150	0.748125	\\
2250	0.756875	\\
2350	0.785625	\\
2450	0.775937	\\
2550	0.789687	\\
2650	0.79625	\\
2750	0.81125	\\
2810	0.804375	\\
};

\addplot [color=mycolor3, line width=1.5pt, mark=pentagon, mark options={solid, mycolor3}]
  table[row sep=crcr]{%
0	0.1	\\
50	0.11	\\
150	0.161875	\\
250	0.218438	\\
350	0.293125	\\
450	0.379688	\\
550	0.429375	\\
650	0.483125	\\
750	0.535937	\\
850	0.58625	\\
950	0.619687	\\
1050	0.625938	\\
1150	0.659375	\\
1250	0.695625	\\
1350	0.71875	\\
1450	0.72	\\
1550	0.74	\\
1650	0.75375	\\
1750	0.770625	\\
1850	0.771875	\\
1950	0.783125	\\
2050	0.79125	\\
2150	0.7875	\\
2250	0.795937	\\
2350	0.812187	\\
2450	0.825937	\\
2550	0.8125	\\
2650	0.82875	\\
2750	0.84125	\\
2810	0.840313	\\
};

\addplot [color=mycolor4, line width=1.5pt, mark=diamond, mark options={solid, mycolor4}]
  table[row sep=crcr]{%
0	0.096875	\\
50	0.1075	\\
150	0.158438	\\
250	0.226875	\\
350	0.3	\\
450	0.380312	\\
550	0.430625	\\
650	0.500625	\\
750	0.529687	\\
850	0.605313	\\
950	0.61375	\\
1050	0.662813	\\
1150	0.684375	\\
1250	0.696875	\\
1350	0.720625	\\
1450	0.720938	\\
1550	0.743125	\\
1650	0.756875	\\
1750	0.777813	\\
1850	0.773438	\\
1950	0.785	\\
2050	0.79375	\\
2150	0.797188	\\
2250	0.807813	\\
2350	0.822187	\\
2450	0.824375	\\
2550	0.834063	\\
2650	0.836875	\\
2750	0.833438	\\
2810	0.84	\\
};

\addplot [color=mycolor6, line width=1.5pt, mark=triangle, mark options={solid, mycolor6}]
  table[row sep=crcr]{%
0	0.110625	\\
50	0.1025	\\
150	0.102188	\\
250	0.117813	\\
350	0.105313	\\
450	0.113125	\\
550	0.114687	\\
650	0.13	\\
750	0.135	\\
850	0.143125	\\
950	0.132188	\\
1050	0.149062	\\
1150	0.150938	\\
1250	0.150312	\\
1350	0.162188	\\
1450	0.166563	\\
1550	0.174063	\\
1650	0.178125	\\
1750	0.177813	\\
1850	0.17375	\\
1950	0.19	\\
2050	0.189062	\\
2150	0.18625	\\
2250	0.191562	\\
2350	0.197813	\\
2450	0.21375	\\
2550	0.22	\\
2650	0.213438	\\
2750	0.217188	\\
2810	0.226875	\\
};

\end{axis}

\begin{axis}[%
width=7cm,
height=3cm,
at={(0in,0in)},
scale only axis,
xmin=0,
xmax=1,
ymin=0,
ymax=1,
axis line style={draw=none},
ticks=none,
axis x line*=bottom,
axis y line*=left
]

\end{axis}
\end{tikzpicture}%
        \captionsetup{justification=centering}
		\caption{$T_{\max} = 1.$} \label{fig:MNIST_rc_c}
	\end{subfigure}
	
	\begin{subfigure}[b]{0.5\textwidth}
		\begin{tikzpicture}
\definecolor{mycolor1}{rgb}{0.63529,0.07843,0.18431}%
\definecolor{mycolor2}{rgb}{0.00000,0.44706,0.74118}%
\definecolor{mycolor3}{rgb}{0.00000,0.49804,0.00000}%
\definecolor{mycolor4}{rgb}{0.87059,0.49020,0.00000}%
\definecolor{mycolor5}{rgb}{0.00000,0.44700,0.74100}%
\definecolor{mycolor6}{rgb}{0.74902,0.00000,0.74902}%

\begin{axis}[%
width=7cm,
height=3cm,
scale only axis,
xmin=0,
xmax=2810,
xlabel style={font=\color{white!15!black}},
xlabel={Mini-Batch Iteration},
ymin=0,
ymax=1,
 ylabel={Accuracy},
axis background/.style={fill=white},
xmajorgrids,
ymajorgrids,
legend style={legend cell align=left, align=left, draw=white!15!black, nodes={scale=0.85, transform shape}, at={(0.01,0.65)}, anchor=west, fill opacity=0.8}
]

\addplot [color=mycolor1, line width=1.5pt, mark=asterisk, mark options={solid, mycolor1}]
  table[row sep=crcr]{%
0	0.0945312	\\
50	0.321562	\\
150	0.652969	\\
250	0.749062	\\
350	0.80875	\\
450	0.828125	\\
550	0.847187	\\
650	0.857812	\\
750	0.862031	\\
850	0.88125	\\
950	0.875938	\\
1050	0.887969	\\
1150	0.888281	\\
1250	0.895	\\
1350	0.895312	\\
1450	0.889531	\\
1550	0.896094	\\
1650	0.897344	\\
1750	0.899219	\\
1850	0.903438	\\
1950	0.907813	\\
2050	0.90375	\\
2150	0.905625	\\
2250	0.906719	\\
2350	0.906719	\\
2450	0.91125	\\
2550	0.90625	\\
2650	0.911719	\\
2750	0.912188	\\
2810	0.917188	\\
};

\addplot [color=mycolor2, line width=1.5pt, mark=o, mark options={solid, mycolor2}]
  table[row sep=crcr]{%
0	0.095	\\
50	0.15875	\\
150	0.334688	\\
250	0.509375	\\
350	0.596562	\\
450	0.666562	\\
550	0.694063	\\
650	0.736563	\\
750	0.758437	\\
850	0.775937	\\
950	0.798438	\\
1050	0.804375	\\
1150	0.8275	\\
1250	0.819688	\\
1350	0.836875	\\
1450	0.838437	\\
1550	0.844063	\\
1650	0.858437	\\
1750	0.849063	\\
1850	0.859375	\\
1950	0.855	\\
2050	0.869687	\\
2150	0.8675	\\
2250	0.871563	\\
2350	0.874375	\\
2450	0.873125	\\
2550	0.87875	\\
2650	0.867812	\\
2750	0.875625	\\
2810	0.888437	\\
};

\addplot [color=mycolor3, line width=1.5pt, mark=pentagon, mark options={solid, mycolor3}]
  table[row sep=crcr]{%
0	0.0940625	\\
50	0.114687	\\
150	0.22375	\\
250	0.344687	\\
350	0.479063	\\
450	0.569688	\\
550	0.637813	\\
650	0.680937	\\
750	0.714375	\\
850	0.756563	\\
950	0.763437	\\
1050	0.790312	\\
1150	0.795625	\\
1250	0.809063	\\
1350	0.806875	\\
1450	0.8325	\\
1550	0.84875	\\
1650	0.847812	\\
1750	0.848437	\\
1850	0.856875	\\
1950	0.864375	\\
2050	0.862187	\\
2150	0.870313	\\
2250	0.864688	\\
2350	0.880625	\\
2450	0.880313	\\
2550	0.887188	\\
2650	0.888125	\\
2750	0.89625	\\
2810	0.895312	\\
};

\addplot [color=mycolor4, line width=1.5pt, mark=diamond, mark options={solid, mycolor4}]
  table[row sep=crcr]{%
0	0.108125	\\
50	0.125625	\\
150	0.23625	\\
250	0.369375	\\
350	0.480625	\\
450	0.56125	\\
550	0.643125	\\
650	0.675625	\\
750	0.704375	\\
850	0.737812	\\
950	0.7725	\\
1050	0.781875	\\
1150	0.799063	\\
1250	0.8125	\\
1350	0.8125	\\
1450	0.827187	\\
1550	0.840313	\\
1650	0.8475	\\
1750	0.85125	\\
1850	0.854062	\\
1950	0.866875	\\
2050	0.87375	\\
2150	0.87375	\\
2250	0.878437	\\
2350	0.878125	\\
2450	0.885938	\\
2550	0.880625	\\
2650	0.885	\\
2750	0.886563	\\
2810	0.893125	\\
};

\addplot [color=mycolor6, line width=1.5pt, mark=triangle, mark options={solid, mycolor6}]
  table[row sep=crcr]{%
0	0.093125	\\
50	0.111875	\\
150	0.12	\\
250	0.13125	\\
350	0.139375	\\
450	0.156562	\\
550	0.1675	\\
650	0.185938	\\
750	0.213438	\\
850	0.225	\\
950	0.247812	\\
1050	0.277813	\\
1150	0.29375	\\
1250	0.3075	\\
1350	0.320937	\\
1450	0.332188	\\
1550	0.366563	\\
1650	0.3625	\\
1750	0.395313	\\
1850	0.402813	\\
1950	0.400313	\\
2050	0.4025	\\
2150	0.43125	\\
2250	0.451562	\\
2350	0.440625	\\
2450	0.461562	\\
2550	0.47625	\\
2650	0.474062	\\
2750	0.47125	\\
2810	0.479687	\\
};

\end{axis}

\begin{axis}[%
width=7cm,
height=3cm,
at={(0in,0in)},
scale only axis,
xmin=0,
xmax=1,
ymin=0,
ymax=1,
axis line style={draw=none},
ticks=none,
axis x line*=bottom,
axis y line*=left
]

\end{axis}
\end{tikzpicture}%
        \captionsetup{justification=centering}
		\caption{$T_{\max} = 2.$} \label{fig:MNIST_rc_d}
	\end{subfigure}
	\caption{MNIST classification accuracy using  row-times-column matrix multiplication for the DNN proposed in Sec. \ref{subsec:DNN setting}. }\label{fig:MNIST_rc}
\end{figure}

\begin{figure}[t]
	\begin{subfigure}[b]{0.5\textwidth}
		\begin{tikzpicture}
\definecolor{mycolor1}{rgb}{0.63529,0.07843,0.18431}%
\definecolor{mycolor2}{rgb}{0.00000,0.44706,0.74118}%
\definecolor{mycolor3}{rgb}{0.00000,0.49804,0.00000}%
\definecolor{mycolor4}{rgb}{0.87059,0.49020,0.00000}%
\definecolor{mycolor5}{rgb}{0.00000,0.44700,0.74100}%
\definecolor{mycolor6}{rgb}{0.74902,0.00000,0.74902}%

\begin{axis}[%
width=7cm,
height=3cm,
scale only axis,
xmin=0,
xmax=2810,
xlabel style={font=\color{white!15!black}},
ymin=0,
ymax=1,
ylabel style={font=\color{white!15!black}},
ylabel={Accuracy},
axis background/.style={fill=white},
xmajorgrids,
ymajorgrids,
legend style={legend cell align=left, align=left, draw=white!15!black, nodes={scale=0.85, transform shape}, at={(0.01,0.65)}, anchor=west, fill opacity=0.8}
]

\addplot [color=mycolor1, line width=1.5pt, mark=asterisk, mark options={solid, mycolor1}]
  table[row sep=crcr]{%
0	0.0945312	\\
50	0.321562	\\
150	0.652969	\\
250	0.749062	\\
350	0.80875	\\
450	0.828125	\\
550	0.847187	\\
650	0.857812	\\
750	0.862031	\\
850	0.88125	\\
950	0.875938	\\
1050	0.887969	\\
1150	0.888281	\\
1250	0.895	\\
1350	0.895312	\\
1450	0.889531	\\
1550	0.896094	\\
1650	0.897344	\\
1750	0.899219	\\
1850	0.903438	\\
1950	0.907813	\\
2050	0.90375	\\
2150	0.905625	\\
2250	0.906719	\\
2350	0.906719	\\
2450	0.91125	\\
2550	0.90625	\\
2650	0.911719	\\
2750	0.912188	\\
2810	0.917188	\\
};

\addplot [color=mycolor2, line width=1.5pt, dashed, mark=o, mark options={solid, mycolor2}]
  table[row sep=crcr]{%
0	0.1025	\\
50	0.111875	\\
150	0.1125	\\
250	0.13125	\\
350	0.125312	\\
450	0.145937	\\
550	0.158125	\\
650	0.164687	\\
750	0.16875	\\
850	0.181875	\\
950	0.186875	\\
1050	0.206875	\\
1150	0.209687	\\
1250	0.2325	\\
1350	0.252812	\\
1450	0.25125	\\
1550	0.26	\\
1650	0.28625	\\
1750	0.300938	\\
1850	0.299375	\\
1950	0.3175	\\
2050	0.32	\\
2150	0.335625	\\
2250	0.348438	\\
2350	0.355312	\\
2450	0.367812	\\
2550	0.368125	\\
2650	0.381875	\\
2750	0.374063	\\
2810	0.405313	\\
};

\addplot [color=mycolor3, line width=1.5pt, dashed, mark=pentagon, mark options={solid, mycolor3}]
  table[row sep=crcr]{%
0	0.089375	\\
50	0.103438	\\
150	0.133437	\\
250	0.140937	\\
350	0.155625	\\
450	0.185312	\\
550	0.2125	\\
650	0.207813	\\
750	0.220625	\\
850	0.245938	\\
950	0.259062	\\
1050	0.289375	\\
1150	0.285	\\
1250	0.312188	\\
1350	0.334375	\\
1450	0.3525	\\
1550	0.34875	\\
1650	0.37875	\\
1750	0.405	\\
1850	0.40375	\\
1950	0.415312	\\
2050	0.46125	\\
2150	0.449688	\\
2250	0.465938	\\
2350	0.48875	\\
2450	0.498125	\\
2550	0.505313	\\
2650	0.536563	\\
2750	0.559063	\\
2810	0.552188	\\
};

\addplot [color=mycolor4, line width=1.5pt, dashed, mark=diamond, mark options={solid, mycolor4}]
  table[row sep=crcr]{%
0	0.102188	\\
50	0.10125	\\
150	0.132188	\\
250	0.129062	\\
350	0.153125	\\
450	0.178125	\\
550	0.192188	\\
650	0.211875	\\
750	0.226875	\\
850	0.23875	\\
950	0.247188	\\
1050	0.27	\\
1150	0.289687	\\
1250	0.287187	\\
1350	0.319688	\\
1450	0.346562	\\
1550	0.35625	\\
1650	0.352812	\\
1750	0.392813	\\
1850	0.4025	\\
1950	0.434688	\\
2050	0.43	\\
2150	0.45	\\
2250	0.462813	\\
2350	0.465938	\\
2450	0.500313	\\
2550	0.50375	\\
2650	0.519687	\\
2750	0.528125	\\
2810	0.534375	\\
};

\addplot [color=mycolor6, line width=1.5pt, dashed, mark=triangle, mark options={solid, mycolor6}]
  table[row sep=crcr]{%
0	0.105625	\\
50	0.106563	\\
150	0.115625	\\
250	0.107813	\\
350	0.105313	\\
450	0.1125	\\
550	0.118125	\\
650	0.105625	\\
750	0.109687	\\
850	0.11625	\\
950	0.115312	\\
1050	0.119063	\\
1150	0.122813	\\
1250	0.128125	\\
1350	0.115	\\
1450	0.12625	\\
1550	0.11125	\\
1650	0.119063	\\
1750	0.113125	\\
1850	0.12375	\\
1950	0.121875	\\
2050	0.115937	\\
2150	0.123125	\\
2250	0.115937	\\
2350	0.113437	\\
2450	0.12875	\\
2550	0.125312	\\
2650	0.116875	\\
2750	0.135625	\\
2810	0.118438	\\
};

\end{axis}

\begin{axis}[%
width=7cm,
height=3cm,
at={(0in,0in)},
scale only axis,
xmin=0,
xmax=1,
ymin=0,
ymax=1,
axis line style={draw=none},
ticks=none,
axis x line*=bottom,
axis y line*=left
]

\end{axis}
\end{tikzpicture}%
        \captionsetup{justification=centering}
		\caption{$T_{\max} = 0.25.$} \label{fig:MNIST_cr_a}
	\end{subfigure}
	
	\begin{subfigure}[b]{0.5\textwidth}
		\begin{tikzpicture}
\definecolor{mycolor1}{rgb}{0.63529,0.07843,0.18431}%
\definecolor{mycolor2}{rgb}{0.00000,0.44706,0.74118}%
\definecolor{mycolor3}{rgb}{0.00000,0.49804,0.00000}%
\definecolor{mycolor4}{rgb}{0.87059,0.49020,0.00000}%
\definecolor{mycolor5}{rgb}{0.00000,0.44700,0.74100}%
\definecolor{mycolor6}{rgb}{0.74902,0.00000,0.74902}%

\begin{axis}[%
width=7cm,
height=3cm,
scale only axis,
xmin=0,
xmax=2810,
xlabel style={font=\color{white!15!black}},
ymin=0,
ymax=1,
ylabel style={font=\color{white!15!black}},
ylabel={Accuracy},
axis background/.style={fill=white},
xmajorgrids,
ymajorgrids,
legend style={legend cell align=left, align=left, draw=white!15!black, nodes={scale=0.85, transform shape}, at={(0.01,0.65)}, anchor=west, fill opacity=0.8}
]

\addplot [color=mycolor1, line width=1.5pt, mark=asterisk, mark options={solid, mycolor1}]
  table[row sep=crcr]{%
0	0.0945312	\\
50	0.321562	\\
150	0.652969	\\
250	0.749062	\\
350	0.80875	\\
450	0.828125	\\
550	0.847187	\\
650	0.857812	\\
750	0.862031	\\
850	0.88125	\\
950	0.875938	\\
1050	0.887969	\\
1150	0.888281	\\
1250	0.895	\\
1350	0.895312	\\
1450	0.889531	\\
1550	0.896094	\\
1650	0.897344	\\
1750	0.899219	\\
1850	0.903438	\\
1950	0.907813	\\
2050	0.90375	\\
2150	0.905625	\\
2250	0.906719	\\
2350	0.906719	\\
2450	0.91125	\\
2550	0.90625	\\
2650	0.911719	\\
2750	0.912188	\\
2810	0.917188	\\
};

\addplot [color=mycolor2, line width=1.5pt, dashed, mark=o, mark options={solid, mycolor2}]
  table[row sep=crcr]{%
0	0.105625	\\
50	0.113125	\\
150	0.141563	\\
250	0.142813	\\
350	0.17625	\\
450	0.194062	\\
550	0.224375	\\
650	0.26	\\
750	0.295938	\\
850	0.325313	\\
950	0.360312	\\
1050	0.366563	\\
1150	0.385937	\\
1250	0.430625	\\
1350	0.422187	\\
1450	0.468125	\\
1550	0.472813	\\
1650	0.489375	\\
1750	0.492188	\\
1850	0.504687	\\
1950	0.504375	\\
2050	0.52875	\\
2150	0.537813	\\
2250	0.553438	\\
2350	0.560312	\\
2450	0.57625	\\
2550	0.56875	\\
2650	0.595625	\\
2750	0.595938	\\
2810	0.593125	\\
};

\addplot [color=mycolor3, line width=1.5pt, dashed, mark=pentagon, mark options={solid, mycolor3}]
  table[row sep=crcr]{%
0	0.104063	\\
50	0.119375	\\
150	0.160312	\\
250	0.194062	\\
350	0.260625	\\
450	0.285	\\
550	0.329375	\\
650	0.375625	\\
750	0.40375	\\
850	0.450625	\\
950	0.482187	\\
1050	0.522188	\\
1150	0.551875	\\
1250	0.584375	\\
1350	0.625	\\
1450	0.631875	\\
1550	0.659375	\\
1650	0.672188	\\
1750	0.706875	\\
1850	0.710313	\\
1950	0.679375	\\
2050	0.708438	\\
2150	0.7275	\\
2250	0.718437	\\
2350	0.73625	\\
2450	0.718437	\\
2550	0.716562	\\
2650	0.726562	\\
2750	0.739688	\\
2810	0.734688	\\
};

\addplot [color=mycolor4, line width=1.5pt, dashed, mark=diamond, mark options={solid, mycolor4}]
  table[row sep=crcr]{%
0	0.098125	\\
50	0.110312	\\
150	0.174375	\\
250	0.22	\\
350	0.240937	\\
450	0.288125	\\
550	0.325625	\\
650	0.372812	\\
750	0.40625	\\
850	0.46	\\
950	0.496875	\\
1050	0.532188	\\
1150	0.553438	\\
1250	0.575	\\
1350	0.59125	\\
1450	0.620313	\\
1550	0.65125	\\
1650	0.684688	\\
1750	0.683438	\\
1850	0.695625	\\
1950	0.684063	\\
2050	0.678125	\\
2150	0.68125	\\
2250	0.679375	\\
2350	0.674687	\\
2450	0.69	\\
2550	0.690937	\\
2650	0.700313	\\
2750	0.705	\\
2810	0.696562	\\
};

\addplot [color=mycolor6, line width=1.5pt, dashed, mark=triangle, mark options={solid, mycolor6}]
  table[row sep=crcr]{%
0	0.09625	\\
50	0.1	\\
150	0.0934375	\\
250	0.093125	\\
350	0.104688	\\
450	0.09625	\\
550	0.104063	\\
650	0.110625	\\
750	0.109063	\\
850	0.119688	\\
950	0.115312	\\
1050	0.112187	\\
1150	0.120938	\\
1250	0.11875	\\
1350	0.119688	\\
1450	0.123438	\\
1550	0.119375	\\
1650	0.120313	\\
1750	0.12375	\\
1850	0.129375	\\
1950	0.124688	\\
2050	0.128438	\\
2150	0.126875	\\
2250	0.125312	\\
2350	0.129375	\\
2450	0.1175	\\
2550	0.134687	\\
2650	0.1275	\\
2750	0.124375	\\
2810	0.133125	\\
};

\end{axis}

\begin{axis}[%
width=7cm,
height=3cm,
at={(0in,0in)},
scale only axis,
xmin=0,
xmax=1,
ymin=0,
ymax=1,
axis line style={draw=none},
ticks=none,
axis x line*=bottom,
axis y line*=left
]

\end{axis}
\end{tikzpicture}%
        \captionsetup{justification=centering}
		\caption{$T_{\max} = 0.5.$} \label{fig:MNIST_cr_b}
	\end{subfigure}
	
	\begin{subfigure}[b]{0.5\textwidth}
		\begin{tikzpicture}
\definecolor{mycolor1}{rgb}{0.63529,0.07843,0.18431}%
\definecolor{mycolor2}{rgb}{0.00000,0.44706,0.74118}%
\definecolor{mycolor3}{rgb}{0.00000,0.49804,0.00000}%
\definecolor{mycolor4}{rgb}{0.87059,0.49020,0.00000}%
\definecolor{mycolor5}{rgb}{0.00000,0.44700,0.74100}%
\definecolor{mycolor6}{rgb}{0.74902,0.00000,0.74902}%

\begin{axis}[%
width=7cm,
height=3cm,
scale only axis,
xmin=0,
xmax=2810,
xlabel style={font=\color{white!15!black}},
ymin=0,
ymax=1,
ylabel style={font=\color{white!15!black}},
ylabel={Accuracy},
axis background/.style={fill=white},
xmajorgrids,
ymajorgrids,
legend style={legend cell align=left, align=left, draw=white!15!black, nodes={scale=0.85, transform shape}, at={(0.01,0.65)}, anchor=west, fill opacity=0.8}
]

\addplot [color=mycolor1, line width=1.5pt, mark=asterisk, mark options={solid, mycolor1}]
  table[row sep=crcr]{%
0	0.0945312	\\
50	0.321562	\\
150	0.652969	\\
250	0.749062	\\
350	0.80875	\\
450	0.828125	\\
550	0.847187	\\
650	0.857812	\\
750	0.862031	\\
850	0.88125	\\
950	0.875938	\\
1050	0.887969	\\
1150	0.888281	\\
1250	0.895	\\
1350	0.895312	\\
1450	0.889531	\\
1550	0.896094	\\
1650	0.897344	\\
1750	0.899219	\\
1850	0.903438	\\
1950	0.907813	\\
2050	0.90375	\\
2150	0.905625	\\
2250	0.906719	\\
2350	0.906719	\\
2450	0.91125	\\
2550	0.90625	\\
2650	0.911719	\\
2750	0.912188	\\
2810	0.917188	\\
};

\addplot [color=mycolor2, line width=1.5pt, dashed, mark=o, mark options={solid, mycolor2}]
  table[row sep=crcr]{%
0	0.11	\\
50	0.12375	\\
150	0.180938	\\
250	0.252188	\\
350	0.318437	\\
450	0.396875	\\
550	0.453437	\\
650	0.495	\\
750	0.524062	\\
850	0.560937	\\
950	0.595938	\\
1050	0.610313	\\
1150	0.62625	\\
1250	0.6475	\\
1350	0.675937	\\
1450	0.69	\\
1550	0.70125	\\
1650	0.707812	\\
1750	0.731563	\\
1850	0.731875	\\
1950	0.744687	\\
2050	0.753125	\\
2150	0.773438	\\
2250	0.77375	\\
2350	0.780937	\\
2450	0.796562	\\
2550	0.781875	\\
2650	0.808125	\\
2750	0.79	\\
2810	0.802188	\\
};

\addplot [color=mycolor3, line width=1.5pt, dashed, mark=pentagon, mark options={solid, mycolor3}]
  table[row sep=crcr]{%
0	0.0953125	\\
50	0.145937	\\
150	0.253125	\\
250	0.356875	\\
350	0.425938	\\
450	0.50375	\\
550	0.568125	\\
650	0.63125	\\
750	0.687813	\\
850	0.721875	\\
950	0.735313	\\
1050	0.742188	\\
1150	0.778125	\\
1250	0.772813	\\
1350	0.771563	\\
1450	0.77625	\\
1550	0.79125	\\
1650	0.775625	\\
1750	0.777813	\\
1850	0.802188	\\
1950	0.82375	\\
2050	0.816875	\\
2150	0.82875	\\
2250	0.84375	\\
2350	0.832187	\\
2450	0.839063	\\
2550	0.843437	\\
2650	0.83875	\\
2750	0.84	\\
2810	0.844688	\\
};

\addplot [color=mycolor4, line width=1.5pt, dashed, mark=diamond, mark options={solid, mycolor4}]
  table[row sep=crcr]{%
0	0.1075	\\
50	0.158125	\\
150	0.261562	\\
250	0.35375	\\
350	0.422187	\\
450	0.510938	\\
550	0.58625	\\
650	0.654687	\\
750	0.721562	\\
850	0.726562	\\
950	0.737187	\\
1050	0.76875	\\
1150	0.749062	\\
1250	0.758125	\\
1350	0.760938	\\
1450	0.7675	\\
1550	0.778125	\\
1650	0.784687	\\
1750	0.800625	\\
1850	0.807187	\\
1950	0.8175	\\
2050	0.813125	\\
2150	0.84	\\
2250	0.819063	\\
2350	0.838125	\\
2450	0.847187	\\
2550	0.825937	\\
2650	0.849688	\\
2750	0.843437	\\
2810	0.86	\\
};

\addplot [color=mycolor6, line width=1.5pt, dashed, mark=triangle, mark options={solid, mycolor6}]
  table[row sep=crcr]{%
0	0.1	\\
50	0.1075	\\
150	0.103438	\\
250	0.117813	\\
350	0.105625	\\
450	0.113125	\\
550	0.120938	\\
650	0.128438	\\
750	0.122813	\\
850	0.132812	\\
950	0.140937	\\
1050	0.148438	\\
1150	0.154688	\\
1250	0.1525	\\
1350	0.155938	\\
1450	0.155625	\\
1550	0.169687	\\
1650	0.163438	\\
1750	0.171563	\\
1850	0.17875	\\
1950	0.164687	\\
2050	0.187812	\\
2150	0.1925	\\
2250	0.195625	\\
2350	0.210938	\\
2450	0.204063	\\
2550	0.207813	\\
2650	0.218438	\\
2750	0.21375	\\
2810	0.234063	\\
};

\end{axis}

\begin{axis}[%
width=7cm,
height=3cm,
at={(0in,0in)},
scale only axis,
xmin=0,
xmax=1,
ymin=0,
ymax=1,
axis line style={draw=none},
ticks=none,
axis x line*=bottom,
axis y line*=left
]

\end{axis}
\end{tikzpicture}%
        \captionsetup{justification=centering}
		\caption{$T_{\max} = 1.$}\label{fig:MNIST_cr_c}
	\end{subfigure}
	
	\begin{subfigure}[b]{0.5\textwidth}
		\begin{tikzpicture}
\definecolor{mycolor1}{rgb}{0.63529,0.07843,0.18431}%
\definecolor{mycolor2}{rgb}{0.00000,0.44706,0.74118}%
\definecolor{mycolor3}{rgb}{0.00000,0.49804,0.00000}%
\definecolor{mycolor4}{rgb}{0.87059,0.49020,0.00000}%
\definecolor{mycolor5}{rgb}{0.00000,0.44700,0.74100}%
\definecolor{mycolor6}{rgb}{0.74902,0.00000,0.74902}%

\begin{axis}[%
width=7cm,
height=3cm,
scale only axis,
xmin=0,
xmax=2810,
xlabel style={font=\color{white!15!black}},
xlabel={Mini-Batch Iteration},
ymin=0,
ymax=1,
ylabel style={font=\color{white!15!black}},
ylabel={Accuracy},
axis background/.style={fill=white},
xmajorgrids,
ymajorgrids,
legend style={legend cell align=left, align=left, draw=white!15!black, nodes={scale=0.85, transform shape}, at={(0.01,0.65)}, anchor=west, fill opacity=0.8}
]

\addplot [color=mycolor1, line width=1.5pt, mark=asterisk, mark options={solid, mycolor1}]
  table[row sep=crcr]{%
0	0.0945312	\\
50	0.321562	\\
150	0.652969	\\
250	0.749062	\\
350	0.80875	\\
450	0.828125	\\
550	0.847187	\\
650	0.857812	\\
750	0.862031	\\
850	0.88125	\\
950	0.875938	\\
1050	0.887969	\\
1150	0.888281	\\
1250	0.895	\\
1350	0.895312	\\
1450	0.889531	\\
1550	0.896094	\\
1650	0.897344	\\
1750	0.899219	\\
1850	0.903438	\\
1950	0.907813	\\
2050	0.90375	\\
2150	0.905625	\\
2250	0.906719	\\
2350	0.906719	\\
2450	0.91125	\\
2550	0.90625	\\
2650	0.911719	\\
2750	0.912188	\\
2810	0.917188	\\
};

\addplot [color=mycolor2, line width=1.5pt, dashed, mark=o, mark options={solid, mycolor2}]
  table[row sep=crcr]{%
0	0.110312	\\
50	0.1725	\\
150	0.359063	\\
250	0.51625	\\
350	0.606563	\\
450	0.650312	\\
550	0.69375	\\
650	0.7275	\\
750	0.755938	\\
850	0.78125	\\
950	0.799375	\\
1050	0.815625	\\
1150	0.818438	\\
1250	0.83	\\
1350	0.83375	\\
1450	0.839688	\\
1550	0.860938	\\
1650	0.844375	\\
1750	0.8525	\\
1850	0.85875	\\
1950	0.868437	\\
2050	0.868437	\\
2150	0.8725	\\
2250	0.870938	\\
2350	0.8675	\\
2450	0.869687	\\
2550	0.877188	\\
2650	0.88125	\\
2750	0.875	\\
2810	0.891563	\\
};

\addplot [color=mycolor3, line width=1.5pt, dashed, mark=pentagon, mark options={solid, mycolor3}]
  table[row sep=crcr]{%
0	0.0928125	\\
50	0.197813	\\
150	0.344375	\\
250	0.483125	\\
350	0.600938	\\
450	0.69	\\
550	0.757812	\\
650	0.8	\\
750	0.794375	\\
850	0.800312	\\
950	0.811875	\\
1050	0.822812	\\
1150	0.82625	\\
1250	0.823438	\\
1350	0.831562	\\
1450	0.8525	\\
1550	0.860625	\\
1650	0.847812	\\
1750	0.860625	\\
1850	0.853437	\\
1950	0.85875	\\
2050	0.8575	\\
2150	0.858437	\\
2250	0.880313	\\
2350	0.872188	\\
2450	0.871563	\\
2550	0.881875	\\
2650	0.867812	\\
2750	0.884687	\\
2810	0.877188	\\
};

\addplot [color=mycolor4, line width=1.5pt, dashed, mark=diamond, mark options={solid, mycolor4}]
  table[row sep=crcr]{%
0	0.106563	\\
50	0.214375	\\
150	0.360625	\\
250	0.491875	\\
350	0.61125	\\
450	0.69625	\\
550	0.767188	\\
650	0.775937	\\
750	0.804063	\\
850	0.812813	\\
950	0.822812	\\
1050	0.829375	\\
1150	0.836875	\\
1250	0.839688	\\
1350	0.849063	\\
1450	0.851562	\\
1550	0.8625	\\
1650	0.86	\\
1750	0.859062	\\
1850	0.86125	\\
1950	0.8725	\\
2050	0.865625	\\
2150	0.871875	\\
2250	0.88	\\
2350	0.869062	\\
2450	0.873125	\\
2550	0.8875	\\
2650	0.8775	\\
2750	0.875938	\\
2810	0.880625	\\
};

\addplot [color=mycolor6, line width=1.5pt, dashed, mark=triangle, mark options={solid, mycolor6}]
  table[row sep=crcr]{%
0	0.100937	\\
50	0.0928125	\\
150	0.108438	\\
250	0.12125	\\
350	0.147187	\\
450	0.160625	\\
550	0.162188	\\
650	0.1875	\\
750	0.204375	\\
850	0.222188	\\
950	0.22875	\\
1050	0.270937	\\
1150	0.276875	\\
1250	0.300312	\\
1350	0.316875	\\
1450	0.331562	\\
1550	0.35	\\
1650	0.35625	\\
1750	0.359375	\\
1850	0.37875	\\
1950	0.40375	\\
2050	0.400313	\\
2150	0.423438	\\
2250	0.423125	\\
2350	0.434375	\\
2450	0.44	\\
2550	0.439063	\\
2650	0.474062	\\
2750	0.454688	\\
2810	0.473438	\\
};

\end{axis}

\begin{axis}[%
width=7cm,
height=3cm,
at={(0in,0in)},
scale only axis,
xmin=0,
xmax=1,
ymin=0,
ymax=1,
axis line style={draw=none},
ticks=none,
axis x line*=bottom,
axis y line*=left
]

\end{axis}
\end{tikzpicture}%
        \captionsetup{justification=centering}
		\caption{$T_{\max} = 2.$}\label{fig:MNIST_cr_d}
	\end{subfigure}
	\caption{MNIST classification accuracy using  column-times-row matrix multiplication for the DNN proposed in Sec. \ref{subsec:DNN setting}.} \label{fig:MNIST_cr}

\end{figure}


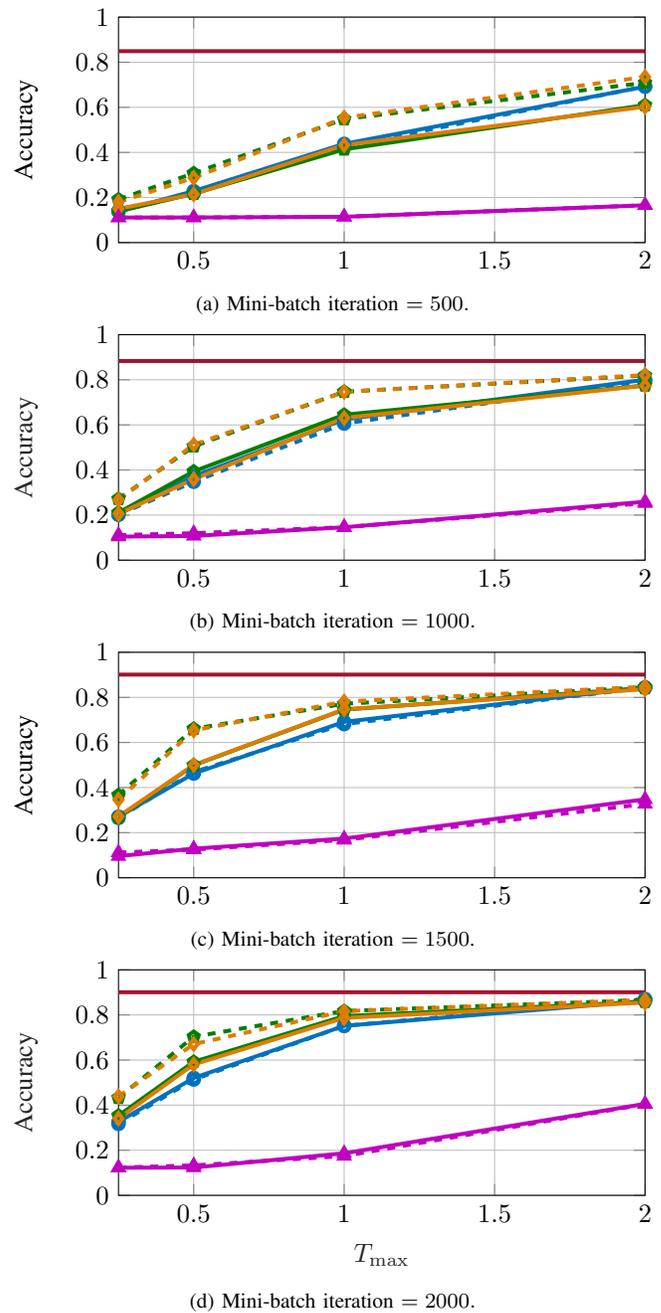
\begin{figure}
	\begin{subfigure}[b]{0.5\textwidth}
	    \centering
        \begin{tikzpicture}
\definecolor{mycolor1}{rgb}{0.63529,0.07843,0.18431}%
\definecolor{mycolor2}{rgb}{0.00000,0.44706,0.74118}%
\definecolor{mycolor3}{rgb}{0.00000,0.49804,0.00000}%
\definecolor{mycolor4}{rgb}{0.87059,0.49020,0.00000}%
\definecolor{mycolor5}{rgb}{0.00000,0.44700,0.74100}%
\definecolor{mycolor6}{rgb}{0.74902,0.00000,0.74902}%

\begin{axis}[%
width=7cm,
height=3cm,
scale only axis,
xmin=0.25,
xmax=2,
xlabel style={font=\color{white!15!black}},
ymin=0,
ymax=1,
ylabel={Accuracy},
axis background/.style={fill=white},
xmajorgrids,
ymajorgrids,
legend style={legend cell align=left, align=left, draw=white!15!black, nodes={scale=0.85, transform shape}, at={(0.01,0.65)}, anchor=west, fill opacity=0.8}
]

\addplot [color=mycolor1, line width=1.5pt, mark options={solid, mycolor1}]
  table[row sep=crcr]{%
0.25	0.848906	\\
0.5	0.848906	\\
1	0.848906	\\
2	0.848906	\\
};

\addplot [color=mycolor2, line width=1.5pt, mark=o, mark options={solid, mycolor2}]
  table[row sep=crcr]{%
0.25	0.137813	\\
0.5	0.228437	\\
1	0.437812	\\
2	0.691562	\\
};

\addplot [color=mycolor2, line width=1.5pt, dashed, mark=o, mark options={solid, mycolor2}]
  table[row sep=crcr]{%
0.25	0.14	\\
0.5	0.219375	\\
1	0.425	\\
2	0.6925	\\
};

\addplot [color=mycolor3, line width=1.5pt, mark=pentagon, mark options={solid, mycolor3}]
  table[row sep=crcr]{%
0.25	0.138437	\\
0.5	0.216875	\\
1	0.413438	\\
2	0.609688	\\
};

\addplot [color=mycolor3, line width=1.5pt, dashed, mark=pentagon, mark options={solid, mycolor3}]
  table[row sep=crcr]{%
0.25	0.191562	\\
0.5	0.3075	\\
1	0.547813	\\
2	0.70875	\\
};

\addplot [color=mycolor4, line width=1.5pt, mark=diamond, mark options={solid, mycolor4}]
  table[row sep=crcr]{%
0.25	0.15	\\
0.5	0.214688	\\
1	0.432188	\\
2	0.6025	\\
};

\addplot [color=mycolor4, line width=1.5pt, dashed, mark=diamond, mark options={solid, mycolor4}]
  table[row sep=crcr]{%
0.25	0.18	\\
0.5	0.286875	\\
1	0.554063	\\
2	0.734688	\\
};

\addplot [color=mycolor6, line width=1.5pt, mark=triangle, mark options={solid, mycolor6}]
  table[row sep=crcr]{%
0.25	0.11125	\\
0.5	0.11125	\\
1	0.114375	\\
2	0.165625	\\
};

\addplot [color=mycolor6, line width=1.5pt, dashed, mark=triangle, mark options={solid, mycolor6}]
  table[row sep=crcr]{%
0.25	0.11125	\\
0.5	0.11125	\\
1	0.114375	\\
2	0.165625	\\
};

\end{axis}

\begin{axis}[%
width=7cm,
height=3cm,
at={(0in,0in)},
scale only axis,
xmin=0,
xmax=1,
ymin=0,
ymax=1,
axis line style={draw=none},
ticks=none,
axis x line*=bottom,
axis y line*=left
]

\end{axis}
\end{tikzpicture}%
        \captionsetup{justification=centering}
		\caption{ Mini-batch iteration $ = 500.$} 	\label{fig:rxc_vs_cxr_a}
	\end{subfigure}
	
	\begin{subfigure}[b]{0.5\textwidth}
	    \centering
        \begin{tikzpicture}
\definecolor{mycolor1}{rgb}{0.63529,0.07843,0.18431}%
\definecolor{mycolor2}{rgb}{0.00000,0.44706,0.74118}%
\definecolor{mycolor3}{rgb}{0.00000,0.49804,0.00000}%
\definecolor{mycolor4}{rgb}{0.87059,0.49020,0.00000}%
\definecolor{mycolor5}{rgb}{0.00000,0.44700,0.74100}%
\definecolor{mycolor6}{rgb}{0.74902,0.00000,0.74902}%

\begin{axis}[%
width=7cm,
height=3cm,
scale only axis,
xmin=0.25,
xmax=2,
xlabel style={font=\color{white!15!black}},
ymin=0,
ymax=1,
ylabel style={font=\color{white!15!black}},
ylabel={Accuracy},
axis background/.style={fill=white},
xmajorgrids,
ymajorgrids,
legend style={legend cell align=left, align=left, draw=white!15!black, nodes={scale=0.85, transform shape}, at={(0.01,0.65)}, anchor=west, fill opacity=0.8}
]

\addplot [color=mycolor1, line width=1.5pt, mark options={solid, mycolor1}]
  table[row sep=crcr]{%
0.25	0.883281	\\
0.5	0.883281	\\
1	0.883281	\\
2	0.883281	\\
};

\addplot [color=mycolor2, line width=1.5pt, mark=o, mark options={solid, mycolor2}]
  table[row sep=crcr]{%
0.25	0.203437	\\
0.5	0.3725	\\
1	0.624062	\\
2	0.800312	\\
};

\addplot [color=mycolor2, line width=1.5pt, dashed, mark=o, mark options={solid, mycolor2}]
  table[row sep=crcr]{%
0.25	0.20125	\\
0.5	0.347813	\\
1	0.605938	\\
2	0.795937	\\
};

\addplot [color=mycolor3, line width=1.5pt, mark=pentagon, mark options={solid, mycolor3}]
  table[row sep=crcr]{%
0.25	0.210938	\\
0.5	0.39375	\\
1	0.645	\\
2	0.775	\\
};

\addplot [color=mycolor3, line width=1.5pt, dashed, mark=pentagon, mark options={solid, mycolor3}]
  table[row sep=crcr]{%
0.25	0.271875	\\
0.5	0.505	\\
1	0.747188	\\
2	0.81875	\\
};

\addplot [color=mycolor4, line width=1.5pt, mark=diamond, mark options={solid, mycolor4}]
  table[row sep=crcr]{%
0.25	0.205625	\\
0.5	0.359063	\\
1	0.630625	\\
2	0.774375	\\
};

\addplot [color=mycolor4, line width=1.5pt, dashed, mark=diamond, mark options={solid, mycolor4}]
  table[row sep=crcr]{%
0.25	0.2675	\\
0.5	0.512188	\\
1	0.746563	\\
2	0.820937	\\
};

\addplot [color=mycolor6, line width=1.5pt, mark=triangle, mark options={solid, mycolor6}]
  table[row sep=crcr]{%
0.25	0.104375	\\
0.5	0.1075	\\
1	0.145937	\\
2	0.259062	\\
};

\addplot [color=mycolor6, line width=1.5pt, dashed, mark=triangle, mark options={solid, mycolor6}]
  table[row sep=crcr]{%
0.25	0.11125	\\
0.5	0.119688	\\
1	0.14625	\\
2	0.2525	\\
};

\end{axis}

\begin{axis}[%
width=7cm,
height=3cm,
at={(0in,0in)},
scale only axis,
xmin=0,
xmax=1,
ymin=0,
ymax=1,
axis line style={draw=none},
ticks=none,
axis x line*=bottom,
axis y line*=left
]

\end{axis}
\end{tikzpicture}%
        \captionsetup{justification=centering}
		\caption{ Mini-batch iteration $ = 1000.$} 	\label{fig:rxc_vs_cxr_b}
	\end{subfigure}

	\begin{subfigure}[b]{0.5\textwidth}
	    \centering
		\begin{tikzpicture}
\definecolor{mycolor1}{rgb}{0.63529,0.07843,0.18431}%
\definecolor{mycolor2}{rgb}{0.00000,0.44706,0.74118}%
\definecolor{mycolor3}{rgb}{0.00000,0.49804,0.00000}%
\definecolor{mycolor4}{rgb}{0.87059,0.49020,0.00000}%
\definecolor{mycolor5}{rgb}{0.00000,0.44700,0.74100}%
\definecolor{mycolor6}{rgb}{0.74902,0.00000,0.74902}%

\begin{axis}[%
width=7cm,
height=3cm,
scale only axis,
xmin=0.25,
xmax=2,
xlabel style={font=\color{white!15!black}},
ymin=0,
ymax=1,
ylabel={Accuracy},
axis background/.style={fill=white},
xmajorgrids,
ymajorgrids,
legend style={legend cell align=left, align=left, draw=white!15!black, nodes={scale=0.85, transform shape}, at={(0.01,0.65)}, anchor=west, fill opacity=0.8}
]

\addplot [color=mycolor1, line width=1.5pt, mark options={solid, mycolor1}]
  table[row sep=crcr]{%
0.25	0.900937	\\
0.5	0.900937	\\
1	0.900937	\\
2	0.900937	\\
};

\addplot [color=mycolor2, line width=1.5pt, mark=o, mark options={solid, mycolor2}]
  table[row sep=crcr]{%
0.25	0.270937	\\
0.5	0.462187	\\
1	0.691562	\\
2	0.844375	\\
};

\addplot [color=mycolor2, line width=1.5pt, dashed, mark=o, mark options={solid, mycolor2}]
  table[row sep=crcr]{%
0.25	0.265625	\\
0.5	0.46875	\\
1	0.681562	\\
2	0.843437	\\
};

\addplot [color=mycolor3, line width=1.5pt, mark=pentagon, mark options={solid, mycolor3}]
  table[row sep=crcr]{%
0.25	0.2675	\\
0.5	0.499063	\\
1	0.745625	\\
2	0.84125	\\
};

\addplot [color=mycolor3, line width=1.5pt, dashed, mark=pentagon, mark options={solid, mycolor3}]
  table[row sep=crcr]{%
0.25	0.366563	\\
0.5	0.660937	\\
1	0.771563	\\
2	0.845	\\
};

\addplot [color=mycolor4, line width=1.5pt, mark=diamond, mark options={solid, mycolor4}]
  table[row sep=crcr]{%
0.25	0.271562	\\
0.5	0.5	\\
1	0.747188	\\
2	0.83625	\\
};

\addplot [color=mycolor4, line width=1.5pt, dashed, mark=diamond, mark options={solid, mycolor4}]
  table[row sep=crcr]{%
0.25	0.34625	\\
0.5	0.65375	\\
1	0.78125	\\
2	0.847187	\\
};

\addplot [color=mycolor6, line width=1.5pt, mark=triangle, mark options={solid, mycolor6}]
  table[row sep=crcr]{%
0.25	0.095625	\\
0.5	0.129062	\\
1	0.17375	\\
2	0.347813	\\
};

\addplot [color=mycolor6, line width=1.5pt, dashed, mark=triangle, mark options={solid, mycolor6}]
  table[row sep=crcr]{%
0.25	0.111875	\\
0.5	0.125	\\
1	0.167813	\\
2	0.3275	\\
};

\end{axis}

\begin{axis}[%
width=7cm,
height=3cm,
at={(0in,0in)},
scale only axis,
xmin=0,
xmax=1,
ymin=0,
ymax=1,
axis line style={draw=none},
ticks=none,
axis x line*=bottom,
axis y line*=left
]

\end{axis}
\end{tikzpicture}%
        \captionsetup{justification=centering}
		\caption{ Mini-batch iteration $ =  1500$.} 	\label{fig:rxc_vs_cxr_c} 
	\end{subfigure}

	\begin{subfigure}[b]{0.5\textwidth}
	    \centering
		\begin{tikzpicture}
\definecolor{mycolor1}{rgb}{0.63529,0.07843,0.18431}%
\definecolor{mycolor2}{rgb}{0.00000,0.44706,0.74118}%
\definecolor{mycolor3}{rgb}{0.00000,0.49804,0.00000}%
\definecolor{mycolor4}{rgb}{0.87059,0.49020,0.00000}%
\definecolor{mycolor5}{rgb}{0.00000,0.44700,0.74100}%
\definecolor{mycolor6}{rgb}{0.74902,0.00000,0.74902}%

\begin{axis}[%
width=7cm,
height=3cm,
scale only axis,
xmin=0.25,
xmax=2,
xlabel style={font=\color{white!15!black}},
xlabel={$T_{\max}$},
ymin=0,
ymax=1,
ylabel style={font=\color{white!15!black}},
ylabel={Accuracy},
axis background/.style={fill=white},
xmajorgrids,
ymajorgrids,
legend style={legend cell align=left, align=left, draw=white!15!black, nodes={scale=0.85, transform shape}, at={(0.65,0.35)}, anchor=west, fill opacity=0.8}
]

\addplot [color=mycolor1, line width=1.5pt, mark options={solid, mycolor1}]
  table[row sep=crcr]{%
0.25	0.900781	\\
0.5	0.900781	\\
1	0.900781	\\
2	0.900781	\\
};

\addplot [color=mycolor2, line width=1.5pt, mark=o, mark options={solid, mycolor2}]
  table[row sep=crcr]{%
0.25	0.325937	\\
0.5	0.520625	\\
1	0.751563	\\
2	0.862187	\\
};

\addplot [color=mycolor2, line width=1.5pt, dashed, mark=o, mark options={solid, mycolor2}]
  table[row sep=crcr]{%
0.25	0.316875	\\
0.5	0.51375	\\
1	0.751563	\\
2	0.870938	\\
};

\addplot [color=mycolor3, line width=1.5pt, mark=pentagon, mark options={solid, mycolor3}]
  table[row sep=crcr]{%
0.25	0.35375	\\
0.5	0.592187	\\
1	0.794687	\\
2	0.858437	\\
};

\addplot [color=mycolor3, line width=1.5pt, dashed, mark=pentagon, mark options={solid, mycolor3}]
  table[row sep=crcr]{%
0.25	0.432812	\\
0.5	0.703438	\\
1	0.8175	\\
2	0.86625	\\
};

\addplot [color=mycolor4, line width=1.5pt, mark=diamond, mark options={solid, mycolor4}]
  table[row sep=crcr]{%
0.25	0.340938	\\
0.5	0.580625	\\
1	0.786875	\\
2	0.855	\\
};

\addplot [color=mycolor4, line width=1.5pt, dashed, mark=diamond, mark options={solid, mycolor4}]
  table[row sep=crcr]{%
0.25	0.43875	\\
0.5	0.670937	\\
1	0.816562	\\
2	0.86625	\\
};

\addplot [color=mycolor6, line width=1.5pt, mark=triangle, mark options={solid, mycolor6}]
  table[row sep=crcr]{%
0.25	0.1225	\\
0.5	0.123438	\\
1	0.186562	\\
2	0.405625	\\
};

\addplot [color=mycolor6, line width=1.5pt, dashed, mark=triangle, mark options={solid, mycolor6}]
  table[row sep=crcr]{%
0.25	0.123125	\\
0.5	0.132812	\\
1	0.174375	\\
2	0.404375	\\
};

\end{axis}

\begin{axis}[%
width=7cm,
height=3cm,
at={(0in,0in)},
scale only axis,
xmin=0,
xmax=1,
ymin=0,
ymax=1,
axis line style={draw=none},
ticks=none,
axis x line*=bottom,
axis y line*=left
]

\end{axis}
\end{tikzpicture}%
        \captionsetup{justification=centering}
		\caption{ Mini-batch iteration $ = 2000.$} 	\label{fig:rxc_vs_cxr_d}
		\vspace{0.4cm}
	\end{subfigure}
	\caption{MNIST classification accuracy at different $T_{\max}$ for the DNN proposed in Sec. \ref{subsec:DNN setting}.}	\label{fig:rxc_vs_cxr}
\end{figure}

\begin{figure}
\hspace{-2.2cm}
	    \centering
	    \begin{tikzpicture}
\definecolor{mycolor1}{rgb}{0.63529,0.07843,0.18431}%
\definecolor{mycolor2}{rgb}{0.00000,0.44706,0.74118}%
\definecolor{mycolor3}{rgb}{0.00000,0.49804,0.00000}%
\definecolor{mycolor4}{rgb}{0.87059,0.49020,0.00000}%
\definecolor{mycolor5}{rgb}{0.00000,0.44700,0.74100}%
\definecolor{mycolor6}{rgb}{0.74902,0.00000,0.74902}%

\begin{axis}[%
hide axis,
xmin=0,
xmax=0.4,
ymin=0,
ymax=0.4,
legend style={legend cell align=left, align=left, draw=white!15!black}
]

\addlegendimage{color=mycolor1, line width=1.5pt, mark=asterisk, mark options={solid, mycolor1}}
\addlegendentry{No stragglers}

\addlegendimage{color=mycolor2, line width=1.5pt, mark=o, mark options={solid, mycolor2}}
\addlegendentry{Uncoded ($\rtc$)}

\addlegendimage{color=mycolor2, line width=1.5pt, dashed, mark=o, mark options={solid, mycolor2}}
\addlegendentry{Uncoded ($\ctr$)}

\addlegendimage{color=mycolor3, line width=1.5pt, mark=pentagon, mark options={solid, mycolor3}}
\addlegendentry{NOW ($\rtc$)}

\addlegendimage{color=mycolor3, line width=1.5pt, dashed, mark=pentagon, mark options={solid, mycolor3}}
\addlegendentry{NOW ($\ctr$)}

\addlegendimage{color=mycolor4, line width=1.5pt, mark=diamond, mark options={solid, mycolor4}}
\addlegendentry{EW ($\rtc$)}

\addlegendimage{color=mycolor4, line width=1.5pt, dashed, mark=diamond, mark options={solid, mycolor4}}
\addlegendentry{EW ($\ctr$)}

\addlegendimage{color=mycolor6, line width=1.5pt, mark=triangle, mark options={solid, mycolor6}}
\addlegendentry{2-Block repetition ($\rtc$)}

\addlegendimage{color=mycolor6, line width=1.5pt, dashed, mark=triangle, mark options={solid, mycolor6}}
\addlegendentry{2-Block repetition ($\ctr$)}

\end{axis}

\end{tikzpicture}%
	\vspace{-1.5cm}
	\caption{Legend for Fig. \ref{fig:MNIST_rc} to Fig. \ref{fig:rxc_vs_cxr}.} \label{fig:legend}
\end{figure}
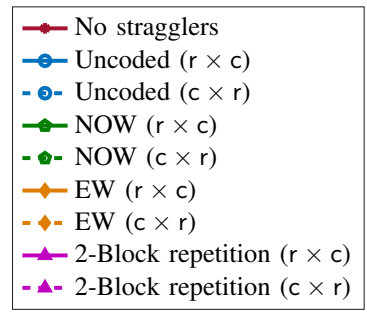


\subsection{DNN Performance with UEP Coded Matrix Multiplication}
\label{subsec: DNN performance}

%
%
\begin{table}[b]
	\footnotesize
	\centering
	\caption{A summary of the encoding parameters in Sec. \ref{subsec: DNN performance}.}
	\label{tab:model_parameters}
	\begin{tabular}{|c|c|c|}
		\hline
		Encoding Type  & $W$ & $\Omega$ \\ \hline
		Uncoded   & 9 & 9/9                  \\ \hline
		NOW/EW - UEP &  15 & 9/15                 \\ \hline
		2-Block Rep &  18 & 9/18                 \\ \hline  
	\end{tabular}
\end{table}

\begin{table}[b]
	\footnotesize
	\centering
	\caption{A summary of the number of sub-blocks belonging to each importance level for UEP codes used in Sec. \ref{subsec: DNN performance}.}
	\label{tab:model_uep_parameters}
	\begin{tabular}{|c|c|}
		\hline
		Importance Level  & $n_{C}(s)$ \\ \hline
		High   & 1 \\ \hline
		Medium & 2 \\ \hline
		Low &  6    \\ \hline  
	\end{tabular}
\end{table}

%
%
In this section, we finally investigate the accuracy of the proposed solution for the two DNN models described in Sec. \ref{subsec:DNN setting}. 
In our comparisons, we show the performance of the DNN training with
\begin{itemize}
    \item {\bf red line:}  centralized computation with no stragglers,
    \item {\bf green line:} distributed  computation with  NOW-UEP codes,
    \item {\bf yellow line:} distributed  computation with  EW-UEP codes,
    \item {\bf purple line:} distributed  computation with 2-block repetition.
\end{itemize}
In all cases above, 
\begin{itemize}
    \item {\bf continuous line:} is for the $\rtc$ paradigm, while 
    \item {\bf dotted line:} is for the $\ctr$ paradigm.
\end{itemize}
%
%
		

For a fair comparison among these scenarios, we scale the time required to complete a task as $F(\Omega t)$, where $\Omega$ is the number of matrix sub-products divided by the number of workers (\textit{see Remark \ref{rem:Comparison across models}}).
%
%
For the simulations, we have that the total number of matrix sub-products is 9, where $N = P = 3$, thus $N P = 9$ for $\rtc$ and $M = 9$ for $\ctr$ case.
Additionally,  we consider an exponential latency model $\lambda=0.5$.  We take $T_{\max} \in \{ 0.25, 0.5, 1, 2\}$. 
%
%
Other simulation parameters are given in Tables \ref{tab:model_parameters} and \ref{tab:model_uep_parameters}. The dimension of the encoding matrices $\Av$/$\Bv$ for each dense layer and gradient are shown in Table \ref{tab:backprop_dimensions}. 
The settings above are for both the $\rtc$ and the $\ctr$ cases.


Next, let us describe how the importance levels are obtained in our coding scheme. 
Similar to  \cite{yuster2005fast}, which proposes a fast matrix multiplication algorithm, the column/row indexes are permuted  so as to obtain a descending magnitude of the column/row weights. 
Note that ordering has average complexity $\Ocal(n \log n)$ in the number of columns/rows, so that the computational burden at the PS is minimal as compared to the matrix multiplication complexity. 
%
%
Once ordered in decreasing magnitudes, column/row sub-blocks are formed by dividing them into three groups of (roughly) equal size.
The sub-blocks are then encoded using the NOW/EW-UEP code as specified in Table \ref{tab:code param}.

\noindent
{\bf MNIST:}
Figs. \ref{fig:MNIST_rc} and \ref{fig:MNIST_cr} depict the results of using different multiplication strategies at different $T_{\max}$, $\rtc$ for the former and $\ctr$ for the latter with the MNIST dataset, and Fig. \ref{fig:rxc_vs_cxr} serves as a merger of these two by fixing the mini-batch iteration to compare the accuracy trade-off when using different deadlines. Ideally, we would want to increase the accuracy while decreasing the deadline, so the model can achieve convergence in the least time possible. Note that since the no straggler receives all sub-blocks back by any deadline, it is a constant and our benchmark.

We observe from Fig. \ref{fig:rxc_vs_cxr}  that for $T_{\max} < 2$, the UEP coding strategy shows an advantage over the others, with a more significant accuracy gap with $\ctr$ at $T_{\max}=0.5$ and $1$. By selecting two deadlines that are multiples of each other by a factor of $x$ in Figs. \ref{fig:rxc_vs_cxr_a} and \ref{fig:rxc_vs_cxr_b}, such as $1$ and $2$ by a factor of 2, we can observe a clear trade-off in the accuracy: if it takes $n$ iterations for the longer deadline to achieve a $y$ accuracy, it will take $x$ times $n$ iterations for the shorter deadline to achieve an accuracy close to $y$. However, further study is necessary to determine whether this holds true with other datasets and DNN architectures.

Although the uncoded scheme provides no protection against stragglers, missing a few of blocks does not cause much damage to the learning, showing once again the inherent fault-tolerance of DNNs. 
Perhaps not too surprisingly, employing block repetition coding increases the number of workers required but does not result in a better overall performance as compared to the uncoded scheme.
This result follows from the  choice of  the waiting time  distribution and it provides  an understanding on the computational complexity scaling in our simulations, as discussed in Remark \ref{rem:Comparison across models}.
%
Consider the following two scenarios: (i) one  machine completes a job, versus (ii) the same job is given to two machines and the job is completed whenever one of the two machines finishes. To compare these two scenarios in a fair manner, the expected value of the waiting time of the second scenario must be double of the first one.
For our choice of waiting time distribution, scenario (i) performs better than scenario (ii) above, so that there is no intrinsic advantage in distributing one job to multiple workers. 
%
%
%
%
%
%
%
%
From Fig. \ref{fig:rxc_vs_cxr}, we observe that $\rtc$ and $\ctr$ partitioning have roughly the same performance as uncoded and 2-block repetition.
However, in the case of UEP codes, $\ctr$ shows an advantage over $\rtc$, i.e., mini-batch 1000 and 1500 for $T_{\max} = 0.25$ to $1$.
%
Further exploration of this phenomenon is need to be understood whether this is true for this specific DNN or it is more general. 
%
%

\noindent
{\bf CIFAR-10:}
For the simulations with the CIFAR-10 dataset in Fig. \ref{fig:cifar10_big}, %
we use $\lambda=0.5$ with an exponential latency model and $T_{\max} = 1$. 
Since the model evolves slowly, the first few epochs result in gradient computations that are rather uniform in row/column norms. 
For this reason, we let the model train for the first $30$ epochs without stragglers.
After these first set of epochs, there is enough sparsification in the gradients
to justify UEP coding: accordingly Fig. \ref{fig:cifar10_big} presents simulation results between $30$ and $120$ epochs. 
%
The calculations for the convolutional layer are performed without stragglers through central computations. 
For the dense layers, we use the coding strategy summarized in Table \ref{tab:model_parameters}  and the corresponding scaling of the waiting time distribution. 
The only exception is the set of matrices in  \eqref{eq:backprop_eq2} for the last layer, for which we use the uncoded scheme since they were not sufficiently sparse.

As we can observe from Fig. \ref{fig:cifar10_big}, after around $60$ epochs, the UEP codes widen the accuracy margin and learn faster than the other two encoding strategies. This is in part due to the gradient evolution where the weights of the sub-blocks show a higher variance than they did between epochs $30$ to $60$ due to sparsification and convergence. This effect continues to amplify as the epoch progresses, promoting the need of unequal error protection.
In particular, it appears that both classes of the UEP codes employed yield an accuracy which tends to those attainable without stragglers, i.e., $1$, while the uncoded transmission performance saturates below $0.9$.
It is interesting to note that this situation is substantially different from the scenario in Fig. \ref{fig:MNIST_rc} in which the MNIST dataset DNN is trained with the same value of $T_{\max}$.
In the latter case, in fact, no substantial improvement is provided by the UEP codes over the uncoded transmission. 

We close this section by noting that all the simulation results can be reproduced using Matlab and Python codes available at: 
\url{https://github.com/HernandezEduin/UEP-Straggler-Mitigation}.

\subsection{Future Work}

In Sec. \ref{sec:Back-propagation Matrices} we apply the distributed approximate matrix computation scheme in Sec. \ref{sec:Approximate Matrix Multiplication with UEP Codes} to the evaluation of gradient through back-propagation for training a DNN. 
The performance of the proposed approach is evaluated only through numerical simulations, and a more detailed analysis of the relationship between distributed computation and distributed learning is left for future research.
Here we would like to point out some aspects of this problem which we believe deserve further investigation. 

\noindent
{\bf Error feedback:} in the current implementation for DNN training, the back-propagation algorithm does not account for the specific nature of the gradient noise arising from the application of scheme in Sec. \ref{sec:Approximate Matrix Multiplication with UEP Codes}.
A more efficient scheme can be implemented by accounting for the accumulation of the approximation error at each gradient evaluation, similarly to the approach in \cite{stich2018sparsified}.

\noindent
{\bf Computation accuracy/learning accuracy:} DNN training requires the successive evaluation of matrix products. Currently, it is not clear how the error in approximate computation propagates through these successive evaluations.
Additionally, it is not well-understood how the error in the gradient approximation affects the learning accuracy at different iterations.
Once these aspects have been better understood, the scheme in Sec. \ref{sec:Back-propagation Matrices} can be optimally designed at each DNN layer and at each iteration to yield the best expected accuracy by a chosen deadline. 

\noindent
{\bf Computation time:} The current modeling of the computation delay in \eqref{eq:waiting} is an accurate model for queuing delays and might not properly account for the computation time requirement more specific to DNN training. By more appropriately choosing the response time in \eqref{eq:waiting}, one could obtain a more practical design for the scheme in Sec. \ref{sec:Approximate Matrix Multiplication with UEP Codes}.

\noindent
{\bf Computation load:} Inherent in the design of the scheme in Sec. \ref{sec:Approximate Matrix Multiplication with UEP Codes} is the assumption that the computational cost of matrix summation and scaling is negligible when compared to that of matrix multiplication. 
Accordingly, the PS can perform any type of coding which only employs linear combination of matrices. 
In actuality, the computational load might be better expressed in terms of memory allocation or overall number of flops. In this scenario, the design of the scheme in Sec.  \ref{sec:Approximate Matrix Multiplication with UEP Codes} should be modified to account for such limitations. 

\noindent
%
{\bf Extension to Federated Learning (FL): }
The current coding scheme is implemented in a non-standard distributed learning fashion, where computations are distributed from the PS to the workers on a per layer basis of the DNN. One can imagine the extension of the scheme to the FL setting in which the computational nodes (workers) have access to the training dataset through the cloud or have them locally available. In such a scenario, the PS would have its complexity reduced even further with the use of UEP codes. 

\noindent
{\bf Optimization of UEP codes: }
In our work, we did not perform any optimization of UEP codes, instead we chose the window selection distributions for NOW-UEP and EW-UEP schemes arbitrarily. 
An interesting direction for future research is the UEP code optimization for matrix product approximation. To do this, the window selection probabilities can be optimized to minimize the loss in the matrix approximation for the given setup.
For instance, in the distributed DNN model, one should optimize these probabilities to minimize the error according to the current sub-weights. 





\section{Conclusions} \label{sec: conclusion}
In this paper we have studied distributed approximate matrix multiplication using UEP codes with the objective of mitigating the straggler phenomenon.
The proposed approach has a wide range of applications as it allows one to  speed up  large-scale operations which are common in machine learning and data mining algorithms.
%
%
%
We use UEP codes to provide better protection for the sub-operations which have higher effects on the resulting matrix product by better protecting the sub-operations with larger norms.
We validate the effectiveness of the proposed approach through analytical assessments based on simplified models for sparse matrices, and compare our results with those obtained with MDS codes via simulations. 
Furthermore, the proposed strategy is applied to the back-propagation steps of a DNN for the MNIST digit classification task and CIFAR-10 image classification task. 
Our results clearly show that, in the presence of stragglers, we can have a performance close to the centralized training earlier by striking a balance between the precision of the updates and the response time of the edge devices.    %

\bibliographystyle{IEEEtran}
\bibliography{bibs_matrix_mult}
\end{document}